\keywords{Service Contracts,
Contract Automata,
Controller Synthesis,
Orchestration,
Choreography}
\newcommand{\cmark}{\text{\ding{51}}}
\newcommand{\xmark}{\text{\ding{55}}}
\newcommand{\prset}{\texttt{Pr}}
\newcommand{\Alice}{\textit{Alice}}
\newcommand{\Bob}{\textit{Bob}}
\newcommand{\Carol}{\textit{Carol}}
\newcommand{\inv}[1][]{%
\ifthenelse{\equal{#1}{}}{I}{I(#1)}
}
\newcommand{\con}[1][]{%
\ifthenelse{\equal{#1}{}}{\chi}{\chi(#1)}%
}
\newcommand{\acronym}{MSCA}
\newcommand{\permittedset}{A^{\Diamond}}
\newcommand{\necessaryset}{A^{\Box}}
\newcommand{\offerset}{A^o}
\newcommand{\Boxurgent}{\Box_{\textit{unc}}}
\newcommand{\Permitted}{\Diamond}
\newcommand{\Necessary}{\Box}
\newcommand{\alert}[1]{ \color{red} #1 \color{black} }
\newcommand{\ithel}[2]{{#1}_{({#2})}}
\newcommand{\ithelprime}[2]{{#1}\,'_{\!({#2})}}
\newcommand{\blk}{\bullet}
\newcommand{\Oset}{\mathsf{O}}
\newcommand{\Rset}{\mathsf{R}}
\newcommand{\Lset}{\mathsf{L}}
\newcommand{\TRANS}[1]{\xrightarrow{#1}}
\newcommand{\TRANSS}[1]{{\xrightarrow{\raisebox{-.3ex}[0pt][0pt]{\scriptsize $#1$}}}}
\renewcommand{\epsilon}{\varepsilon}
\newcommand{\hbra}{
\hbox to 1 \textwidth{\vrule width0.3mm height 1.8mm depth-0.3mm 
                    \leaders\hrule height1.8mm depth-1.5mm\hfill
                    \vrule width0.3mm height 1.8mm depth-0.3mm}}
\newcommand{\hket}{
\hbox to 1 \textwidth{\vrule width0.3mm height1.5mm 
                    \leaders\hrule height0.3mm\hfill
                    \vrule width0.3mm height1.5mm}}
\newcommand{\ignore}[1]{}
\newcommand{\modiff}[1]{\color{black}#1\color{black}}
\apptocmd{\sloppy}{\hbadness 10000\relax}{}{} 
\begin{document}

\title[Synthesis of Orchestrations and Choreographies]{Synthesis of Orchestrations and Choreographies: Bridging the Gap between Supervisory Control and Coordination of Services}

\author[D.~Basile]{Davide Basile\rsuper{a}} 
\author[M.H.~ter~Beek]{Maurice H. ter Beek\rsuper{a}} 
\address{\lsuper{a}ISTI--CNR, Pisa, Italy} 
\email{\{davide.basile,maurice.terbeek\}@isti.cnr.it}

\author[R.~Pugliese]{Rosario Pugliese\rsuper{b}} 
\address{\lsuper{b}University of Florence, Italy}
\email{rosario.pugliese@unifi.it}


\begin{abstract}\noindent
We present a number of contributions to bridging the gap between supervisory control theory and coordination of services in order to explore the frontiers between coordination and control systems.
Firstly, we modify the classical synthesis algorithm from supervisory control theory for obtaining the so-called most permissive controller in order to synthesise orchestrations and choreographies of service contracts formalised as contract automata.
The key ingredient to make this possible is a novel notion of controllability.
Then, we present an abstract parametric synthesis algorithm and show that it generalises the classical synthesis as well as the orchestration and choreography syntheses.
Finally, through the novel abstract synthesis, we show that the concrete syntheses are in a refinement order.
\modiff{A running example from the service domain illustrates our contributions.}
\end{abstract}

\maketitle


\section{Introduction}\label{sect:introduction}

\modiff{Services are ubiquitous in today's society. Examples include finances, healthcare, and tourism (e.g.\ booking services). Service-oriented computing (SOC) is \lq\lq the discipline that seeks to develop computational abstractions, architectures, techniques, and tools to support services broadly\rq\rq~\cite{manifesto}. According to this paradigm, services are well-defined, self-contained, and stand-alone software modules that provide some standard business functionality. As such, services can serve as building blocks for the rapid, low-cost development of distributed applications in heterogeneous environments. Services used in composite applications are not limited to new service implementations, but may also include adapted and wrapped fragments of existing applications. The strength of SOC is composing multiple, distributed services into more powerful applications. This reuse through composition provides businesses a means to reduce the cost and risks of developing new applications.}

\modiff{Service composition is thus a key challenge for the full realisation of the SOC paradigm. As such, it can benefit from and contribute to emerging research directions inspired by cloud computing, IoT, social computing,
and mobile computing, to name but a few.
For instance, the composition of cloud services requires the coordination of hardware and software resources across various layers.
The IoT concept of smart cities concerns the large-scale composition of diverse and heterogeneous digital devices and services to provide multiple real-time, end user customised functionalities.
Service composition based on relations in today's large social networks is challenging due to size and complexity of the resultant big data.
In mobile environments, service composition is required to consider the intrinsic dynamicity and its effect on QoS aspects concerning security and reliability.}

\modiff{Two approaches are widely adopted for coordinating services by means of service composition}: \emph{orchestration} and \emph{choreography}. Intuitively, an orchestration yields the description of a distributed workflow from \lq\lq one party's perspective\rq\rq~\cite{Pel03}, whereas a choreography describes the behaviour of the involved parties from a \lq\lq global viewpoint\rq\rq~\cite{WS-CDL}. In an orchestrated model, the service components are coordinated by a special component, the \emph{orchestrator}, which, by interacting with them, dictates the workflow at runtime. In a choreographed model, instead, the service components autonomously execute and interact with each other on the basis of a local control flow expected to comply with their role as specified by the global viewpoint. Ideally, a choreographed model is thought to be more efficient due to the absence of the overhead of communications with the orchestrator. Any choreography can trivially be transformed into an orchestration of services, by adding an idle orchestrator. Similarly, by explicitly adding an orchestrator and its interactions with the service components, and hence the relative overhead, an orchestration of services can be transformed into a choreography.

\modiff{Despite the key impact that SOC can have on other contemporary computing paradigms, as already mentioned before, the recent Service Computing Manifesto~\cite{manifesto} points out that \lq\lq Service systems have so far been built without an adequate rigorous foundation that would enable reasoning about them\rq\rq\ and that \lq\lq The design of service systems should build upon a formal model of services\rq\rq. Therefore, the principled design of service-based applications and systems is identified as a primary research challenge for the coming years.}

To tackle this challenge, in~\cite{BDFT15}, two orchestrated and choreographed automata-based models of services, called \emph{contract automata}\footnote{Not to be confused with the accidentally homonymous contract automata of~\cite{APSS16}, which were introduced to formalise legal contracts among two parties expressed in natural language.} and \emph{communicating (finite-state) machines}, respectively, are studied and related.
The goal of both formalisms is to compose the automata such that each service is capable of reaching an accepting (final) state by synchronous/asynchronous one-to-one interactions with the other services in the composition.
The main difference relies on the fact that contract automata are oblivious of their partners and an orchestration is synthesised to drive their interactions, whereas communicating machines name the recipient service of each interaction upfront and use FIFO buffers to interact with each other. The model of contract automata was further developed in~\cite{BDF16}.

The orchestration synthesis was borrowed from the synthesis of the most permissive controller (mpc) from Supervisory Control Theory (SCT)~\cite{RW87,CL06}, whose aim is to coordinate an ensemble of (local) components into a (global) system that functions correctly.
In the context of contract automata, this amounts to refining the composition of service contracts into its largest sub-portion whose behaviour is non-blocking and safe (a notion of service compliance).
The adaptation of the mpc synthesis for synthesising an orchestration of services required the introduction of a novel notion of \emph{semi-controllability}.
Basically, the assumption of the presence of an unpredictable environment was dropped in favour of a milder notion of  predictable necessary service requests to be fulfilled.

In this paper, \modiff{we contribute to the research efforts on rigorously modelling service orchestration and choreography.
More specifically}, building on~\cite{BBP19}, we report on the efforts to relate the mpc synthesis and the orchestration synthesis of contract automata through a polished, homogeneous formalisation.
The need for semi-controllability is showcased with intuitive examples and its expressiveness is evaluated with respect to standard SCT notions of controllable and uncontrollable actions.
Moreover, we introduce a novel choreography synthesis algorithm and a novel abstract synthesis algorithm. We then show that each of the three synthesis algorithms can be obtained through a different instantiation of this abstract synthesis algorithm.
This paper extends~\cite{BBP19} in several ways. We include all proofs and as an additional contribution we demonstrate that the different instantiations of the abstract synthesis algorithm are related through a notion of refinement, which allows us to formally prove that the orchestration synthesis is an abstraction of the mpc synthesis.
\modiff{Furthermore, we illustrate each of the synthesis algorithms through a running example from the service domain.
Finally, we have also extended the prototypical tool FMCAT\footnote{\modiff{FMCAT is available at \url{https://github.com/davidebasile/FMCAT}. A video-tutorial showcasing the specification, composition, and syntheses of the contract automata from our running example is available at \url{https://github.com/davidebasile/FMCAT/tree/master/demoLMCS2020}.}} with the implementation of the novel choreography synthesis algorithm and then used it to compute all the automata compositions and syntheses shown in our running example.}



The paper is organised as follows.
Section~\ref{sect:background} contains background notions and results concerning contract automata and SCT,
\modiff{and introduces our running example}.
Section~\ref{sect:orchestrationsynthesis} and Section~\ref{sect:choreographysynthesis} introduce the synthesis of orchestrations and the novel synthesis of choreographies in the setting of (modal service) contract automata.
Section~\ref{sect:discussion} demonstrates that each of the introduced synthesis algorithms is an instantiation of a more abstract, parametric synthesis algorithm, and Section~\ref{sect:po}\ shows that these different instantiations are related.
Section~\ref{sect:related} discusses related work, while Section~\ref{sect:conclusion} concludes the paper and provides some hints for future work.
\iftoggle{APPENDIX}
{\modiff{Appendix~\ref{app:proofs} contains the full proofs of two results only sketched in Section~\ref{sect:discussion}.}
}
{}

\section{Background}\label{sect:background}

In this section, we provide some background useful to better appreciate our contributions on the crossroads of supervisory control theory and coordination of services formalised as modal service contract automata. \modiff{We also introduce a running example from the service domain that will be used throughout the paper to illustrate our contributions.}

\subsection{Contract Automata}

A Contract Automaton (CA) represents either a single service (in which case it is called a \emph{principal}) or a multi-party composition of services performing actions. The number of principals of a CA is called its \emph{rank}.
The states of a CA are vectors of states of principals. In the following, $\vec v$ denotes a vector and $\ithel {\vec v} i$ denotes its $i$th element.

The transitions of CA are labelled with actions, which are vectors of elements in the finite set of \emph{basic actions} $\Lset = \Rset \cup \Oset \cup \{\blk\}$, with $\Rset \cap \Oset = \emptyset$ and $\blk \not\in \Rset \cup \Oset$.
Intuitively, $\Rset$ is the set of \emph{requests} (depicted as non-overlined labels on arcs, e.g.\ $a$), $\Oset$ is the set of \emph{offers} (depicted as overlined labels on arcs, e.g.\ $\overline{a}$) with \ $\Oset = \{\,\overline a\mid a \in \Rset\,\}$, and $\blk$ is a distinguished symbol representing the \emph{idle} action.
To establish if a pair of a request and an offer are \emph{complementary}, we use the \emph{involution} function $co: \Lset \rightarrow \Lset$ defined as follows:
$\forall a \in \Rset: co(a) = \overline a$, $\forall \overline a \in \Oset: co(\overline a) = a$,
and $co(\blk)=\blk$.
By abusing notation, we let $co(\Rset)=\Oset$ and $co(\Oset)=\Rset$.

An \emph{action} is a vector $\vec a$ of basic actions with either a single offer, or a single request, or a single pair of request-offer that match, i.e.\ there exist $i$ and $j$ such that $\ithel {\vec a} i$ is an offer and $\ithel {\vec a} j$ is the complementary request (formally $co(\ithel {\vec a} i)=\ithel {\vec a} j$); all other elements of the vector are~$\blk$, meaning that the corresponding principals remain idle.
Such action is called \emph{request}, \emph{offer}, or \emph{match}, respectively.
A transition is said to be a request, offer, or match according to its labelling action.

The goal of each principal is to reach an accepting (\emph{final}) state such that all its requests and offers are matched.

In~\cite{BDGDF17}, CA are equipped with \emph{modalities}, i.e.\ \emph{permitted} ($\Permitted$) and \emph{necessary} ($\Box$), that are associated to requests.
\modiff{Offers remain without modalities, i.e.\ they are interpreted as always permitted, like in the original CA formalism. Matches, on the other hand, inherit the modality of the involved request}.
The resulting formalism, called Modal Service Contract Automata (MSCA), is formally defined next. Differently from standard SCT, all transitions of MSCA are \emph{observable}, since MSCA model the execution of services in terms of their requests and offers.

\begin{defi}[MSCA~\cite{BDGDF17}]\label{def:contract}
Given a finite set of states $\mathcal{Q} = \{q_1,q_2, \ldots \}$, a \emph{Modal Service Contract Automata (MSCA)} $\mathcal{A}$ of rank $n$ is a septuple $\langle Q, \vec{q_0}, A^{\Permitted}, A^{\Box}, A^{o}, T, F \rangle$,
with set of states $Q  \subseteq \mathcal{Q}^n$,
initial state $\vec{q_0} \in Q$,
%
set of permitted requests $\permittedset$ and of necessary request $\necessaryset$ partitioning the set of requests $A^{r} \subseteq \Rset$,
%
%
set of offers $A^{o} \subseteq \Oset$, set of final states $F \subseteq Q$,
set of transitions $T \subseteq Q \times A  \times Q$, where
$A\subseteq{(A^r \cup \offerset \cup \{\blk\})}^n$,
partitioned into \emph{permitted} transitions $T^\Diamond$ and \emph{necessary} transitions $T^\Box$,
such that:
(i)~given $t= (\vec{q},\vec{a},\vec{q}\,') \in T$, $\vec{a}$ is either a request, or an offer, or a match;
(ii)~$\forall i \in 1\ldots n,\ \ithel{\vec{a}} i=\blk$ implies $\ithel{\vec{q}} i=\ithelprime{\vec{q}} i$;
(iii)~$t \in T^\Diamond$ if and only if $\vec a$ is either a request, or a match on $a \in A^\Diamond$, or an offer on $\overline a \in \offerset$; otherwise $t \in T^\Box$.
\end{defi}

Remarkably, it follows that the set of transitions of an MSCA is finite.

A \emph{principal} is an MSCA of rank 1 such that $A^r \cap co(A^o) = \emptyset$.
Unless stated differently, we assume that it is given an MSCA
$\mathcal A = \langle Q_\mathcal{A}, \vec{q_0}_\mathcal{A}, A^{\Permitted}_{\mathcal{A}}, A^{\Box}_{\mathcal{A}}, A^{o}_{\mathcal{A}}, T_{\mathcal{A}}, F_{\mathcal{A}} \rangle$ of rank $n$. Subscript $\mathcal A$ may be omitted if no confusion may arise.

A \emph{step} $(w,\vec{q}\,) \TRANSS{\vec a} (w',\vec{q}\,')$ occurs in $\mathcal A$ if and only if $w=\vec{a} w'$, $w'\in A^*$, and $(\vec{q},\vec{a},\vec{q}\,')\in T$.
Let $\rightarrow^*$ be the reflexive and transitive closure of $\rightarrow$.
The \emph{language} of $\mathcal{A}$ is $\mathcal{L}(\mathcal{A})= \{\, w  \mid  (w,\vec{q_0}) \TRANSS{w}^* (\epsilon, \vec{q}\,),\ \vec{q} \in F \,\}$.
A step may be denoted as $\vec{q}\, \TRANSS{\vec a}$ if $w$, $w'$, and $\vec{q}\,'$ are irrelevant, and
as $\vec{q}\,\rightarrow \vec{q}\,'$ if $w$, $w'$, and $\vec a$ are irrelevant.

Composition of services is rendered through the composition of their MSCA models.
This amounts to interleaving or matching the transitions of the component MSCA, forcing the match whenever two components are ready on their respective complementary request/offer actions.
In the resulting MSCA, states and actions are vectors of states and actions of the component MSCA, respectively.
The composition is non-associative, i.e.\ pre-existing matches are not rearranged if a new MSCA joins the composition afterwards.

In a composition of MSCA, typically various properties are analysed. We are especially interested in \emph{agreement} and \emph{strong agreement} (which in the literature is also known as progress of interactions, deadlock freedom, compliance or conformance of contracts).
In an MSCA in strong agreement, all requests and offers must be matched.
Instead, the property of agreement only requires matching all requests.
An MSCA admits (strong) agreement if it has a trace satisfying the corresponding property, and it is \emph{safe} if all its traces are such.

\modiff{The MSCA formalism has its origins in~\cite{BDF16}, where CA were first introduced, but in this paper we build on the version with modalities from~\cite{BDGDF17} to cater for controllable and uncontrollable (and thus semi-controllable) actions.
The branching condition for CA from~\cite{BDFT15} will be recalled in Section~\ref{sect:choreographysynthesis} as a condition for obtaining a choreography from an orchestration, and it is satisfied by construction by the output MSCA of the synthesis of the choreography.}


\begin{exa}
\modiff{We introduce a running example that will be used throughout the paper to showcase the synthesis of orchestration and choreography.
We anticipate, as discussed in detail in Section~\ref{sect:choreographysynthesis}, that a modified version of \acronym\ is used for the synthesis of a choreography, in which offers can be necessary whilst requests are only permitted.}

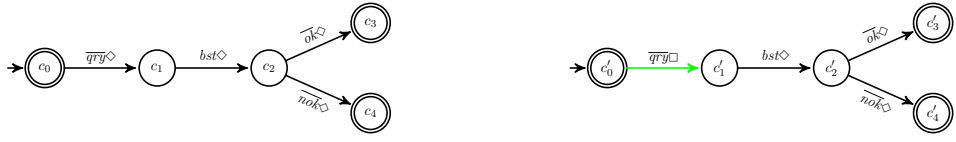
\begin{figure}
\centering
\begin{tikzpicture}[->,scale=.5,>=stealth',shorten >=1pt,auto,node distance=1cm,
        semithick, every node/.style={scale=0.5},initial text={},inner sep=0pt, minimum size=0pt]
\tikzstyle{every state}=[fill=white,draw=black,text=black]

\node[initial,state,accepting] (0) {$c_{0}$};
\node[state] (1) [right = of 0] {$c_{1}$};
\node[state] (2) [right = of 1] {$c_{2}$};
\node[state,accepting] (3) [above right = .25cm and 1cm of 2] {$c_{3}$};
\node[state,accepting] (4) [below right = .25cm and 1cm of 2] {$c_{4}$};

\node[initial,state,accepting] (00) [right = 4cm of 2] {$c'_{0}$};
\node[state] (11) [right = of 00] {$c'_{1}$};
\node[state] (22) [right = of 11] {$c'_{2}$};
\node[state,accepting] (33) [above right = .25cm and 1cm of 22] {$c'_{3}$};
\node[state,accepting] (44) [below right = .25cm and 1cm of 22] {$c'_{4}$};

\path
(0) edge[] node[above,yshift=3pt]{$\overline{\textit{qry}}\Permitted$} (1)
(1) edge[] node[above,yshift=5pt]{$\textit{bst}\Permitted$} (2)
(2) edge[] node[sloped,anchor=center,yshift=10pt]{$\overline{\textit{ok}}\Permitted$} (3)
(2) edge[] node[sloped,anchor=center,yshift=-10pt]{$\overline{\textit{nok}}\Permitted$} (4)

(00) edge[draw=green] node[above,yshift=3pt]{$\overline{\textit{qry}}\Necessary$} (11)
(11) edge[] node[above,yshift=5pt]{$\textit{bst}\Permitted$} (22)
(22) edge[] node[sloped,anchor=center,yshift=10pt]{$\overline{\textit{ok}}\Permitted$} (33)
(22) edge[] node[sloped,anchor=center,yshift=-10pt]{$\overline{\textit{nok}}\Permitted$} (44);
\end{tikzpicture}
\caption{\label{fig:client}\color{black}{MSCA $\textsf{Client}$ (left) and $\textsf{PrivilegedClient}$ (right)}}
\end{figure}

\modiff{Figures~\ref{fig:client},~\ref{fig:broker}, and~\ref{fig:hotel} show five \acronym\ of rank 1.
These automata model an example of a hotel booking service, where clients and hotels interact by means of a broker for booking hotel rooms.
There are two types of clients, \texttt{Client} and \texttt{PrivilegedClient}.
Both clients can either terminate without interactions (final states are drawn as double circles), or they can engage in interactions with the broker to possibly book a room.
The first interaction is to ask for a room, by means of the offer $\overline{\textit{qry}}$ (query).
After this action, the clients receive the best room option from the broker, by means of the request $bst$ (best).
Then, each client can either decide to accept (offer $\overline{\textit{ok}}$) or refuse (offer $\overline{nok}$) the option offered by the broker.
\texttt{PrivilegedClient} will be used to showcase a choreography in Section~\ref{sect:choreographysynthesis}.
Accordingly, \texttt{PrivilegedClient} only differs from \texttt{Client} with respect to the first offer $\overline{\textit{qry}}$, which is declared \emph{necessary}.
Basically, \texttt{PrivilegedClient} reaches an agreement only if there exists a trace in which its offer is necessarily matched.
All other actions are permitted.}

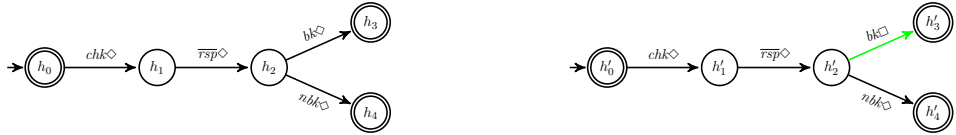
\begin{figure}
\centering
\begin{tikzpicture}[->,scale=.5,>=stealth',shorten >=1pt,auto,node distance=1cm,
        semithick, every node/.style={scale=0.5},initial text={},inner sep=0pt, minimum size=0pt]
\tikzstyle{every state}=[fill=white,draw=black,text=black]

\node[initial,state,accepting] (0) {$h_{0}$};
\node[state] (1) [right = of 0] {$h_{1}$};
\node[state] (2) [right = of 1] {$h_{2}$};
\node[state,accepting] (3) [above right = .25cm and 1cm of 2] {$h_{3}$};
\node[state,accepting] (4) [below right = .25cm and 1cm of 2] {$h_{4}$};

\node[initial,state,accepting] (00) [right = 4cm of 2] {$h'_{0}$};
\node[state] (11) [right = of 00] {$h'_{1}$};
\node[state] (22) [right = of 11] {$h'_{2}$};
\node[state,accepting] (33) [above right = .25cm and 1cm of 22] {$h'_{3}$};
\node[state,accepting] (44) [below right = .25cm and 1cm of 22] {$h'_{4}$};

\path
(0) edge[] node[above,yshift=5pt]{$\textit{chk}\Permitted$} (1)
(1) edge[] node[above,yshift=5pt]{$\overline{\textit{rsp}}\Permitted$} (2)
(2) edge[] node[sloped,anchor=center,yshift=10pt]{$\textit{bk}\Permitted$} (3)
(2) edge[] node[sloped,anchor=center,yshift=-10pt]{$\textit{nbk}\Permitted$} (4)

(00) edge[] node[above,yshift=5pt]{$\textit{chk}\Permitted$} (11)
(11) edge[] node[above,yshift=5pt]{$\overline{\textit{rsp}}\Permitted$} (22)
(22) edge[draw=green] node[sloped,anchor=center,yshift=10pt]{$\textit{bk}\Necessary$} (33)
(22) edge[] node[sloped,anchor=center,yshift=-10pt]{$\textit{nbk}\Permitted$} (44);
\end{tikzpicture}
\caption{\label{fig:hotel}\color{black}{MSCA $\textsf{Hotel}$ (left) and $\textsf{PrivilegedHotel}$ (right)}}
\end{figure}

\modiff{Similarly, there are two types of hotels, \texttt{Hotel} and \texttt{PrivilegedHotel}.
Also both hotels can either terminate without interactions, or they can engage in interactions with the broker to possibly have their rooms booked.
The first interaction is to receive a request for a room, by means of the request $\textit{chk}$ (check).
After this check, a response is sent to the broker through the offer $\overline{\textit{rsp}}$ (response).
Then, each hotel can either receive a booking or a no booking reply by means of requests $\textit{bk}$ (book) or $\textit{nbk}$ (no book), respectively.
\texttt{PrivilegedHotel} will be used to showcase an orchestration in Section~\ref{sect:orchestrationsynthesis}.
Accordingly, \texttt{PrivilegedHotel} only differs from \texttt{Hotel} with respect to the request $\textit{bk}$, declared \emph{necessary}.
Basically, \texttt{PrivilegedHotel} admits non-empty orchestrations only if there \emph{exists} a trace in which one of its rooms is booked (i.e.\ the necessary request is matched).
All other actions are permitted.}

\begin{figure}
\centering
\begin{tikzpicture}[->,scale=.5,>=stealth',shorten >=1pt,auto,node distance=1cm,
        semithick, every node/.style={scale=0.5},initial text={},inner sep=0pt, minimum size=0pt]
\tikzstyle{every state}=[fill=white,draw=black,text=black]

\node[initial,state,accepting] (0) {$b_{0}$};
\node[state] (1) [right = of 0] {$b_{1}$};
\node[state] (2) [right = of 1] {$b_{2}$};
\node[state] (3) [right = of 2] {$b_{3}$};
\node[state] (4) [right = of 3] {$b_{4}$};
\node[state] (5) [right = of 4] {$b_{5}$};
\node[state] (6) [right = of 5] {$b_{6}$};
\node[state] (7) [above right = .25cm and 1cm of 6] {$b_{7}$};
\node[state] (8) [right = of 7] {$b_{8}$};
\node[state,accepting] (9) [right = of 8] {$b_{9}$};
\node[state] (10) [below right = .25cm and 1cm of 6] {$b_{10}$};
\node[state] (11) [right = of 10] {$b_{11}$};
\node[state,accepting] (12) [right = of 11] {$b_{12}$};

\path
(0) edge[] node[above,yshift=5pt]{$\textit{qry}\Permitted$} (1)
(1) edge[] node[above,yshift=5pt]{$\overline{\textit{chk}}\Permitted$} (2)
(2) edge[] node[above,yshift=5pt]{$\textit{rsp}\Permitted$} (3)
(3) edge[] node[above,yshift=5pt]{$\overline{\textit{chk}}\Permitted$} (4)
(4.north east) edge[] node[above,yshift=5pt]{$\textit{rsp}\Permitted$} (5.north west)
(5.south west) edge[] node[below,yshift=-5pt]{$\overline{\textit{chk}}\Permitted$} (4.south east)
(5) edge[] node[above,yshift=5pt]{$\overline{\textit{bst}}\Permitted$} (6)
(6) edge[] node[sloped,anchor=center,yshift=10pt]{$\textit{ok}\Permitted$} (7)
(6) edge[] node[sloped,anchor=center,yshift=-10pt]{$\textit{nok}\Permitted$} (10)
(7) edge[] node[above,yshift=5pt]{$\overline{\textit{bk}}\Permitted$} (8)
(8) edge[] node[above,yshift=5pt]{$\overline{\textit{nbk}}\Permitted$} (9)
(9) edge[loop,looseness=7.5,in=315,out=45] node[right,xshift=3pt]{$\overline{\textit{nbk}}\Permitted$} (9)
(10) edge[] node[above,yshift=5pt]{$\overline{\textit{nbk}}\Permitted$} (11)
(11) edge[] node[above,yshift=5pt]{$\overline{\textit{nbk}}\Permitted$} (12)
(12) edge[loop,looseness=7.5,in=315,out=45] node[right,xshift=3pt]{$\overline{\textit{nbk}}\Permitted$} (12);
\end{tikzpicture}
\caption{\label{fig:broker}\modiff{MSCA $\textsf{Broker}$}}
\end{figure}

\modiff{Finally, the \texttt{Broker} acts as an intermediary between a client and at least two hotels. The broker starts by receiving a request for a room by a client through the request $\textit{qry}$. At this point, it starts to interact with the hotels to search for a possible option to propose to the client. This is done by (twice) repeating the offer $\overline{\textit{chk}}$ (sending a room enquiry) followed by the request $\textit{rsp}$ (receiving the room response by one hotel).
Indeed, at least two hotels must be enquired to speak of a best offer. After that, the \texttt{Broker} can engage with further hotels, from state $b_5$, or it can proceed with the best offer $\overline{\textit{bst}}$ to the client. At this point, it receives through the requests $\textit{ok}$ or $\textit{nok}$ either the acceptance or rejection, respectively, of its offer. If the offer is accepted, \texttt{Broker} proceeds to book the room with offer $\overline{\textit{bk}}$ to the selected hotel (abstracted away in the contract) and replying to all other hotels with a $\overline{\textit{nbk}}$ offer.
Otherwise, if the offer is rejected, \texttt{Broker} sends to all hotels waiting for a reply the offer $\overline{\textit{nbk}}$. All actions of \texttt{Broker} are permitted.}
\end{exa}


\subsection{Supervisory Control Theory}

The aim of Supervisory Control Theory~\cite{RW87,CL06} (SCT) is to provide an algorithm to synthesise a finite-state automaton model of a \emph{supervisory controller} from given (component) finite-state automata models of the uncontrolled system and its requirements, themselves expressed as automata.
The synthesised supervisory controller, if successfully generated, is such that the controlled system, which is the composition (i.e.\ synchronous product) of the uncontrolled system and the supervisory controller, satisfies the requirements and is additionally \emph{non-blocking}, \emph{controllable}, and \emph{maximally permissive}.

An automaton is \emph{non-blocking} if from each state at least one of the so-called \emph{marked states} (distinguished stable states representing completed \lq tasks\rq~\cite{RW87}, e.g.\ a final state)  can be reached without passing through so-called \emph{forbidden states}, meaning that the system always has the possibility to return to an accepted stable state. The algorithm assumes that marked states and forbidden states are indicated for each component model.

SCT distinguishes between \emph{observable} and \emph{unobservable}, as well as \emph{controllable} and \emph{uncontrollable} actions, where unobservable actions are also uncontrollable. Intuitively, the supervisory controller cannot distinguish one unobservable action from the other, whereas it can take observable actions apart.
Moreover, it is not permitted to directly block uncontrollable actions from occurring; the controller is only allowed to disable them by preventing controllable actions from occurring. Intuitively, controllable actions correspond to stimulating the system, while uncontrollable actions correspond to messages provided by the environment, like sensors, which may be neglected but cannot be denied from existing.

Finally, the fact that the resulting supervisory controller is \emph{maximally permissive} (or least restrictive) means that as much behaviour of the uncontrolled system as possible remains present in the controlled system without violating neither the requirements, nor controllability, nor the non-blocking condition.

From the seminal work of Ramadge and Wonham~\cite{RW87}, we know that a unique maximally permissive supervisory controller exists, provided that all actions are observable. This is called the \emph{most permissive controller} (\emph{mpc}); it coordinates an ensemble of (local) components into a (global) system that works correctly. The synthesis algorithm suffers from the same state space explosion problem as model checking~\cite{GW00}.
\modiff{However, SCT has successfully been applied to industrial size case studies~\cite{FMSR12,TBR12}}.

Intuitively, the synthesis algorithm for computing the mpc of a finite-state automaton $\mathcal A$ works as follows.
The mpc is computed through an iterative procedure that at each step~$i$ updates incrementally a set of states $R_{i}$ containing the \emph{bad} states, i.e.\ those states that cannot prevent a forbidden state to be eventually reached, and refines an automaton $\mathcal K_{i}$.

The algorithm starts with an automaton $\mathcal K_{0}$ equal to $\mathcal A$ and a set $R_{0}$ containing all \emph{dangling} states in $\mathcal A$,
where a state is dangling if it cannot be reached from the initial state or cannot reach a final state.
At each step $i$, the algorithm prunes from $\mathcal K_{i-1}$ in a backwards fashion transitions with target state in $R_{i-1}$ or forbidden source state.
The set $R_{i}$ is obtained by adding to $R_{i-1}$ dangling states in $\mathcal K_{i}$ and source states of uncontrollable transitions of $\mathcal A$ with target state in $R_{i-1}$.
When no more updates are possible, the algorithm terminates.
Termination is ensured since $\mathcal A$ is finite-state and has a finite set of transitions, and at each step the subsets of its states $R_{i}$ cannot decrease while the set of its transitions $T_{\mathcal K_i}$ cannot increase.
Now, suppose that at its termination the algorithm returns the pair $(\mathcal K_{s}, R_{s})$.
We have that the mpc is empty, if the initial state of $\mathcal A$ is in $R_{s}$; otherwise, the mpc is obtained from $\mathcal K_{s}$ by removing the states $R_{s}$.

We report below the standard synthesis algorithm, but we homogenise the notation and simplify the formulation, to align the algorithm with those presented in the next sections.
For this purpose, we assume the standard mpc synthesis to operate on MSCA where necessary transitions ($T^\Box$) are uncontrollable whilst permitted transitions ($T^\Permitted$) are controllable.

We use $\langle~\rangle$ to denote the empty automaton.
A state $q \in Q$ is said to be \emph{dangling} if and only if $\nexists\,w$ such that $q_0 \TRANSS{w}^*q$ or $q \TRANSS{w}^*q_f \in F$. Let $\textit{Dangling}(\mathcal{A})$ denote the set of dangling states of $\mathcal{A}$.
Given two MSCA $\mathcal A$ and $\mathcal A'$, we say that $\mathcal A'$ is a \emph{sub-automaton} of $\mathcal A$, denoted by $\mathcal A' \subseteq \mathcal A$, whenever the components of $\mathcal A'$ are included in the corresponding ones of $\mathcal A$.
Moreover, given
two sets of states $R$ and $R'$, we let $(\mathcal{A}, R) \leq (\mathcal{A}', R')$ if $\mathcal A' \subseteq \mathcal A$ and $R \subseteq R'$. It is straightforward to show that $(\textit{MSCA} \times 2^Q, \leq)$ is a complete partial order (cpo).

The algorithm to compute the mpc is now defined in terms of the \emph{least fixed point} of a monotone function on the cpo $(\textit{MSCA} \times 2^Q, \leq)$.

\begin{defi}[Standard synthesis, adapted from~\cite{RW87}]\label{def:mpc}
Let $\mathcal{A}$ be an MSCA, and let $\mathcal K_{0} = \mathcal A$ and $R_{0} = \textit{Dangling}(\mathcal{K}_{0})$.
We let the \emph{synthesis function}
$f: \textit{MSCA} \times  2^Q \rightarrow \textit{MSCA} \times  2^Q$ be defined as follows:
\begin{center}
\def\arraystretch{1.2}\begin{tabular}{c@{\hskip 0.5in}r@{\hskip 0.05in}c@{\hskip 0.05in}l}
\multicolumn{4}{l}{
$f(\mathcal{K}_{i-1},R_{i-1}) = (\mathcal{K}_{i},R_{i}),
 \text{ with}$}\\
& $T_{\mathcal{K}_{i}}$ & = & $T_{\mathcal{K}_{i-1}} \setminus \{\, (\vec{q} \TRANS{}\vec{q}\,') \in T_{\mathcal{K}_{i-1}} \mid \vec q\,' \in R_{i-1}\vee \vec q \text { is forbidden} \,\}$\\
& $R_{i}$ & = & $R_{i-1} \cup \{\, \vec q \mid (\vec{q} \TRANS{}\vec{q}\,') \in T_{\mathcal{A}}^\Box,\  \vec q\,'\in R_{i-1} \,\} \cup \textit{Dangling}(\mathcal{K}_{i})$
\end{tabular}
\end{center}
\end{defi}

\begin{thm}[Standard mpc, adapted from~\cite{RW87}]%
\label{the:mpc}
The synthesis function $f$ is monotone on the cpo $(\textit{MSCA} \times 2^Q, \leq)$ and its \emph{least fixed point} is:
\[
(\mathcal K_{s}, R_{s}) = \sup (\{\, f^n(\mathcal K_{0}, R_{0})\mid n \in \mathbb{N} \,\})
\]

\noindent
The \emph{mpc} of $\mathcal A$, denoted by $\mathcal K_{\mathcal A}$, is:
\[
\mathcal K_{\mathcal{A}} =
\left\{\begin{array}{l@{\qquad}l}
    \langle~\rangle
        & \mbox{if } \vec q_0 \in R_{s}\\
    \langle Q \setminus
    R_{s}, \vec q_0, A^{\Permitted}, A^{\Box}, A^{o},
        T_{\mathcal K_{s}}, F \setminus
        R_{s} \rangle
        & \mbox{otherwise}
\end{array}\right.
\]
\end{thm}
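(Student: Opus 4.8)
The plan is to prove the two assertions of \thmref{the:mpc} separately: monotonicity of the synthesis function $f$, and the fact that the ascending sequence of its iterates from $(\mathcal K_0, R_0)$ has a supremum which is a fixed point of $f$, least among the fixed points lying above $(\mathcal K_0, R_0)$; the displayed case definition of $\mathcal K_{\mathcal A}$ then merely reads off the computed pair $(\mathcal K_s, R_s)$.

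\textbf{Monotonicity.} One works in the sub-cpo of sub-automata of the fixed $\mathcal A$, where every $\mathcal K$ shares the components $Q$, $\vec{q_0}$, $A^{\Permitted}$, $A^{\Box}$, $A^o$, $F$ with $\mathcal A$, so that $\mathcal K' \subseteq \mathcal K$ amounts to $T_{\mathcal K'} \subseteq T_{\mathcal K}$ (and $f$ maps this sub-cpo into itself, since removing transitions preserves MSCA-hood). Assume $(\mathcal K, R) \le (\mathcal K', R')$, i.e.\ $T_{\mathcal K'} \subseteq T_{\mathcal K}$ and $R \subseteq R'$, and write $(\widehat{\mathcal K}, \widehat R) = f(\mathcal K, R)$ and $(\widehat{\mathcal K}', \widehat R') = f(\mathcal K', R')$. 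For the transition components: a transition $(\vec q \to \vec q\,')$ surviving in $T_{\widehat{\mathcal K}'}$ lies in $T_{\mathcal K'} \subseteq T_{\mathcal K}$, has $\vec q\,' \notin R'$ and hence (since $R \subseteq R'$) $\vec q\,' \notin R$, and has a non-forbidden source (forbiddenness being a fixed designation on states, as in SCT); therefore it also survives in $T_{\widehat{\mathcal K}}$, giving $T_{\widehat{\mathcal K}'} \subseteq T_{\widehat{\mathcal K}}$. For the state components, $\widehat R$ enlarges $R$ by (a) the $T_{\mathcal A}^\Box$-sources of transitions targeting $R$, and (b) $\textit{Dangling}(\widehat{\mathcal K})$. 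Part (a) is monotone because $T_{\mathcal A}^\Box$ is the necessary-transition set of the \emph{fixed} automaton $\mathcal A$ and does not change with the argument of $f$, so only the condition $\vec q\,' \in R$ varies and $R \subseteq R'$ suffices. Part (b) is monotone because, over a common state set and common final states, $T_{\widehat{\mathcal K}'} \subseteq T_{\widehat{\mathcal K}}$ forces every state that is both reachable from $\vec{q_0}$ and co-reaches $F$ in $\widehat{\mathcal K}'$ to be so in $\widehat{\mathcal K}$ as well, whence $\textit{Dangling}(\widehat{\mathcal K}) \subseteq \textit{Dangling}(\widehat{\mathcal K}')$. Combining the three inclusions yields $\widehat R \subseteq \widehat R'$, and together with $T_{\widehat{\mathcal K}'} \subseteq T_{\widehat{\mathcal K}}$ this is exactly $f(\mathcal K, R) \le f(\mathcal K', R')$.

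\textbf{Fixed point and mpc.} Since $f$ only deletes transitions and only adds states, $(\mathcal K_0, R_0) \le f(\mathcal K_0, R_0)$; by monotonicity and induction the iterates $f^n(\mathcal K_0, R_0)$ form an $\le$-ascending chain. Because $\mathcal A$ is finite-state and, by Definition~\ref{def:contract} and the remark following it, has a finite transition set, this chain lives in the \emph{finite} poset $\{\,(\mathcal K, R) \mid T_{\mathcal K} \subseteq T_{\mathcal A},\ R \subseteq Q\,\}$; hence it is eventually stationary, and its stationary value is a fixed point of $f$ equal to $\sup\{\,f^n(\mathcal K_0, R_0) \mid n \in \mathbb N\,\} = (\mathcal K_s, R_s)$. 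That $(\mathcal K_s, R_s)$ is least among fixed points above $(\mathcal K_0, R_0)$ is the usual Kleene argument: for any fixed point $(\mathcal K, R) \ge (\mathcal K_0, R_0)$, monotonicity gives $f^n(\mathcal K_0, R_0) \le f^n(\mathcal K, R) = (\mathcal K, R)$ for every $n$, so $(\mathcal K_s, R_s) \le (\mathcal K, R)$. Finally, since $(\mathcal K_s, R_s)$ is a fixed point, no transition of $T_{\mathcal K_s}$ targets a state of $R_s$ nor leaves a forbidden state, so — with the understood convention that deleting the states $R_s$ also deletes the remaining transitions incident to them — the case definition of $\mathcal K_{\mathcal A}$ produces a well-defined MSCA, empty exactly when $\vec{q_0} \in R_s$.

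\textbf{Main obstacle.} I expect the delicate points to be, on the monotonicity side, the anti-monotonicity of $\textit{Dangling}$ with respect to transition inclusion (and the fact that $\widehat{\mathcal K}$ is itself built from $R$, so one must check that recomputing $\textit{Dangling}(\widehat{\mathcal K})$ never lets the pair decrease), together with the observation that $f$ consults the \emph{fixed} $T_{\mathcal A}^\Box$ rather than the current automaton's necessary transitions — without this, part (a) of the $R$-step would fail to be monotone. On the fixed-point side, the care needed is to derive ``the supremum of the iterates is a fixed point'' from finiteness of the underlying poset rather than from Scott-continuity of $f$, which $f$ need not enjoy in general.
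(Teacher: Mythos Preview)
The paper does not actually supply a proof of Theorem~\ref{the:mpc}: it is stated as ``adapted from~[RW87]'' and immediately followed by a complexity discussion and an example. Your proposal therefore cannot be matched against a proof in the paper; what you have written is a correct, self-contained justification of the monotonicity claim and the Kleene fixed-point computation that the paper simply takes for granted.

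For comparison, the closest thing to a proof in the paper is the argument for the \emph{abstract} synthesis (Theorem~\ref{the:abstractcontrollersynthesis}), of which the standard synthesis is later shown to be an instance (Theorem~\ref{the:abstractmpcsynthesis}). That proof takes a different tack: it does not verify monotonicity explicitly at all, but argues termination directly (``at each iteration either some transition is pruned or a state becomes forbidden, and both sets are finite'') and then checks that the resulting automaton is non-blocking, controllable, maximally permissive, and safe. So your decomposition --- careful monotonicity (including the anti-monotonicity of $\textit{Dangling}$ and the crucial role of the fixed $T_{\mathcal A}^\Box$) followed by a finiteness-based Kleene argument --- is genuinely more detailed on the order-theoretic side, whereas the paper's abstract proof is concerned with the \emph{controller} properties of the output rather than with the lattice-theoretic mechanics you spell out. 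Both are valid; yours substantiates precisely the formal content of the theorem statement, while the paper's (abstract) argument substantiates why the output deserves the name ``mpc''.
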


\modiff{We now want to estimate an upper bound of the complexity of the mpc synthesis algorithm as results from Definition~\ref{def:mpc} and Theorem~\ref{the:mpc}.
In the worst case, deciding if a state is dangling requires to visit the whole state space. Thus, an upper bound of the complexity of the procedure for deciding if a state is dangling is $\mathcal{O}(|Q|)$, and the upper-bound complexity for computing the set of dangling states is $\mathcal{O}(|Q|^2)$.
At each iteration, in the worst case, the algorithm either removes a single transition from~$T$ or adds a single state to~$R$, and each iteration requires to compute the set of dangling states.
Thus, an upper bound of the complexity of the mpc synthesis algorithm is $\mathcal{O}((|T|+|Q|)\times|Q|^2)$.
To conclude, it is worth noticing that our analysis focusses on the abstract specification of the algorithm while its implementation could be optimised, for example by using parallelism and dedicated data structures, in order to perform better than the complexity sketched above}.


\begin{exa}\label{ex:mpc}
\modiff{We continue the running example by discussing the synthesis of the mpc for the composition of two clients, the broker, one normal hotel, and one privileged hotel, denoted as
\[
\mathcal A_1 = \textsf{Client}\otimes \textsf{Client} \otimes \textsf{Broker}\otimes \textsf{Hotel} \otimes \textsf{PrivilegedHotel}
\]
The property to be enforced is agreement: each request must be matched by a corresponding offer. Basically, this property is an invariant stating that all request transitions are forbidden.
Since the synthesis 
works on forbidden states, we need to preprocess $\mathcal A_1$ accordingly.
In particular, the algorithm starts from the automaton $\mathcal A_1$ from which all permitted requests have been removed. Forbidden states are those featuring an outgoing \emph{necessary} request.}

\modiff{The resulting mpc only consists of the initial (and final) state $(c_0,c'_0,b_0,h_0,h'_0)$, and its behaviour is empty. Hence, agreement cannot be enforced in $\mathcal A_1$ using the standard synthesis algorithm.
This is an indication of the fact that standard mpc synthesis is not useful for the scope of synthesising a correct service composition (i.e.\ in which agreement is satisfied).
The reason is that necessary transitions are not to be interpreted as uncontrollable.
The notion of uncontrollable transition stems from the necessity of modelling an unpredictable environment, which is not suitable to model necessary service requests.
Basically, \texttt{PrivilegedHotel} has a necessary request that should be matched in \emph{at least} one trace of the composition. However, by interpreting such necessary request as uncontrollable, the synthesis is enforcing the necessary requests to be satisfied in \emph{every} trace of the composition. Intuitively, this would require that a client is not allowed to refuse to book a room.}

\modiff{As will become clear in the forthcoming sections, $\mathcal A_1$ admits a non-empty orchestration in which agreement is enforced, because necessary transitions will not be interpreted as fully uncontrollable.}

\modiff{We have used our tool FMCAT to calculate the automaton $\mathcal A_1$ and its mpc synthesis. Their computation time and state-space dimension are reported in Table~\ref{tab:results} (on page~\pageref{tab:results}).}
\end{exa}



\section{Synthesis of Orchestrations}\label{sect:orchestrationsynthesis}

In this section, we discuss how we revised the classical synthesis algorithm from SCT to obtain the mpc (cf.\ Theorem~\ref{the:mpc})  and synthesise orchestrations of MSCA\@.

Originally, MSCA were capable of expressing only permitted requirements, corresponding to actions that are controllable by the orchestrator. Hence, in the synthesis of the orchestration, all transitions labelled by actions violating the property to be enforced were pruned, and all dangling states were removed (cf.~\cite{BDF16}).

While permitted requests of MSCA are in one-to-one correspondence with controllable actions, interestingly this is not the case for necessary requests and uncontrollable actions.
A necessary (request) action is indeed a weaker constraint than an uncontrollable one.
This stems from the fact that traditionally uncontrollable actions relate to an unpredictable environment.
However, the interpretation of such actions as \emph{necessary} service requests to be fulfilled in a service contract, as is the case in the setting of MSCA, implies that it suffices that in the synthesised orchestration at least one such synchronisation (i.e.\ match) actually occurs.
This is precisely what is modelled by the notion of \emph{semi-controllable} actions, anticipated in~\cite{BDGDF17} and formally introduced in~\cite{JSCP,BBL19}, discussed next. 

The importance of this novel notion in the synthesis algorithm is showcased by an intuitive example.
Consider the two MSCA interacting on the necessary service request $a$ depicted in Fig.~\ref{fig:comp}~(left and middle), and their possible composition $\mathcal A$ depicted in Fig.~\ref{fig:comp}~(right). Note that $\mathcal A$ models two possibilities of fulfilling request $a$ from the leftmost automaton by matching it with a service offer $\overline{a}$ from the middle one. Note that a similar composition can  be obtained in other automata-based formalisms (such as, e.g., (timed) I/O automata~\cite{LT89,AD94,DLLNW10}).
Now assume that $a$ must be matched with $\overline{a}$ to obtain an agreement (i.e.\ it is \emph{necessary}), and that for some reason the \emph{bad} state {\color{red}$\xmark$} is to be avoided in favour of the \emph{successful} state {\color{green!60!black}$\cmark$}, i.e.\ in some sense we would like to express that $a$ must be matched at some point, rather than always.
In most automata-based formalisms
this is not allowed  and the resulting mpc is empty.
In the MSCA formalism,  
it is possible to orchestrate the composition of the two automata on the left in such a way that the result is the automaton $\mathcal A$ on the right, but \textit{without the state} {\color{red}$\xmark$} and its incident transition.

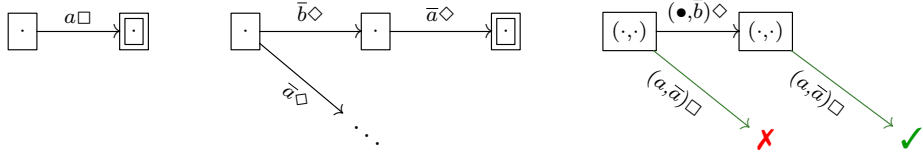
\begin{figure}
\centering\begin{tabular}{c}
\def\objectstyle{\scriptstyle}\def\labelstyle{\scriptstyle}
\xymatrix@C=.55cm@R=.8cm{
*+[F] {{}^{{}^{}}\cdot{}_{{}_{}}} \ar[rr]^{a\Box} & *{} & *+[F=] {{}^{{}^{}}\cdot{}_{{}_{}}} & *{} & *+[F] {{}^{{}^{}}\cdot{}_{{}_{}}} \ar[drr]_*+<-.4em>[@!-37]{\overline{a}\Permitted} \ar[rr]^{\overline{b}\Permitted} & *{} & *+[F] {{}^{{}^{}}\cdot{}_{{}_{}}} \ar[rr]^{\overline{a}\Permitted} & *{} & *+[F=] {{}^{{}^{}}\cdot{}_{{}_{}}} & *{} & *+[F] {(\cdot,\cdot)} \ar@[OliveGreen][drr]_*+<-.8em>[@!-37]{(a,\overline{a})\Box} \ar[rr]^{(\blk,\overline{b})\Permitted} & *{} & *+[F] {(\cdot,\cdot)} \ar@[OliveGreen][drr]_*+<-.8em>[@!-37]{(a,\overline{a})\Box} & {}\\
*{} & *{} & *{} & *{} & *{} & *{} & {\ddots\phantom{-}} & *{} & *{} & *{} & *{} & *{} & {\color{red}\text{\normalsize$\xmark$}} & *{} &  {\color{green!60!black}\text{\normalsize$\cmark$}}
}
\end{tabular}
\caption{\label{fig:comp}Two MSCA (left and middle) and a possible composition $\mathcal A$ of them (right)}
\end{figure}

In fact, in the MSCA formalism,  
$\mathcal A$ depicts a composition in which the automata on the left can synchronise on a so-called semi-controllable action $a\Box$ either in their initial state or after the middle automaton has performed some other action $\overline b\Permitted$, ignoring in this case whether a bad or a successful state is reached in the end.
Indeed, the notion of semi-controllability is independent from both the specific formalism being used and the requirement (e.g.\ agreement in case of MSCA) to be enforced.

As far as we know, we were the first to define a synthesis algorithm, in~\cite{BBL19}, that is capable of producing a controller that guarantees that \emph{at least} one of these two synchronisations actually occurs. 
Indeed, in the standard synthesis algorithm (cf.\ Theorem~\ref{the:mpc}),
action $a$ can either be \emph{controllable} and hence not necessary as we want, or \emph{uncontrollable} thus requiring that $a$ must \emph{always} be matched, a stronger requirement than the one posed by declaring $a$ as necessary.


To formalise the intuitions above\footnote{We refer the interested reader to~\cite{JSCP,BBL19} for more complete accounts.}, a semi-controllable transition $t$ becomes controllable if in a given portion of $\mathcal{A}$ there exists a semi-controllable match transition $t'$, with source and target states not dangling, such that in both $t$ and $t'$ the \emph{same} principal, in the \emph{same} local state, does the \emph{same} request. Otherwise, $t$ is uncontrollable. 

\begin{defi}[Controllability]\label{def:controllabilityorchestration}
Let $\mathcal A$ be an MSCA and let $t = (\vec q_1, \vec a_1, \vec q_1\!') \in T_{\mathcal A}$. Then:
\begin{itemize}
\item if $\vec a_1$ is an action on $a\in\permittedset \cup A^o$, then $t$ is \emph{controllable} (in $\mathcal A$) and part of $T^\Permitted$;
\item if $\vec a_1$ is a request or match on $a\in\necessaryset$, then $t$ is \emph{semi-controllable} (in $\mathcal A$) and part of $T^\Box$.
\end{itemize}
Moreover, given $\mathcal A'\subseteq\mathcal A$, if $t$ is semi-controllable and
$\exists\,t' = (\vec{q_2}, \vec {a_2}, \vec{q_2}\!') \in T^\Box_{{\mathcal A}'}$ in $\mathcal A'$ such that $\vec a_2$ is a match,
$\vec q_2, \vec q_2\!' \not\in \textit{Dangling}(\mathcal A')$,
$\ithel{\vec{q}_1{}}{i} = \ithel{\vec{q}_2{}}{i}$, and $\ithel{\vec{a}_1{}}{i} = \ithel{\vec{a}_2{}}{i} = a$, then $t$ is \emph{controllable} in $\mathcal A'$ (via $t'$);
otherwise, $t$ is \emph{uncontrollable} in $\mathcal A'$.
\end{defi}

The algorithm for synthesising an orchestration enforcing agreement of MSCA follows.
The main adaptation of the mpc synthesis of Theorem~\ref{the:mpc} is that transitions are no longer declared uncontrollable, but instead they can be either controllable or semi-controllable.
More importantly, a semi-controllable transition switches from controllable to uncontrollable only after it has been pruned in a previous iteration, in which case its source state becomes bad.
Finally, in this case there are no forbidden states but rather forbidden transitions (i.e.\ requests, according to the property of agreement).

\begin{defi}[MSCA orchestration synthesis, adapted from~\cite{BDGDF17}]\label{def:synthesisorchestration}
Let $\mathcal{A}$ be an MSCA, and let $\mathcal{K}_{0} = \mathcal A$ and $R_{0} = \textit{Dangling}(\mathcal{K}_{0})$.
We let the \emph{orchestration synthesis function} $f_o: \textit{MSCA} \times 2^Q \rightarrow \textit{MSCA} \times 2^Q$ be defined as follows:
\begin{center}
\def\arraystretch{1.2}\begin{tabular}{c@{\hskip 0.5in}r@{\hskip 0.05in}c@{\hskip 0.05in}l}
\multicolumn{4}{l}{
$f_o(\mathcal{K}_{i-1},R_{i-1}) = (\mathcal{K}_{i},R_{i}),
\text{ with }$}\\
& $T_{\mathcal{K}_{i}}$ & = & $T_{\mathcal{K}_{i-1}}\setminus
\{\,(\vec{q} \TRANS{}\vec{q}\,') = t\in T_{\mathcal{K}_{i-1}} \mid (\vec{q}\,'\!\in R_{i-1}\, \vee t \textit{ is a request})\}$
\\
& $R_{i}$ & = & $R_{i-1} \cup
\{\,\vec q \mid (\vec q \TRANS{})\in T^\Box_{\mathcal A} \textit{ is uncontrollable in } \mathcal{K}_{i}\,\} \cup \textit{Dangling}(\mathcal{K}_{i})$
\end{tabular}
\end{center}
\end{defi}

\begin{thm}[MSCA orchestration, adapted from~\cite{BDGDF17}]%
\label{the:def-orchestration}
The orchestration synthesis function $f_o$ is monotone on the cpo $(\textit{MSCA}\times 2^Q, \leq)$ and its \emph{least fixed point} is:
\[
(\mathcal K_{s}, R_{s}) = \sup (\{\,f_o^n(\mathcal K_{0}, R_{0})\mid n \in \mathbb N\,\})
\]

\noindent
The \emph{orchestration} $\mathcal K_{\mathcal{A}}$ of $\mathcal A$ is:
\[
\mathcal K_{\mathcal{A}} =
\left\{\begin{array}{l@{\qquad}l}
    \langle~\rangle & \mbox{if } \vec q_0 \in R_{s} \\
    \langle Q\setminus R_{s}, \vec{q_0}, \permittedset,\necessaryset, A^{o}, T_{\mathcal K_{s}}\!\setminus T', F\setminus R_{s} \rangle & \mbox{otherwise}
\end{array}\right.
\]
where $T' = \{\,t = \vec q\;\TRANSS{}\in\mathcal K_{s} \mid t \text{ is controllable in } \mathcal K_{s},\ \vec q\in R_{s}\,\}$.
\end{thm}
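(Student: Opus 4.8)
The plan is to mimic the proof of Theorem~\ref{the:mpc} as closely as possible, since the statement of Theorem~\ref{the:def-orchestration} has exactly the same shape: monotonicity of $f_o$ on the cpo $(\textit{MSCA}\times 2^Q,\leq)$, existence of the least fixed point as the supremum of the chain $\{f_o^n(\mathcal K_0,R_0)\mid n\in\mathbb N\}$, and then a description of the output automaton. So I would first establish monotonicity, then invoke a Kleene-style fixed-point argument, then argue well-definedness of $\mathcal K_{\mathcal A}$.

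\medskip
\noindent\textbf{Monotonicity.} Suppose $(\mathcal K,R)\leq(\mathcal K',R')$, i.e.\ $\mathcal K'\subseteq\mathcal K$ (fewer components) and $R\subseteq R'$; I must show $f_o(\mathcal K,R)\leq f_o(\mathcal K',R')$. For the transition component, $T_{\mathcal K'}\subseteq T_{\mathcal K}$ and the pruning removes any transition whose target is bad or which is a request; since $R\subseteq R'$, every transition pruned from $\mathcal K$ relative to $R$ is also pruned from $\mathcal K'$ relative to $R'$ (and $\mathcal K'$ starts with fewer transitions anyway), so the resulting $T$-set for the primed pair is contained in that for the unprimed pair. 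For the $R$ component, I need $R_i\subseteq R_i'$: the term $R_{i-1}$ grows by hypothesis; $\textit{Dangling}(\mathcal K_i)$ — here I use that a sub-automaton has at least as many dangling states, since reachability and co-reachability can only be lost by removing transitions, hence $\textit{Dangling}(\mathcal K_i)\subseteq\textit{Dangling}(\mathcal K_i')$; and the set $\{\vec q\mid(\vec q\TRANS{})\in T^\Box_{\mathcal A}\text{ uncontrollable in }\mathcal K_i\}$ — here I use Definition~\ref{def:controllabilityorchestration}: a semi-controllable transition is controllable in $\mathcal K_i$ iff a suitable witnessing match $t'$ exists with non-dangling endpoints in $\mathcal K_i$; fewer transitions and more dangling states in $\mathcal K_i'$ means fewer witnesses, so uncontrollability is preserved when passing to the smaller automaton. (Note $T^\Box_{\mathcal A}$ is fixed throughout the iteration — it refers to the original $\mathcal A$, not to $\mathcal K_i$ — which is what makes this monotone rather than the naive reading.) This anti-monotonicity of the set of witnesses is the delicate point and is the main obstacle: one must carefully verify that the controllability predicate is monotone in the right direction, namely that shrinking $\mathcal A'$ can only turn controllable transitions into uncontrollable ones, never the reverse.

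\medskip
\noindent\textbf{Least fixed point.} Once $f_o$ is monotone, since $(\mathcal K_0,R_0)=(\mathcal A,\textit{Dangling}(\mathcal A))$ and applying $f_o$ can only remove transitions and add states, the first iterate satisfies $f_o(\mathcal K_0,R_0)\geq(\mathcal K_0,R_0)$ in the order as written? — careful: $\leq$ here means the primed automaton is a \emph{sub}-automaton and the primed $R$ is \emph{larger}, so $f_o(\mathcal K_0,R_0)$ is $\geq$ relative to $(\mathcal K_0,R_0)$ because it has fewer transitions and a larger $R$. Hence $\{f_o^n(\mathcal K_0,R_0)\}_n$ is an ascending chain in the cpo, its supremum $(\mathcal K_s,R_s)$ exists by completeness, and by the usual argument — continuity follows from the fact that the chain stabilises after finitely many steps, which it does because $\mathcal A$ is finite-state with finitely many transitions (as remarked right after Definition~\ref{def:contract}), so $T_{\mathcal K_i}$ is non-increasing and $R_i$ is non-decreasing, both in finite sets — we get $f_o(\mathcal K_s,R_s)=(\mathcal K_s,R_s)$ and it is the least prefixed point, hence the least fixed point above $(\mathcal K_0,R_0)$.

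\medskip
\noindent\textbf{The output automaton.} Finally I check that $\mathcal K_{\mathcal A}$ as displayed is well-defined. If $\vec q_0\in R_s$ there is nothing to do. Otherwise, removing the states $R_s$ and the controllable dead-transitions in $T'$ yields a genuine MSCA: condition (i) of Definition~\ref{def:contract} is inherited; condition (ii) is inherited; condition (iii) — the partition into $T^\Permitted$ and $T^\Box$ — is preserved because we only delete transitions, never relabel, and $\permittedset,\necessaryset,A^o$ are kept unchanged. The removal of $T'$ (controllable transitions out of bad states that survived the backward pruning only because their source had not yet been flagged) is what makes the resulting automaton consistent with the fixed-point data; I would note that $T'\cap T^\Box_{\mathcal K_s}=\emptyset$ so semi-controllable transitions out of $R_s$ are already gone via the $R$-update, and that $\vec q_0\notin R_s$ together with the dangling-state bookkeeping guarantees every remaining state is both reachable and co-reachable, so the automaton is non-blocking. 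This last part is routine once monotonicity is in hand; essentially all the real work is in the controllability-monotonicity lemma sketched above.
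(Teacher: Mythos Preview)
The paper does not give a proof for this theorem --- it is stated as ``adapted from'' a prior paper and left unproved here.  For comparison, the analogous Theorem~\ref{the:synthesischoreography} (choreography) and Theorem~\ref{the:abstractcontrollersynthesis} (abstract synthesis) \emph{are} proved in the paper, but in a different style from yours: there the argument is simply that termination follows from finiteness (at each iteration either a transition is removed or a state is added to $R$, and both sets are finite), after which the properties of the output automaton (non-blocking, controllable, safe) are checked directly.  No separate monotonicity lemma is spelled out and no Kleene-style reasoning about chains and suprema appears.

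Your approach is correct and is more explicit about the fixed-point structure than the paper's treatment of the sibling results.  The observation you isolate --- that shrinking $\mathcal A'$ can only destroy witnesses in Definition~\ref{def:controllabilityorchestration}, so ``uncontrollable in $\mathcal K$'' is downward closed in the sub-automaton order --- is precisely the non-routine step needed for monotonicity of $f_o$, and you have it right (including the point that the quantification is over $T^\Box_{\mathcal A}$, fixed, while controllability is tested in the shrinking $\mathcal K_i$).

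One small slip in your final paragraph: the claim $T'\cap T^\Box_{\mathcal K_s}=\emptyset$ is not correct.  A semi-controllable (hence $T^\Box$) transition with source in $R_s$ may still be \emph{controllable in $\mathcal K_s$} via some surviving witness $t'$, in which case it belongs to $T'$ by definition.  This does not damage your argument: transitions with target in $R_s$ are gone by the pruning at the fixed point, controllable transitions with source in $R_s$ are removed via $T'$, and any remaining uncontrollable transitions with source in $R_s$ disappear once $R_s$ is excised from the state set.
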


\modiff{We now estimate the complexity of the orchestration synthesis algorithm.
In the synthesis of the mpc, deciding whether a transition is controllable or uncontrollable has a complexity of $\mathcal O(1)$.
On the converse, for the orchestration case, deciding whether a semi-controllable transition is controllable or uncontrollable requires in the worst case to check all transitions of the automaton.
Accordingly, the procedure for computing the set of uncontrollable transitions has an upper-bound complexity of $\mathcal O(|T|^2)$.
Since this is the only difference with respect to the mpc synthesis, a first upper bound of the complexity of the orchestration synthesis is $\mathcal{O}((|T|+|Q|)\times|Q|^2\times|T|^2)$.
The computation of the set of dangling states and uncontrollable transitions could be done in parallel through a single visit of the automaton.
Thus, the upper-bound complexity of the orchestration synthesis can be lowered to be the same as the complexity of the mpc synthesis, i.e.\ $\mathcal{O}((|T|+|Q|)\times|Q|^2)$.
Finally, we want to underline that our complexity estimation refers to the abstract specification of the algorithm, resulting from Definition~\ref{def:synthesisorchestration} and Theorem~\ref{the:def-orchestration}. As already observed for the mpc synthesis, when implementing the algorithm further optimisations could be achieved that can lower its complexity.}


\begin{exa}\label{ex:orchestration}
\modiff{We further continue the running example by discussing the synthesis of the orchestration for the composition
\[
\mathcal A_1 = \textsf{Client}\otimes \textsf{Client} \otimes \textsf{Broker}\otimes \textsf{Hotel} \otimes \textsf{PrivilegedHotel}
\]
The orchestration of $\mathcal A_1$ is depicted in Fig.~\ref{fig:orch} and the time needed to compute it by using FMCAT is reported in Table~\ref{tab:results} (on page~\pageref{tab:results}).
We recall that the orchestration is the largest sub-portion of the composition that is in agreement, i.e.\ in which requests are matched by offers.}

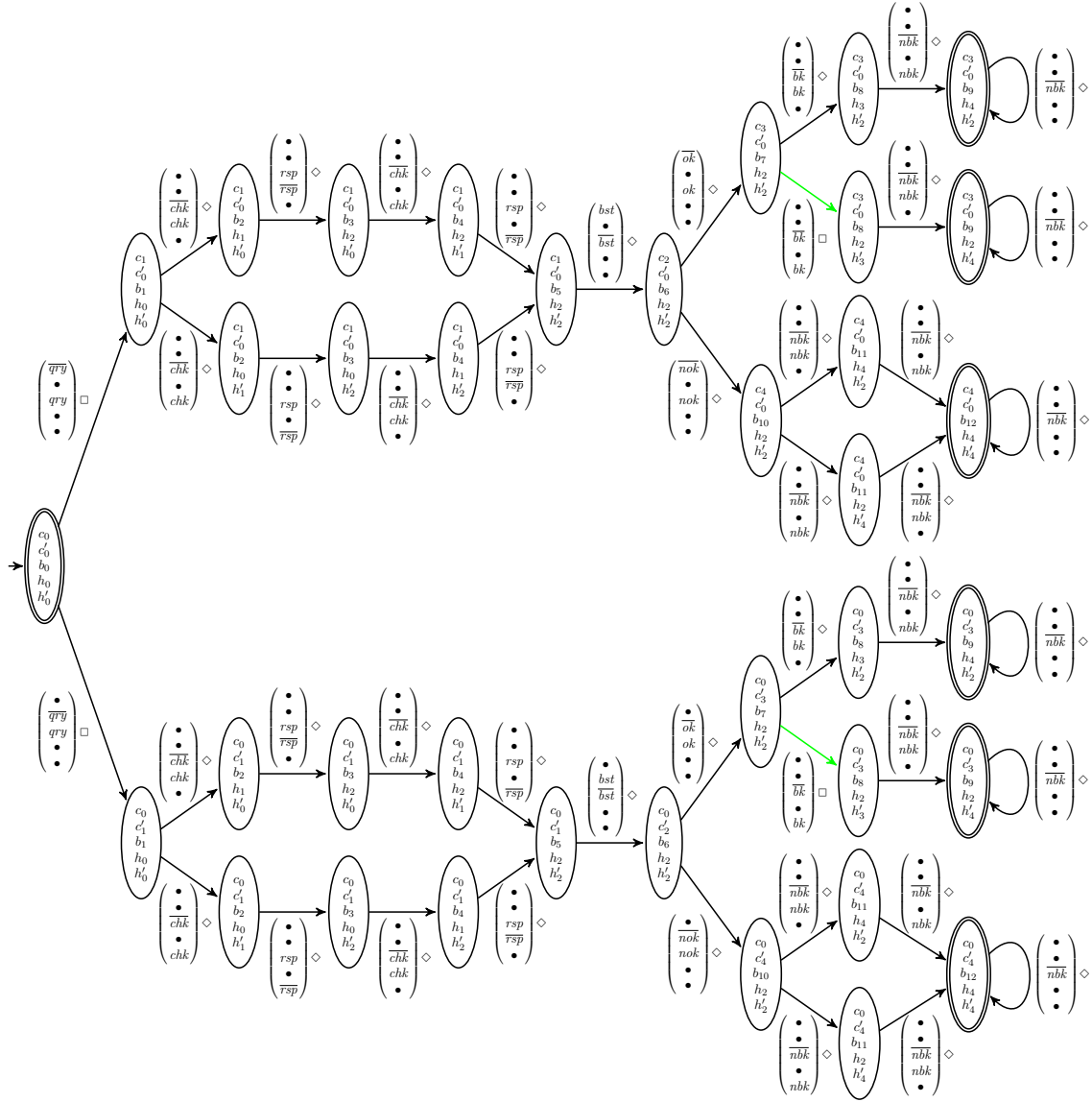
\begin{figure}
\centering
\resizebox{\textwidth}{!}{
\begin{tikzpicture}[->,scale=.5,>=stealth',shorten >=1pt,auto,node distance=1cm,
        semithick, every node/.style={scale=0.5},initial text={},inner sep=0pt, minimum size=0pt]
\tikzstyle{every state}=[fill=white,draw=black,text=black,ellipse]

\node[initial,state,accepting] (0) {$\begin{array}{c}c_{0}\\ c'_{0}\\ b_{0}\\ h_{0}\\ h'_{0}\end{array}$};
\node[state] (1) [above right = 2.85cm and 1cm of 0] {$\begin{array}{c}\,c_{1}\,\\ c'_{0}\\ b_{1}\\ h_{0}\\ h'_{0}\end{array}$};
\node[state] (2) [above right = -.15cm and 1cm of 1] {$\begin{array}{c}\,c_{1}\,\\ c'_{0}\\ b_{2}\\ h_{1}\\ h'_{0}\end{array}$};
\node[state] (3) [right = of 2] {$\begin{array}{c}\,c_{1}\,\\ c'_{0}\\ b_{3}\\ h_{2}\\ h'_{0}\end{array}$};
\node[state] (4) [right = of 3] {$\begin{array}{c}\,c_{1}\,\\ c'_{0}\\ b_{4}\\ h_{2}\\ h'_{1}\end{array}$};
\node[state] (5) [below right = -.15cm and 1cm of 1] {$\begin{array}{c}\,c_{1}\,\\ c'_{0}\\ b_{2}\\ h_{0}\\ h'_{1}\end{array}$};
\node[state] (6) [right = of 5] {$\begin{array}{c}\,c_{1}\,\\ c'_{0}\\ b_{3}\\ h_{0}\\ h'_{2}\end{array}$};
\node[state] (7) [right = of 6] {$\begin{array}{c}\,c_{1}\,\\ c'_{0}\\ b_{4}\\ h_{1}\\ h'_{2}\end{array}$};
\node[state] (8) [below right = -.15cm and 1cm of 4] {$\begin{array}{c}\,c_{1}\,\\ c'_{0}\\ b_{5}\\ h_{2}\\ h'_{2}\end{array}$};
\node[state] (9) [right = of 8] {$\begin{array}{c}c_{2}\\ c'_{0}\\ b_{6}\\ h_{2}\\ h'_{2}\end{array}$};
\node[state] (10) [above right = .75cm and 1cm of 9] {$\begin{array}{c}c_{3}\\ c'_{0}\\ \,b_{7}\,\\ h_{2}\\ h'_{2}\end{array}$};
\node[state] (11) [above right = -.15cm and 1cm of 10] {$\begin{array}{c}c_{3}\\ c'_{0}\\ \,b_{8}\,\\ h_{3}\\ h'_{2}\end{array}$};
\node[state,accepting] (12) [right = of 11] {$\begin{array}{c}c_{3}\\ c'_{0}\\ \,b_{9}\,\\ h_{4}\\ h'_{2}\end{array}$};
\node[state] (13) [below right = -.15cm and 1cm of 10] {$\begin{array}{c}c_{3}\\ c'_{0}\\ \,b_{8}\,\\ h_{2}\\ h'_{3}\end{array}$};
\node[state,accepting] (14) [right = of 13] {$\begin{array}{c}c_{3}\\ c'_{0}\\ \,b_{9}\,\\ h_{2}\\ h'_{4}\end{array}$};
\node[state] (15) [below right = .75cm and 1cm of 9] {$\begin{array}{c}\,c_{4}\,\\ c'_{0}\\ b_{10}\\ h_{2}\\ h'_{2}\end{array}$};
\node[state] (16) [above right = -.15cm and 1cm of 15] {$\begin{array}{c}\,c_{4}\,\\ c'_{0}\\ b_{11}\\ h_{4}\\ h'_{2}\end{array}$};
\node[state] (17) [below right = -.15cm and 1cm of 15] {$\begin{array}{c}c_{4}\\ c'_{0}\\ b_{11}\\ h_{2}\\ h'_{4}\end{array}$};
\node[state,accepting] (18) [below right = -.15cm and 1.15cm of 16] {$\begin{array}{c}c_{4}\\ c'_{0}\\ b_{12}\\ h_{4}\\ h'_{4}\end{array}$};
\node[state] (19) [below right = 2.85cm and 1cm of 0] {$\begin{array}{c}c_{0}\\ \,c'_{1}\,\\ b_{1}\\ h_{0}\\ h'_{0}\end{array}$};
\node[state] (20) [above right = -.15cm and 1cm of 19] {$\begin{array}{c}c_{0}\\ \,c'_{1}\,\\ b_{2}\\ h_{1}\\ h'_{0}\end{array}$};
\node[state] (21) [right = of 20] {$\begin{array}{c}c_{0}\\ \,c'_{1}\,\\ b_{3}\\ h_{2}\\ h'_{0}\end{array}$};
\node[state] (22) [right = of 21] {$\begin{array}{c}c_{0}\\ \,c'_{1}\,\\ b_{4}\\ h_{2}\\ h'_{1}\end{array}$};
\node[state] (23) [below right = -.15cm and 1cm of 19] {$\begin{array}{c}c_{0}\\ \,c'_{1}\,\\ b_{2}\\ h_{0}\\ h'_{1}\end{array}$};
\node[state] (24) [right = of 23] {$\begin{array}{c}c_{0}\\ \,c'_{1}\,\\ b_{3}\\ h_{0}\\ h'_{2}\end{array}$};
\node[state] (25) [right = of 24] {$\begin{array}{c}c_{0}\\ \,c'_{1}\,\\ b_{4}\\ h_{1}\\ h'_{2}\end{array}$};
\node[state] (26) [below right = -.15cm and 1cm of 22] {$\begin{array}{c}c_{0}\\ \,c'_{1}\,\\ b_{5}\\ h_{2}\\ h'_{2}\end{array}$};
\node[state] (27) [right = of 26] {$\begin{array}{c}c_{0}\\ c'_{2}\\ b_{6}\\ h_{2}\\ h'_{2}\end{array}$};
\node[state] (28) [above right = .75cm and 1cm of 27] {$\begin{array}{c}c_{0}\\ c'_{3}\\ \,b_{7}\,\\ h_{2}\\ h'_{2}\end{array}$};
\node[state] (29) [above right = -.15cm and 1cm of 28] {$\begin{array}{c}c_{0}\\ c'_{3}\\ \,b_{8}\,\\ h_{3}\\ h'_{2}\end{array}$};
\node[state,accepting] (30) [right = of 29] {$\begin{array}{c}c_{0}\\ c'_{3}\\ \,b_{9}\,\\ h_{4}\\ h'_{2}\end{array}$};
\node[state] (31) [below right = -.15cm and 1cm of 28] {$\begin{array}{c}c_{0}\\ c'_{3}\\ \,b_{8}\,\\ h_{2}\\ h'_{3}\end{array}$};
\node[state,accepting] (32) [right = of 31] {$\begin{array}{c}c_{0}\\ c'_{3}\\ \,b_{9}\,\\ h_{2}\\ h'_{4}\end{array}$};
\node[state] (33) [below right = .75cm and 1cm of 27] {$\begin{array}{c}c_{0}\\ c'_{4}\\ b_{10}\\ h_{2}\\ h'_{2}\end{array}$};
\node[state] (34) [above right = -.15cm and 1cm of 33] {$\begin{array}{c}c_{0}\\ c'_{4}\\ b_{11}\\ h_{4}\\ h'_{2}\end{array}$};
\node[state] (35) [below right = -.15cm and 1cm of 33] {$\begin{array}{c}c_{0}\\ c'_{4}\\ b_{11}\\ h_{2}\\ h'_{4}\end{array}$};
\node[state,accepting] (36) [below right = -.15cm and 1.15cm of 34] {$\begin{array}{c}c_{0}\\ c'_{4}\\ b_{12}\\ h_{4}\\ h'_{4}\end{array}$};

\path
(0) edge[] node[above,xshift=-25pt,yshift=-10pt]{$\left(\!\!\begin{array}{c}\overline{\textit{qry}}\\ \blk\\ \textit{qry}\\ \blk\\ \blk\end{array}\!\!\right)\!\Necessary$} (1)
(1) edge[] node[above,xshift=-5pt,yshift=5pt]{$\left(\!\!\begin{array}{c}\blk\\ \blk\\ \overline{\textit{chk}}\\ \textit{chk}\\ \blk\end{array}\!\!\right)\!\Permitted$} (2)
(2) edge[] node[above,yshift=5pt]{$\left(\!\!\begin{array}{c}\blk\\ \blk\\ \textit{rsp}\\ \overline{\textit{rsp}}\\ \blk\end{array}\!\!\right)\!\Permitted$} (3)
(3) edge[] node[above,yshift=5pt]{$\left(\!\!\begin{array}{c}\blk\\ \blk\\ \overline{\textit{chk}}\\ \blk\\ \textit{chk}\end{array}\!\!\right)\!\Permitted$} (4)
(4) edge[] node[above,xshift=10pt,yshift=5pt]{$\left(\!\!\begin{array}{c}\blk\\ \blk\\ \textit{rsp}\\ \blk\\ \overline{\textit{rsp}}\end{array}\!\!\right)\!\Permitted$} (8)
(1) edge[] node[below,xshift=-5pt,yshift=-5pt]{$\left(\!\!\begin{array}{c}\blk\\ \blk\\ \overline{\textit{chk}}\\ \blk\\ \textit{chk}\end{array}\!\!\right)\!\Permitted$} (5)
(5) edge[] node[below,yshift=-5pt]{$\left(\!\!\begin{array}{c}\blk\\ \blk\\ \textit{rsp}\\ \blk\\ \overline{\textit{rsp}}\end{array}\!\!\right)\!\Permitted$} (6)
(6) edge[] node[below,yshift=-5pt]{$\left(\!\!\begin{array}{c}\blk\\ \blk\\ \overline{\textit{chk}}\\ \textit{chk}\\ \blk\end{array}\!\!\right)\!\Permitted$} (7)
(7) edge[] node[below,xshift=10pt,yshift=-5pt]{$\left(\!\!\begin{array}{c}\blk\\ \blk\\ \textit{rsp}\\ \overline{\textit{rsp}}\\ \blk\end{array}\!\!\right)\!\Permitted$} (8)
(8) edge[] node[above,yshift=5pt]{$\left(\!\!\begin{array}{c}\textit{bst}\\ \blk\\ \overline{\textit{bst}}\\ \blk\\ \blk\end{array}\!\!\right)\!\Permitted$} (9)
(9) edge[] node[above,xshift=-15pt,yshift=-5pt]{$\left(\!\!\begin{array}{c}\overline{\textit{ok}}\\ \blk\\ \textit{ok}\\ \blk\\ \blk\end{array}\!\!\right)\!\Permitted$} (10)
(10) edge[] node[above,xshift=-5pt,yshift=5pt]{$\left(\!\!\begin{array}{c}\blk\\ \blk\\ \overline{\textit{bk}}\\ \textit{bk}\\ \blk\end{array}\!\!\right)\!\Permitted$} (11)
(11) edge[] node[above,yshift=5pt]{$\left(\!\!\begin{array}{c}\blk\\ \blk\\ \overline{\textit{nbk}}\\ \blk\\ \textit{nbk}\end{array}\!\!\right)\!\Permitted$} (12)
(12) edge[loop,looseness=5,in=315,out=45] node[right,xshift=3pt]{$\left(\!\!\begin{array}{c}\blk\\ \blk\\ \overline{\textit{nbk}}\\ \blk\\ \blk\end{array}\!\!\right)\!\Permitted$} (12)
(10) edge[draw=green] node[below,xshift=-5pt,yshift=-5pt]{$\left(\!\!\begin{array}{c}\blk\\ \blk\\ \overline{\textit{bk}}\\ \blk\\ \textit{bk}\end{array}\!\!\right)\!\Necessary$} (13)
(13) edge[] node[above,yshift=5pt]{$\left(\!\!\begin{array}{c}\blk\\ \blk\\ \overline{\textit{nbk}}\\ \textit{nbk}\\ \blk\end{array}\!\!\right)\!\Permitted$} (14)
(14) edge[loop,looseness=5,in=315,out=45] node[right,xshift=3pt]{$\left(\!\!\begin{array}{c}\blk\\ \blk\\ \overline{\textit{nbk}}\\ \blk\\ \blk\end{array}\!\!\right)\!\Permitted$} (14)
(9) edge[] node[below,xshift=-15pt,yshift=-5pt]{$\left(\!\!\begin{array}{c}\overline{\textit{nok}}\\ \blk\\ \textit{nok}\\ \blk\\ \blk\end{array}\!\!\right)\!\Permitted$} (15)
(15) edge[] node[above,xshift=-5pt,yshift=5pt]{$\left(\!\!\begin{array}{c}\blk\\ \blk\\ \overline{\textit{nbk}}\\ \textit{nbk}\\ \blk\end{array}\!\!\right)\!\Permitted$} (16)
(15) edge[] node[below,xshift=-5pt,yshift=-5pt]{$\left(\!\!\begin{array}{c}\blk\\ \blk\\ \overline{\textit{nbk}}\\ \blk\\ \textit{nbk}\end{array}\!\!\right)\!\Permitted$} (17)
(16) edge[] node[above,xshift=10pt,yshift=5pt]{$\left(\!\!\begin{array}{c}\blk\\ \blk\\ \overline{\textit{nbk}}\\ \blk\\ \textit{nbk}\end{array}\!\!\right)\!\Permitted$} (18)
(17) edge[] node[below,xshift=10pt,yshift=-5pt]{$\left(\!\!\begin{array}{c}\blk\\ \blk\\ \overline{\textit{nbk}}\\ \textit{nbk}\\ \blk\end{array}\!\!\right)\!\Permitted$} (18)
(18) edge[loop,looseness=5,in=315,out=45] node[right,xshift=3pt]{$\left(\!\!\begin{array}{c}\blk\\ \blk\\ \overline{\textit{nbk}}\\ \blk\\ \blk\end{array}\!\!\right)\!\Permitted$} (18)
(0) edge[] node[below,xshift=-25pt,yshift=10pt]{$\left(\!\!\begin{array}{c}\blk\\ \overline{\textit{qry}}\\ \textit{qry}\\ \blk\\ \blk\end{array}\!\!\right)\!\Necessary$} (19)
(19) edge[] node[above,xshift=-5pt,yshift=5pt]{$\left(\!\!\begin{array}{c}\blk\\ \blk\\ \overline{\textit{chk}}\\ \textit{chk}\\ \blk\end{array}\!\!\right)\!\Permitted$} (20)
(20) edge[] node[above,yshift=5pt]{$\left(\!\!\begin{array}{c}\blk\\ \blk\\ \textit{rsp}\\ \overline{\textit{rsp}}\\ \blk\end{array}\!\!\right)\!\Permitted$} (21)
(21) edge[] node[above,yshift=5pt]{$\left(\!\!\begin{array}{c}\blk\\ \blk\\ \overline{\textit{chk}}\\ \blk\\ \textit{chk}\end{array}\!\!\right)\!\Permitted$} (22)
(22) edge[] node[above,xshift=10pt,yshift=5pt]{$\left(\!\!\begin{array}{c}\blk\\ \blk\\ \textit{rsp}\\ \blk\\ \overline{\textit{rsp}}\end{array}\!\!\right)\!\Permitted$} (26)
(19) edge[] node[below,xshift=-5pt,yshift=-5pt]{$\left(\!\!\begin{array}{c}\blk\\ \blk\\ \overline{\textit{chk}}\\ \blk\\ \textit{chk}\end{array}\!\!\right)\!\Permitted$} (23)
(23) edge[] node[below,yshift=-5pt]{$\left(\!\!\begin{array}{c}\blk\\ \blk\\ \textit{rsp}\\ \blk\\ \overline{\textit{rsp}}\end{array}\!\!\right)\!\Permitted$} (24)
(24) edge[] node[below,yshift=-5pt]{$\left(\!\!\begin{array}{c}\blk\\ \blk\\ \overline{\textit{chk}}\\ \textit{chk}\\ \blk\end{array}\!\!\right)\!\Permitted$} (25)
(25) edge[] node[below,xshift=10pt,yshift=-5pt]{$\left(\!\!\begin{array}{c}\blk\\ \blk\\ \textit{rsp}\\ \overline{\textit{rsp}}\\ \blk\end{array}\!\!\right)\!\Permitted$} (26)
(26) edge[] node[above,yshift=5pt]{$\left(\!\!\begin{array}{c}\blk\\ \textit{bst}\\ \overline{\textit{bst}}\\ \blk\\ \blk\end{array}\!\!\right)\!\Permitted$} (27)
(27) edge[] node[above,xshift=-15pt,yshift=-5pt]{$\left(\!\!\begin{array}{c}\blk\\ \overline{\textit{ok}}\\ \textit{ok}\\ \blk\\ \blk\end{array}\!\!\right)\!\Permitted$} (28)
(28) edge[] node[above,xshift=-5pt,yshift=5pt]{$\left(\!\!\begin{array}{c}\blk\\ \blk\\ \overline{\textit{bk}}\\ \textit{bk}\\ \blk\end{array}\!\!\right)\!\Permitted$} (29)
(29) edge[] node[above,yshift=5pt]{$\left(\!\!\begin{array}{c}\blk\\ \blk\\ \overline{\textit{nbk}}\\ \blk\\ \textit{nbk}\end{array}\!\!\right)\!\Permitted$} (30)
(30) edge[loop,looseness=5,in=315,out=45] node[right,xshift=3pt]{$\left(\!\!\begin{array}{c}\blk\\ \blk\\ \overline{\textit{nbk}}\\ \blk\\ \blk\end{array}\!\!\right)\!\Permitted$} (30)
(28) edge[draw=green] node[below,xshift=-5pt,yshift=-5pt]{$\left(\!\!\begin{array}{c}\blk\\ \blk\\ \overline{\textit{bk}}\\ \blk\\ \textit{bk}\end{array}\!\!\right)\!\Necessary$} (31)
(31) edge[] node[above,yshift=5pt]{$\left(\!\!\begin{array}{c}\blk\\ \blk\\ \overline{\textit{nbk}}\\ \textit{nbk}\\ \blk\end{array}\!\!\right)\!\Permitted$} (32)
(32) edge[loop,looseness=5,in=315,out=45] node[right,xshift=3pt]{$\left(\!\!\begin{array}{c}\blk\\ \blk\\ \overline{\textit{nbk}}\\ \blk\\ \blk\end{array}\!\!\right)\!\Permitted$} (32)
(27) edge[] node[below,xshift=-15pt,yshift=-5pt]{$\left(\!\!\begin{array}{c}\blk\\ \overline{\textit{nok}}\\ \textit{nok}\\ \blk\\ \blk\end{array}\!\!\right)\!\Permitted$} (33)
(33) edge[] node[above,xshift=-5pt,yshift=5pt]{$\left(\!\!\begin{array}{c}\blk\\ \blk\\ \overline{\textit{nbk}}\\ \textit{nbk}\\ \blk\end{array}\!\!\right)\!\Permitted$} (34)
(33) edge[] node[below,xshift=-5pt,yshift=-5pt]{$\left(\!\!\begin{array}{c}\blk\\ \blk\\ \overline{\textit{nbk}}\\ \blk\\ \textit{nbk}\end{array}\!\!\right)\!\Permitted$} (35)
(34) edge[] node[above,xshift=10pt,yshift=5pt]{$\left(\!\!\begin{array}{c}\blk\\ \blk\\ \overline{\textit{nbk}}\\ \blk\\ \textit{nbk}\end{array}\!\!\right)\!\Permitted$} (36)
(35) edge[] node[below,xshift=10pt,yshift=-5pt]{$\left(\!\!\begin{array}{c}\blk\\ \blk\\ \overline{\textit{nbk}}\\ \textit{nbk}\\ \blk\end{array}\!\!\right)\!\Permitted$} (36)
(36) edge[loop,looseness=5,in=315,out=45] node[right,xshift=3pt]{$\left(\!\!\begin{array}{c}\blk\\ \blk\\ \overline{\textit{nbk}}\\ \blk\\ \blk\end{array}\!\!\right)\!\Permitted$} (36);
\end{tikzpicture}
}
\caption{\label{fig:orch}\color{black}{Orchestration of 
$\textsf{Client}\otimes \textsf{Client} \otimes \textsf{Broker}\otimes \textsf{Hotel} \otimes \textsf{PrivilegedHotel}$}}
\end{figure}

\modiff{From the initial (and final) state there are two possible evolutions: either one of the clients is served while the other one does not interact.
Without loss of generality, assume that the first client is served.
The orchestration continues with the broker enquiring the hotels (in both possible orders).
After these enquiries, the reached state is $\vec q = (c_1,c'_0,b_5,h_2,h'_2)$.
From $\vec q$, the broker sends the best offer received from one of the hotels to the client, and the client decides whether or not to accept this best offer.
The broker then communicates the selected choice to the hotels it interacted with.
Note that in the orchestration it is possible that the client does not book any room.}

\modiff{We now explain why the mpc is empty (cf.\ Example~\ref{ex:mpc}).
First, note that $\vec q$ must be traversed to reach a final successful state.
The composition $\mathcal A_1$ (which is not displayed for space limitations) contains the transition $t = \vec q\,\TRANSS{(\blk,\blk,\blk,\blk, \textit{bk})\Box}\,(c_1,c'_0,b_5,h_2,h'_3)$.
The state $\vec q$ is then forbidden, since its outgoing transition $t$ is uncontrollable and cannot be pruned. It follows that  all states traversed from the initial state to $\vec q$ would eventually become dangling during the mpc synthesis, and thus the mpc is empty.}

\modiff{On the converse, for the case of synthesising the orchestration, we have that $t$ is \emph{semi-controllable} and it is \emph{controllable} via $t' = (c_3,c'_0,b_7,h_2,h'_2)\,\TRANSS{(\blk,\blk,\overline{\textit{bk}},\blk, \textit{bk})\Box}\,(c_1,c'_0,b_8,h_2,h'_3)$.
Thus, $t$ is pruned by the orchestration synthesis algorithm.
Intuitively, in the orchestration there exists a trace in which the necessary request $\textit{bk}$ is matched.}
\end{exa}


\modiff{This example shows that the notion of semi-controllability is best suited for necessary requests of service contracts.
We argue that semi-controllability is not specific to the context of service contracts; rather it is independent of the used formalism and can be applied in other contexts as well.
Semi-controllability can be interpreted as the \lq existentially quantified\rq\ counterpart of the universally quantified notion of uncontrollability, originally stemming from Supervisory Control Theory, in much the same way that Computation Tree Logic allows existential quantification of paths that can only be universally quantified in Linear Temporal Logic.}

\modiff{However, in the next section we will see that the notion of semi-controllability is too relaxed for the case of choreography, which thus demands a revisited version.}

\subsection{On encoding semi-controllability.}
We now show, by means of an example adapted from~\cite{BBL19}, that the encoding of an automaton $\mathcal A$ with semi-controllable actions into an automaton $\mathcal A'$ without, such that the same synthesised orchestrations are obtained, results in an exponential blow-up of the state space. More precisely, the encoding is intended to preserve safety: the orchestration of $\mathcal A$ equals that of $\mathcal A'$.


\modiff{
The encoding is sketched in Fig.~\ref{fig:Aprime}. Intuitively, the encoded automaton $\mathcal A'$ is obtained by first applying the following construction to the automaton $\mathcal A$ from Fig.~\ref{fig:comp}~(right):
\begin{quote}
if the synchronisation on a specific semi-controllable action~$a$ occurs in $n$~different transitions in $\mathcal A$ (two in our example), then the encoding creates an automaton $\mathcal A'$ that is the union of $2^n - 1$ automata (three in our example), which are obtained by all possible combinations of pruning a subset of the $n$~semi-controllable transitions of~$\mathcal A$, minus the one in which all $n$~semi-controllable transitions are pruned;
\end{quote}
and then turning all semi-controllable transitions into uncontrollable transitions.
}

\begin{figure}
\centering\begin{tabular}{c}
\def\objectstyle{\scriptstyle}\def\labelstyle{\scriptstyle}
\xymatrix@C=.55cm@R=1cm{
*{} & *{} & *{} & *{} & *{} & *+[F] {(\cdot,\cdot)} \ar@{.>}@(l,u)[dlllll] \ar@{.>}[d] \ar@{.>}@(r,u)[drrrrr] \\ 
*+[F] {(\cdot,\cdot)} \ar@{.>}[rr]^{(\blk,\overline{b})\Permitted} & *{} & *+[F] {(\cdot,\cdot)} \ar@[red][drr]_*+<-1.25em>[@!-37]{~(a,\overline{a})\Boxurgent} & *{} & *{} & *+[F] {(\cdot,\cdot)} \ar@[red][drr]_*+<-1.25em>[@!-37]{~(a,\overline{a})\Boxurgent} \ar@{.>}[rr]^{(\blk,\overline{b})\Permitted} & *{} & *+[F] {(\cdot,\cdot)} \ar@[red][drr]_*+<-1.25em>[@!-37]{~(a,\overline{a})\Boxurgent} & *{} & *{} & *+[F] {(\cdot,\cdot)} \ar@[red][drr]_*+<-1.25em>[@!-37]{~(a,\overline{a})\Boxurgent} \ar@{.>}[rr]^{(\blk,\overline{b})\Permitted} & *{} & *+[F] {(\cdot,\cdot)} \\
*{} & *{} & *{} & *{} & *+[F] {(\cdot,\cdot)} & *{} & *{} & *+[F] {(\cdot,\cdot)} & *{} & *+[F] {(\cdot,\cdot)} & *{} & *{} & *+[F] {(\cdot,\cdot)} \\
}
\end{tabular}
\caption{\label{fig:Aprime}Automaton ${\mathcal{A}'}$ uses uncontrollable transitions to encode automaton ${\mathcal{A}}$ from Fig.~\ref{fig:comp}}
\end{figure}
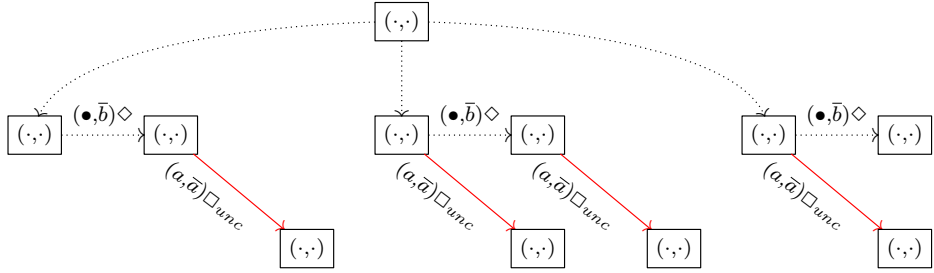

We now explain why, without knowing a priori the set of forbidden and successful states, it is impossible to provide a more efficient encoding and refer to~\cite[Theorems~3 and~4]{BBL19} for a formal account. Assume, by contradiction, that there exists an encoding that results in a \lq smaller\rq\ automaton $\mathcal A''$, in which one of the $2^n - 1$ combinations of pruned transitions (say, $P$) is discarded.
It then suffices to specify as a counterexample a property in $\mathcal A$ such that all source states of transitions in $P$ are forbidden and all target states of the remaining semi-controllable transitions are successful.
The synthesis of $\mathcal A$ against such a property would prune exactly the semi-controllable transitions in $P$.
However, in the synthesis of $\mathcal A''$ such an orchestration would not be present, a contradiction.

\section{Synthesis of Choreographies}\label{sect:choreographysynthesis}

In the previous section, we have seen that the orchestration of MSCA  is similar to a most permissive controller.
The orchestrator is however implicit, in the sense that its interactions with the principals are hidden.
Basically, one could assume that before interacting, each principal expects a message from the orchestrator and answers with an acknowledgement after the interaction terminates.
The main intuition behind switching from an orchestrated to a choreographic coordination of contracts is that there is no longer the need for such \lq hidden\rq\ interactions.
Ideally, the principals moving autonomously are able to accomplish the behaviour foreseen by the synthesis, which in this case acts as a global type.
Differently from the traditional choreographic approach, where the starting point is a global type, in MSCA the global type is synthesised  automatically.

The requirements for ensuring that the synthesised automaton is a (form of) choreography were
 studied in~\cite{BDFT15,Lange:2015:CMG:2676726.2676964}.
Roughly, they amount to the so-called \emph{branching condition} requiring that principals perform their offers/outputs independently of the other principals in the composition.
To formalise this notion, we let $\textit{snd}(\vec a) = i$ when $\vec a$ is a match action or an offer action and $\ithel{\vec{a}}{ i} \in \Oset$.

\begin{defi}[Branching condition~\cite{BDFT15}]\label{def:branchingcondition}
An MSCA $\mathcal A$ satisfies the \emph{branching condition} if and only if the following holds for each pair of states $\vec q_1$, $\vec q_2$ reachable in $\mathcal A$:
\[
\forall \text{$\vec{a}$ match action } .\ (\vec{q}_1 \TRANSS{\vec{a}} \wedge \textit{snd}(\vec{a})=  i \wedge \ithel{\vec{q_1}}{i}=\ithel{\vec{q_2}}{i}) \text{ implies }  \vec{q}_2 \TRANSS{\vec{a}} .
\]
\end{defi}

The branching condition is related to a phenomenon known as \lq state sharing\rq\ in other coordination models (cf., e.g.,~\cite{BCHK17}) according to which system components can influence potential synchronisations through their local (component) states
even if they are not involved in the actual global (system) transition.

In~\cite{BDFT15}, it is proved that the synthesised automaton  corresponds to a well-behaving choreography if and only if it satisfies the branching condition and is strongly safe. Notably, in case the two conditions are not satisfied, that paper does not provide any algorithm for automatically synthesising a choreography; rather, the contracts have to be manually amended. Instead, in the remainder of this section, we introduce a novel algorithm for automatically synthesising a well-behaving choreography.
Note that, differently from the orchestration and the controller synthesis, in this case there could be more than one possible choreography (cf.\ Example~\ref{ex:choreography}).

The property to be enforced during the synthesis is strong agreement: all offers and requests have to be matched, because  all messages have to be read (i.e.\ offers matched).
Moreover, in the case of choreography, service contract requests are always permitted whereas service contract offers can be necessary.
That is, the roles of service requests and offers are swapped with respect to the case of orchestration.

In principle, the synthesis could trivially introduce a coordinator component and its interactions to coordinate the principals. However, this would reduce the choreography to a centralised coordination of contracts.
To prevent this, the synthesis can only remove and never add behaviour.
Hence, a choreography can only be synthesised if 
all principals are capable of interacting on their own without resorting to a central coordinator.

Similarly to orchestration synthesis, indicating transitions as either controllable or uncontrollable does not suffice for synthesising a choreography. Moreover, the notion of semi-controllability introduced for the orchestration case does not suffice for expressing necessary offers.
Indeed, orchestration synthesis does not ensure the branching condition to be satisfied by the synthesised automaton, as the following example shows.

\begin{exa}\label{ex:semicontrollabilityandchoregraphy}
In Fig.~\ref{fig:diff}, a fragment of a service composition is shown. Two global states are depicted, and in both the first service, say $\Alice$, is in its initial local state (say, $q_0$).
$\Alice$ performs an output (i.e.\ offer) $\overline a$ that can be directed to either $\Bob$ (second service) or $\Carol$ (third service), from the initial global state, or only to $\Bob$ from the other state.
It is possible to reach either a successful ({\color{green!60!black}$\cmark$}) or a bad ({\color{red}$\xmark$}) state, left unspecified for the moment.
Notably, the output of $\Alice$ is neither controllable, nor uncontrollable, nor semi-controllable by the synthesis.

\begin{figure}[t]
\centering\begin{tabular}{c}
\def\objectstyle{\scriptstyle}\def\labelstyle{\scriptstyle}
\xymatrix@C=1.2cm@R=1cm{
*{} & *+[F] {\text{\txt{$(q_0,\cdot,\cdot)$}}} \ar[dl]!U^*+<-.5em>[@!29]{(\overline a,a,\blk)} \ar[dr]!U_*+<-.5em>[@!-29]{(\overline a,\blk,a)} \ar[r]^{\ldots} & *+[F] {\text{\txt{$(q_0,\cdot,\cdot)$}}} \ar[dr]!U_*+<-.5em>[@!-29]{(\overline a,a,\blk)} & {}\\ 
{\phantom{\color{green!60!black}\text{\normalsize$\cmark$}\color{red}\text{\normalsize$\xmark$}}} & *{} & {\color{red}\text{\normalsize$\phantom{-}\xmark$}} & {\color{green!60!black}\text{\normalsize$\phantom{-}\cmark$}}
}
\end{tabular}
\caption{\label{fig:diff}Fragment of a possible service composition}
\end{figure}
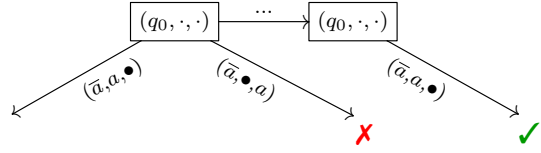

Now assume that the $\overline a$ is controllable and from the initial global state both interactions eventually lead to a bad state ({\color{red}$\xmark$}).
In this case, those transitions are pruned by the synthesis, and the resulting automaton is erroneously approved.
Indeed, $\Alice$ has no mean to understand when her output $\overline a$ is enabled, because she has not changed state.
The branching condition, which is necessary for obtaining a well-behaving choreography, would be violated.
Note that this would happen also if $\overline a$ were semi-controllable.
In fact, to satisfy the branching condition, the synthesis should remove all outputs $\overline a$.

Conversely, assume that the $\overline a$ is uncontrollable and that it is possible from the initial global state to reach a successful state ({\color{green!60!black}$\cmark$}) if the message $\overline a$ is received by $\Bob$.
In this case, it would not be possible to prune the transition from the initial state leading to {\color{red}$\xmark$}, because it is also uncontrollable.
The synthesis would thus be empty, an erroneous rejection, because a choreography exists in which $\Alice$ autonomously interacts with $\Bob$.
\end{exa}

In conclusion, a necessary action is rendered neither as uncontrollable nor as semi-controllable, and permitted actions require extra pruning operations during the synthesis.
A novel notion of semi-controllability for a necessary action is required, which is weaker than uncontrollable but stronger than the semi-controllable notion used in the synthesis of orchestration.

Basically, for the choreography synthesis, a (semi-controllable) necessary transition $t = (\vec{q} \TRANS{\vec a_1}) \in T^\Box$ is detected to be uncontrollable
if and only if no necessary transition $t'=(\vec{q} \TRANS{\vec a_2}) \in T^\Box$ exists from the same source state such that in both $t$ and $t'$ the same offer is provided by the same principal, but possibly with different receivers. We now define this formally.

\begin{defi}\label{def:choreographycontrollability}
Let $\mathcal A$ be an MSCA and let $t=(\vec q, \vec a_1, \vec q_1\!') \in T_{\mathcal A}$. Then:
\begin{itemize}
\item[-] if $\vec a_1$ is an action on $a\in\permittedset$, then $t$ is \emph{controllable} (in $\mathcal A$);
\item[-] if $\vec a_1$ is an offer or match on $a\in\necessaryset$, then $t$ is \emph{semi-controllable} (in $\mathcal A$).
\end{itemize}
Moreover, given $\mathcal A'\subseteq\mathcal A$, if $t$ is semi-controllable and
$\exists\,t' = (\vec q, \vec a_2, \vec q_2\!')\in T^\Box_{{\mathcal A}'}$ 
such that $\vec a_2$ is a match, $\vec q, \vec q_2\!' \not\in \textit{Dangling}(\mathcal A')$, and $\ithel{\vec{a}_1{}}{i} = \ithel{\vec{a}_2{}}{i}$ where $i = snd(\vec a_1)$, then $t$ is \emph{controllable} in $\mathcal A'$ (via $t'$);
otherwise, $t$ is \emph{uncontrollable} in $\mathcal A'$.
\end{defi}

Hence, again a necessary transition is a particular type of transition that switches from being controllable to uncontrollable in case a condition on the global automaton is not met. Note that this condition is stronger than the one required for the case of orchestration (semi-controllability), because for the case of choreography transitions $t$ and $t'$ in Definition~\ref{def:choreographycontrollability} share the source state. Moreover, also in this case it can be shown that the encoding of this type of semi-controllable transition into an uncontrollable one would result in an exponential growth of the state space of the model.

Similarly to the orchestration synthesis in Definition~\ref{def:synthesisorchestration}, when a semi-controllable transition previously removed by the synthesis switches from controllable to uncontrollable, its source state is detected to be bad.
Apart from the different notion of semi-controllability, another difference with respect to the orchestration synthesis is that the transitions violating the branching condition must also be removed.
\modiff{Depending on which transitions violating the branching condition are pruned at a certain iteration, different choreographies can be obtained (cf.\ Example~\ref{ex:choreography}).
Indeed, a maximal choreography is not always guaranteed to exist (as is the case for the running example).
A concrete implementation should fix the criterion under which transitions are selected for the set $\hat{T}_{\mathcal{K}_{i},R_{i}}$ (cf.\ Definition~\ref{def:synthesischoreography}).}

Finally, according to the property of strong agreement, both request and offer transitions are forbidden.
The formalisation is provided next.

\begin{defi}[MSCA choreography synthesis]\label{def:synthesischoreography}
Let $\mathcal{A}$ be an MSCA, and let $\mathcal{K}_{0} = \mathcal{A}$ and $R_{0} = \textit{Dangling}(\mathcal{K}_{0})$.
We let a \emph{choreography synthesis function} $f_c: \textit{MSCA} \times 2^Q \rightarrow \textit{MSCA} \times 2^Q$ be defined as follows:
\vspace{0.5em}
\begin{center}
\def\arraystretch{1.2}
\begin{tabular}{c@{\hskip 0.15in}r@{\hskip 0.05in}c@{\hskip 0.05in}l}
\multicolumn{4}{l}{
$f_c(\mathcal{K}_{i-1},R_{i-1})=(\mathcal{K}_{i},R_{i}), 
\text{ with }$}\\
& $T_{\mathcal{K}_{i}}$ & \!=\! & $T_{\mathcal{K}_{i-1}}\!\setminus
(\{\,(\vec{q} \TRANS{} \vec q \, ' ) = t \in T_{\mathcal K_{i-1}} \mid \vec q \, '\in R_{i-1} \vee t \textit{ is a request or offer}\,\}  \cup \hat{T}_{\mathcal{K}_{i-1},R_{i-1}})$\\
& $R_{i}$ & \!=\! & $R_{i-1} \cup \{\,\vec q \mid (\vec q \TRANS{})\in T_{\mathcal A} \textit{ is uncontrollable in } \mathcal{K}_{i} \,\} \cup \textit{Dangling}(\mathcal{K}_{i})$\\
\end{tabular}
\end{center}
%
%
%
%
%
%
\modiff{where, at each iteration $i$,
\[\hat{T}_{\mathcal{K}_{i},R_{i}} \subseteq T_{bc} = \{\,(\vec q_1 \TRANS{\vec a}) \in T_{\mathcal K_{i}} \mid \exists\,\vec{q_2} : (snd(\vec{a}) = j \wedge \ithel{\vec{q_1}}{j} = \ithel{\vec{q_2}}{j}) \wedge (\vec q_2 \TRANS{\vec a}) \not\in T_{\mathcal K_{i}} \wedge \vec{q_1},\vec{q_2} \not\in R_{i}\,\}\]
and whenever
$f_c(\mathcal{K}_{i},R_{i}) = (\mathcal{K}_{i},R_{i})$
then
$T_{bc} = \emptyset$.}
%
%
\end{defi}

\begin{restatable}[MSCA choreography]{thm}{thesynthesischoreography}\label{the:synthesischoreography}
A choreography synthesis function $f_c$ is monotone on the cpo $(\textit{MSCA} \times 2^Q, \leq)$ and its \emph{least fixed point} is:
\[
(\mathcal K_{s}, R_{s}) = \sup (\{\,f_c^n(\mathcal K_{0}, R_{0})\mid n \in \mathbb N\,\})
\]

\noindent
A \emph{choreography} $\mathcal K_{\mathcal{A}}$ of $\mathcal A$ is:
\[\mathcal K_{\mathcal{A}} =
\left\{\begin{array}{l@{\qquad}l}
          \langle~\rangle & \mbox{if } \vec q_0 \in R_{s} \\
          \langle Q\setminus R_{s}, \vec{q_0}, \permittedset,\necessaryset, A^{o}, T_{\mathcal K_{s}}\!\setminus T', F\setminus R_{s} \rangle & \mbox{otherwise}
\end{array}\right.\]
where $T' = \{\,t = \vec q\; \TRANSS{}\in\mathcal K_{s} \mid t \text{ is controllable in } \mathcal K_{s},\ \vec q\in R_{s}\,\}$.

\medskip
Moreover, $\mathcal K_{\mathcal{A}}$ satisfies the branching condition.
\end{restatable}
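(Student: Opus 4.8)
The plan is to prove the three claims in turn, borrowing the fixed-point machinery of Theorems~\ref{the:mpc} and~\ref{the:def-orchestration} for the first two and concentrating on the branching condition, which is the genuinely new part.

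First I would establish that $f_c$ is monotone on $(\textit{MSCA}\times 2^Q,\leq)$ by the same kind of case analysis already used for $f$ and $f_o$: in one application, transitions are only removed and states are only added to the bad set, so a pointwise-smaller sub-automaton paired with a pointwise-larger bad set can only delete more transitions and declare more states bad. Two ingredients are new with respect to $f_o$. One is the revised controllability of Definition~\ref{def:choreographycontrollability}: here I would observe that being \emph{uncontrollable} is anti-monotone in the sub-automaton, since passing to a smaller $\mathcal A''\subseteq\mathcal A'$ both removes candidate match-witnesses $t'$ and enlarges the set of dangling states, so a semi-controllable transition that is uncontrollable in $\mathcal A'$ remains uncontrollable in $\mathcal A''$; this is what keeps the $R_i$-component monotone. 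The other is the extra pruning of the branching-violating transitions collected in $\hat T$, which stays compatible with monotonicity because $\hat T\subseteq T_{bc}$ and $T_{bc}$ itself can only shrink as transitions disappear and states become bad. Termination, and hence the identification of $\sup\{\,f_c^n(\mathcal K_0,R_0)\mid n\in\mathbb N\,\}$ with the least fixed point, then follows from finiteness of MSCA exactly as before: $T_{\mathcal K_i}$ is non-increasing and $R_i$ non-decreasing inside finite sets, so the iteration stabilises at a fixed point, and monotonicity makes it the least one.

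The heart of the argument is the branching condition, and this is where I would do real work. If $\vec q_0\in R_s$ then $\mathcal K_{\mathcal A}=\langle~\rangle$ and there is nothing to show, so assume otherwise. The crucial fact is that at the fixed point $(\mathcal K_s,R_s)$ two things hold: the side condition in Definition~\ref{def:synthesischoreography} forces $T_{bc}=\emptyset$ (computed relative to $\mathcal K_s$ and $R_s$), and the equation $f_c(\mathcal K_s,R_s)=(\mathcal K_s,R_s)$ forces $T_{\mathcal K_s}$ to contain neither request/offer transitions nor transitions whose target lies in $R_s$. Now take reachable states $\vec q_1,\vec q_2$ of $\mathcal K_{\mathcal A}$, so $\vec q_1,\vec q_2\in Q\setminus R_s$; a match action $\vec a$ with $\textit{snd}(\vec a)=i$ and $\ithel{\vec{q_1}}{i}=\ithel{\vec{q_2}}{i}$; and a transition $t_1=(\vec q_1,\vec a,\vec q_1')$ of $\mathcal K_{\mathcal A}$, i.e.\ $t_1\in T_{\mathcal K_s}\setminus T'\subseteq T_{\mathcal K_s}$. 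If $T_{\mathcal K_s}$ contained no transition from $\vec q_2$ labelled $\vec a$, then $t_1$ together with $\vec q_2$ would witness $t_1\in T_{bc}$, contradicting $T_{bc}=\emptyset$; hence there is $t_2=(\vec q_2,\vec a,\vec q_2')\in T_{\mathcal K_s}$. This $t_2$ survives into $\mathcal K_{\mathcal A}$: it is not in $T'$ because its source $\vec q_2$ is not in $R_s$, and its target $\vec q_2'$ is not in $R_s$ because at the fixed point $T_{\mathcal K_s}$ has no transition with target in $R_s$. Thus $t_2\in T_{\mathcal K_s}\setminus T'$ is a transition of $\mathcal K_{\mathcal A}$ with both endpoints in $Q\setminus R_s$, which gives $\vec q_2\TRANSS{\vec a}$ in $\mathcal K_{\mathcal A}$, as required by Definition~\ref{def:branchingcondition}.

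The hard part will not be this last argument — which is short once $T_{bc}=\emptyset$ at the fixed point is in hand — but rather making the monotonicity of $f_c$ fully rigorous in the presence of the nondeterministic $\hat T$-pruning: one must argue that a branching-violating transition deleted at one configuration cannot reappear at a pointwise-smaller one, for which it is essential that $\hat T$ is always taken as a subset of $T_{bc}$ and that the side condition prevents the iteration from stopping while $T_{bc}$ is non-empty. Once monotonicity is secured, everything else reduces to the standard finite Kleene-iteration reasoning already carried out for the mpc and the orchestration synthesis, plus the self-contained verification above that the terminal automaton respects the branching condition.
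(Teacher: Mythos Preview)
Your proposal is correct, and your branching-condition argument is essentially the paper's own (the paper simply observes that at the fixed point the set of transitions violating the branching condition is empty ``by definition''; you spell out the same observation carefully, including why a witness $t_2$ survives the removal of $T'$). The overall architecture, however, differs in emphasis.

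You invest most of your effort in monotonicity of $f_c$, worrying in particular about the anti-monotonicity of uncontrollability under Definition~\ref{def:choreographycontrollability} and about the nondeterministic $\hat T$-pruning. The paper does not actually carry out this analysis: it argues termination directly from finiteness (``at each iteration either some transition is pruned or a state becomes forbidden, and both sets of transitions and states are finite''), which yields a fixed point but does not really establish monotonicity as stated in the theorem. In this respect your treatment is more rigorous than the paper's.

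Conversely, the paper proves more about the output automaton than you do: it checks that $\mathcal K_{\mathcal A}$ is (i)~non-blocking, (ii)~controllable, (iii)~strongly safe, and (iv)~satisfies the branching condition. Only~(iv) is explicit in the theorem statement, and you cover only~(iv); the paper treats (i)--(iii) as part of justifying that the construction indeed yields a choreography. If you want to match the paper's coverage you should add brief paragraphs for (i)--(iii): dangling states are removed so a final state is always reachable; sources of uncontrollable transitions land in $R_s$; request and offer transitions are pruned so strong safety holds. Each is one line once the fixed-point equalities are in hand.
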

\begin{proof}
The algorithm terminates because at each iteration either some transition is pruned or a state becomes forbidden, and both sets of transitions and states are finite.
We now prove that the synthesised automaton is (i)~non-blocking, (ii)~controllable, 
(iii)~strongly safe, and (iv)~satisfies the branching condition.
In case $\mathcal K_{\mathcal{A}} = \langle~\rangle $, the properties hold trivially, thus we assume that the synthesised controller is non-empty.

For~(i), trivially all dangling states are pruned, so it is always possible to reach a final state. Similarly, bad states (i.e.\ states in the set $R_s$) are never traversed by construction, i.e.\ transitions with target in $R_s$ are pruned.

For~(ii), by construction all uncontrollable transitions have source state in $R_s$, and thus are not reachable. Note that by Definition~\ref{def:choreographycontrollability} uncontrollable transitions are  necessary requirements that are not met and thus are always removed by the synthesis.
%

For~(iii), all transitions eventually violating strong safety are requests or offers and are pruned by the synthesis.

For~(iv), the transitions violating the branching condition are
\[\{\, (\vec q_1 \TRANS{\vec a}) = t \in T_{\mathcal K_{\mathcal A}} \mid \exists\,\vec{q_2} : (snd(\vec{a}) = i \wedge \ithel{\vec{q_1}}{i} = \ithel{\vec{q_2}}{i}) \wedge (\vec q_2 \TRANS{\vec a}) \not\in  T_{\mathcal K_{\mathcal A}} \,\}\]
and these are pruned by definition.
\end{proof}


Returning to Example~\ref{ex:semicontrollabilityandchoregraphy}, the erroneously accepted case is removed because, during the synthesis, the operation of pruning the transitions leading to  bad states causes the removal of the remaining transition. Thus, the obtained choreography is empty.
Similarly, the erroneously rejected case is not possible because, assuming that the output from the initial state is necessary, this necessary action is not rendered as uncontrollable as long as the output is matched by some other principal from the same initial state.

\modiff{We now estimate also the complexity of the choreography synthesis. With respect to the orchestration synthesis, in the choreography synthesis at each iteration a transition violating the branching condition can be removed. In the worst case, deciding if a transition violates the branching condition requires to check all other transitions.
Hence, an upper bound of the procedure for selecting a transition violating the branching condition is $\mathcal O(|T|^2)$.
Note that, in the unlikely event that all transitions share the same source state, the upper-bound complexity for computing the set of uncontrollable transitions is the same as in the case of orchestration synthesis.
Thus, a first upper bound of the complexity of the choreography synthesis algorithm is $\mathcal{O}((|T|+|Q|)\times|Q|^2\times|T|^4)$.
We can refine this first approximation to $\mathcal{O}((|T|+|Q|)\times|Q|^2)$. Indeed, similar to the case of orchestration synthesis, at each iteration in a single traversal of the automaton it is possible to compute the set of dangling states, the set of uncontrollable transitions, and the set of transitions violating the branching condition.
Also in this case, as for the other syntheses, our complexity estimation refers to the abstract specification of the algorithm resulting from Definition~\ref{def:synthesischoreography} and Theorem~\ref{the:synthesischoreography}. The  implementation of the algorithm could be optimised to perform even better.}


\begin{exa}\label{ex:choreography}
\modiff{We once more continue the running example by discussing the choreography synthesis of the running example for the composition
\[
\mathcal A_2 = \textsf{Client}\otimes \textsf{PrivilegedClient} \otimes \textsf{Broker}\otimes \textsf{Hotel} \otimes \textsf{Hotel}
\]
The choreography of $\mathcal A_2$ is depicted in Fig.~\ref{fig:chor} and the time needed to compute $\mathcal A_2$ and its choreography by using FMCAT is reported in Table~\ref{tab:results}.
Note that differently from $\mathcal A_1$ in Example~\ref{ex:mpc} and Example~\ref{ex:orchestration}, in $\mathcal A_2$ there is a privileged client and no privileged hotel. Indeed, \texttt{PrivilegedHotel} is not a valid contract for the choreography case.
The choreography does not need the overhead of interactions with the orchestrator and, most importantly, the synthesis of choreography does not introduce any additional behaviour.
Indeed, with due adjustment of necessary transitions of \texttt{PrivilegedClient} and \texttt{Hotel}, the choreography could be considered a sub-automaton of the orchestration.}

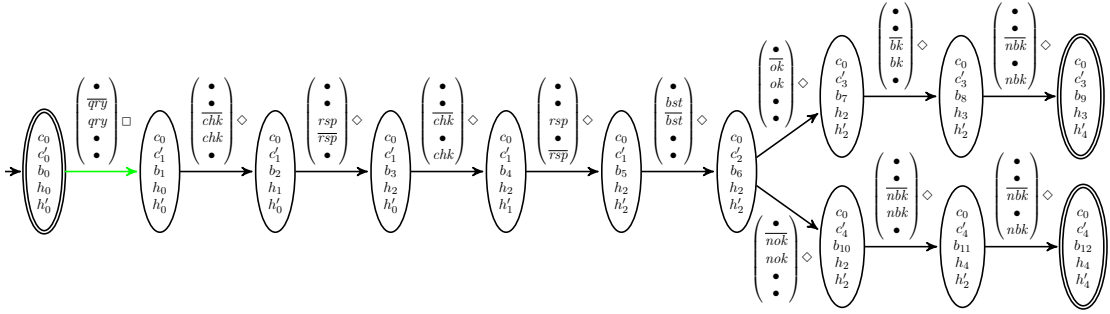
\begin{figure}
\centering
\begin{tikzpicture}[->,scale=.5,>=stealth',shorten >=1pt,auto,node distance=1cm,
        semithick, every node/.style={scale=0.5},initial text={},inner sep=0pt, minimum size=0pt]
\tikzstyle{every state}=[fill=white,draw=black,text=black,ellipse]

\node[initial,state,accepting] (0) {$\begin{array}{c}c_{0}\\ c'_{0}\\ b_{0}\\ h_{0}\\ h'_{0}\end{array}$};
\node[state] (1) [right = of 0] {$\begin{array}{c}c_{0}\\ c'_{1}\\ b_{1}\\ h_{0}\\ h'_{0}\end{array}$};
\node[state] (2) [right = of 1] {$\begin{array}{c}c_{0}\\ c'_{1}\\ b_{2}\\ h_{1}\\ h'_{0}\end{array}$};
\node[state] (3) [right = of 2] {$\begin{array}{c}c_{0}\\ c'_{1}\\ b_{3}\\ h_{2}\\ h'_{0}\end{array}$};
\node[state] (4) [right = of 3] {$\begin{array}{c}c_{0}\\ c'_{1}\\ b_{4}\\ h_{2}\\ h'_{1}\end{array}$};
\node[state] (5) [right = of 4] {$\begin{array}{c}c_{0}\\ c'_{1}\\ b_{5}\\ h_{2}\\ h'_{2}\end{array}$};
\node[state] (6) [right = of 5] {$\begin{array}{c}c_{0}\\ c'_{2}\\ b_{6}\\ h_{2}\\ h'_{2}\end{array}$};
\node[state] (7) [above right = -.15cm and 1cm of 6] {$\begin{array}{c}\,c_{0}\,\\ c'_{3}\\ b_{7}\\ h_{2}\\ h'_{2}\end{array}$};
\node[state] (8) [right = of 7] {$\begin{array}{c}\,c_{0}\,\\ c'_{3}\\ b_{8}\\ h_{3}\\ h'_{2}\end{array}$};
\node[state,accepting] (9) [right = of 8] {$\begin{array}{c}\,c_{0}\,\\ c'_{3}\\ b_{9}\\ h_{3}\\ h'_{4}\end{array}$};
\node[state] (10) [below right = -.15cm and 1cm of 6] {$\begin{array}{c}c_{0}\\ c'_{4}\\ b_{10}\\ h_{2}\\ h'_{2}\end{array}$};
\node[state] (11) [right = of 10] {$\begin{array}{c}c_{0}\\ c'_{4}\\ b_{11}\\ h_{4}\\ h'_{2}\end{array}$};
\node[state,accepting] (12) [right = of 11] {$\begin{array}{c}c_{0}\\ c'_{4}\\ b_{12}\\ h_{4}\\ h'_{4}\end{array}$};

\path
(0) edge[draw=green] node[above,yshift=5pt]{$\left(\!\!\begin{array}{c}\blk\\ \overline{\textit{qry}}\\ \textit{qry}\\ \blk\\ \blk\end{array}\!\!\right)\!\Necessary$} (1)
(1) edge[] node[above,yshift=5pt]{$\left(\!\!\begin{array}{c}\blk\\ \blk\\ \overline{\textit{chk}}\\ \textit{chk}\\ \blk\end{array}\!\!\right)\!\Permitted$} (2)
(2) edge[] node[above,yshift=5pt]{$\left(\!\!\begin{array}{c}\blk\\ \blk\\ \textit{rsp}\\ \overline{\textit{rsp}}\\ \blk\end{array}\!\!\right)\!\Permitted$} (3)
(3) edge[] node[above,yshift=5pt]{$\left(\!\!\begin{array}{c}\blk\\ \blk\\ \overline{\textit{chk}}\\ \blk\\ \textit{chk}\end{array}\!\!\right)\!\Permitted$} (4)
(4) edge[] node[above,yshift=5pt]{$\left(\!\!\begin{array}{c}\blk\\ \blk\\ \textit{rsp}\\ \blk\\ \overline{\textit{rsp}}\end{array}\!\!\right)\!\Permitted$} (5)
(5) edge[] node[above,yshift=5pt]{$\left(\!\!\begin{array}{c}\blk\\ \textit{bst}\\ \overline{\textit{bst}}\\ \blk\\ \blk\end{array}\!\!\right)\!\Permitted$} (6)
(6) edge[] node[above,xshift=-5pt,yshift=5pt]{$\left(\!\!\begin{array}{c}\blk\\ \overline{\textit{ok}}\\ \textit{ok}\\ \blk\\ \blk\end{array}\!\!\right)\!\Permitted$} (7)
(6) edge[] node[below,xshift=-5pt,yshift=-5pt]{$\left(\!\!\begin{array}{c}\blk\\ \overline{\textit{nok}}\\ \textit{nok}\\ \blk\\ \blk\end{array}\!\!\right)\!\Permitted$} (10)
(7) edge[] node[above,yshift=5pt]{$\left(\!\!\begin{array}{c}\blk\\ \blk\\ \overline{\textit{bk}}\\ \textit{bk}\\ \blk\end{array}\!\!\right)\!\Permitted$} (8)
(8) edge[] node[above,yshift=5pt]{$\left(\!\!\begin{array}{c}\blk\\ \blk\\ \overline{\textit{nbk}}\\ \blk\\ \textit{nbk}\end{array}\!\!\right)\!\Permitted$} (9)
(10) edge[] node[above,yshift=5pt]{$\left(\!\!\begin{array}{c}\blk\\ \blk\\ \overline{\textit{nbk}}\\ \textit{nbk}\\ \blk\end{array}\!\!\right)\!\Permitted$} (11)
(11) edge[] node[above,yshift=5pt]{$\left(\!\!\begin{array}{c}\blk\\ \blk\\ \overline{\textit{nbk}}\\ \blk\\ \textit{nbk}\end{array}\!\!\right)\!\Permitted$} (12);
\end{tikzpicture}
\caption{\label{fig:chor}\modiff{Choreography of 
$\textsf{Client}\otimes \textsf{PrivilegedClient} \otimes \textsf{Broker}\otimes \textsf{Hotel} \otimes \textsf{Hotel}$}}
\end{figure}


\begin{table}[t]
\centering
\def\arraystretch{1.2}
\color{black}
\begin{tabular}{p{5mm} | p{17.5mm} p{10mm} | p{17mm}  p{9mm} | p{19mm}  p{10mm} | p{19mm}  p{10mm}} 
& \shortstack[l]{\footnotesize Num.\ states \\ \footnotesize composition}
& \shortstack[l]{\footnotesize Time \\ \footnotesize (ms)}
& \shortstack[l]{\footnotesize Num.\ states \\ \footnotesize mpc \vphantom{ti}}
& \shortstack[l]{\footnotesize Time \\ \footnotesize (ms)}
& \shortstack[l]{\footnotesize Num.\ states \\ \footnotesize orchestration}
& \shortstack[l]{\footnotesize Time \\ \footnotesize (ms)}
& \shortstack[l]{\footnotesize Num.\ states \\ \footnotesize choreography}
& \shortstack[l]{\footnotesize Time \\ \footnotesize (ms)}
\\ \noalign{\hrule height 1pt}
$\mathcal A_1$ & \small 2934 & \small 65594 & \small 1 & \small 4070 & \small 37 & \small 715216 & \hspace{0.05mm}\small-- & \hspace{0.05mm}\small-- \\ 
$\mathcal A_2$ & \small 2934 & \small 66243 & \hspace{0.05mm}\small-- & \hspace{0.05mm}\small-- & \hspace{0.05mm}\small-- & \hspace{0.05mm}\small-- & \small 13 & \small 459311 
\end{tabular}
\def\arraystretch{1}
\caption[protect]{\label{tab:results}\color{black}{Results of computing the compositions $\mathcal A_1$ and $\mathcal A_2$ and their syntheses.}\footnotemark}
\end{table}

\footnotetext{\modiff{The evaluation was carried out on a machine with Processor Intel(R) Core(TM) i7-8500Y CPU at 1.50\,GHz, 1601\,Mhz, 2~Core(s), 4~Logical Processor(s) with 16\,GB of RAM, running 64-bit Windows~10.}} 

\modiff{We now use the example to discuss the differences between orchestration and choreography, and in particular the requirement that the \emph{branching condition} is satisfied.
In the orchestration, from the initial state either one of the two clients can interact. This decision is internally taken by the orchestrator (whose communications are abstracted away in the orchestration).
On the converse, in the choreography only the \texttt{PrivilegedClient} is allowed to interact.
This is because the clients are not able to decide on their own which one of them should start the interactions.
This can be explained as follows.
If both clients were allowed to interact, a deadlock could be reached upon the following steps. Initially, \texttt{PrivilegedClient} offers $\overline{\textit{qry}}$. Afterwards, for the interactions to continue such offer must be received by some principal (i.e.\ the underlining choreographed model is synchronous~\cite{BDFT15}).
In this case, $\texttt{Broker}$ receives the offer $\textit{qry}$.
At this point, \texttt{Client} is allowed to offer its $\overline{\textit{qry}}$ message.
The interactions are now deadlocked, because $\texttt{Broker}$  cannot receive such message, nor can any other contract.
This is an example of violation of the {branching condition}. Consider the initial state $\vec q_0 = (c_0,c'_0,b_0,h_0,h'_0)$ and state $\vec q_1 = (c_0,c'_1,b_1,h_0,h'_0)$.
In the orchestration, the {branching condition} is violated because from state $\vec q_0$ the match $(\overline{\textit{qry}}, \blk, \textit{qry}, \blk, \blk)$ is allowed, while it is not in state $\vec q_1$, and in both states $\texttt{Client}$  is in $c_0$.
During the choreography synthesis, the match $(\overline{\textit{qry}}, \blk, \textit{qry}, \blk, \blk)$ from state $\vec q_0$ is pruned.
Likewise, in the choreography the broker enquires the hotels in a fixed order, whereas in the orchestration all possible orders are allowed, or else the branching condition would be violated.}

\modiff{Note that an alternative choreography can be obtained by swapping the order in which the hotels are enquired by the broker.
Indeed, from state $\vec q_1$ the orchestration allows both matches $(\blk,\blk,\overline{\textit{chk}},\textit{chk},\blk)$ and
$(\blk,\blk,\overline{\textit{chk}}, \blk, \textit{chk})$. During the choreography synthesis, both these outgoing matches are violating the branching condition. By pruning one of them, the other is automatically amended, because the states causing violation of the branching condition become dangling. In particular, the synthesis prunes the transition $(\blk,\blk,\overline{\textit{chk}}, \blk, \textit{chk})$ in favour of $(\blk,\blk,\overline{\textit{chk}},\textit{chk},\blk)$.}

\modiff{Finally, concerning semi-controllability, note that it is not possible to have a choreography in which \texttt{PrivilegedClient} is not served in favour of \texttt{Client}, because the $\overline{\textit{qry}}$ offer of \texttt{PrivilegedClient} is \emph{necessary} and thus must be matched.}
\end{exa}




\section{Abstract Synthesis}\label{sect:discussion}

In Section~\ref{sect:background}, Section~\ref{sect:orchestrationsynthesis}, and Section~\ref{sect:choreographysynthesis}, we have presented three slightly different synthesis algorithms, and in the previous section we have illustrated their differences.
As said before, to bridge the gap between standard synthesis and  orchestration and choreography syntheses, the controllable and uncontrollable actions from SCT are related to permitted and necessary modalities, respectively, of MSCA\@.

The main intuition for this is that the SCT assumption of an unpredictable environment responsible for the uncontrollable transitions is not realistic in the case of coordination of services whose behaviour is known and observable.
As a result, necessary actions are not in correspondence with uncontrollable actions, but rather require the introduction of a milder notion of controllability.
The condition under which a controllable transition becomes uncontrollable varies depending on the particular synthesis algorithm (orchestration or choreography).
Conversely, in the standard mpc synthesis such information is local, i.e.\ a transition is declared to be uncontrollable.

In this section, we discuss an abstract synthesis algorithm that generalises the previous algorithms by abstracting away the conditions under which a transition is pruned or a state is deemed bad, thus encapsulating and extrapolating the notion of controllability and safety.
These two conditions, called \emph{pruning predicate} ($\phi_p$) and \emph{forbidden predicate} ($\phi_f$) are parameters to be instantiated by the corresponding instance of the synthesis algorithm (e.g.\ orchestration or choreography).
Predicate~$\phi_p$ is used for selecting the transitions to be pruned. Depending on the specific instance, non-local information about the automaton or the set of bad states is needed by $\phi_p$. Therefore, $\phi_p$ takes as input the current transition to be checked, the automaton, and the set of bad states. If $\phi_p$ evaluates to true, then the corresponding transition will be pruned.
Predicate~$\phi_f$ is used for deciding whether a state becomes bad. The input parameters are the same as $\phi_p$. However, $\phi_f$ only inspects necessary transitions ($T^\Box$). If $\phi_f$ evaluates to true, then the source state is deemed bad and added to the set $R_i$.
The abstract synthesis algorithm is formally defined below.

\begin{defi}[Abstract synthesis]\label{def:abstractsynthesis}
Let $\mathcal{A}$ be an MSCA, and let $\mathcal{K}_{0} = \mathcal A$ and $R_{0} = \textit{Dangling}(\mathcal{K}_{0})$.
Given two predicates $\phi_p, \phi_f: T \times MSCA \times Q \rightarrow \textsf{Bool}$,
we let the \emph{abstract synthesis function} $f_{(\phi_p, \phi_f)}: \textit{MSCA} \times  2^Q \rightarrow \textit{MSCA} \times  2^Q$ be defined as follows:
\begin{center}
\def\arraystretch{1.2}\begin{tabular}{c@{\hskip 0.5in}r@{\hskip 0.05in}c@{\hskip 0.05in}l}
\multicolumn{4}{l}{
$f_{(\phi_p, \phi_f)}(\mathcal{K}_{i-1},R_{i-1}) = (\mathcal{K}_{i},R_{i})$,
with}\\
& $T_{\mathcal{K}_{i}}$ & = & $T_{\mathcal{K}_{i-1}}\setminus
\{\, t \in T_{\mathcal K_{i-1}} \mid
                                        \phi_p(t, \mathcal K_{i-1}, R_{i-1}) = \textit{true}
                                        \,\}$\\
& $R_{i}$ & = & $R_{i-1} \cup
\{\,\vec q \mid (\vec q \TRANS{}) = t \in T_{\mathcal A}^\Box\,,\  \phi_f(t, \mathcal K_{i-1}, R_{i-1}) = \textit{true} \,\} \cup \textit{Dangling}(\mathcal{K}_{i})$
\end{tabular}
\end{center}
\end{defi}

As in the previous cases, the mpc relative to the pair $(\phi_p,\phi_f)$ is obtained by computing the least fixed point $(\mathcal K_{s}, R_{s})$ of $f_{(\phi_p,\phi_f)}$ and removing the states $R_{s}$ from $\mathcal K_{s}$.

\begin{restatable}[Abstract controller synthesis]{thm}{theabstractcontrollersynthesis}\label{the:abstractcontrollersynthesis}
The abstract synthesis function $f_{(\phi_p, \phi_f)}$ is monotone on the cpo $(\textit{MSCA} \times 2^Q, \leq)$ and its \emph{least fixed point} is:
\[
(\mathcal K^{(\phi_p, \phi_f)}_{s}, R^{(\phi_p, \phi_f)}_{s}) = \sup (\{\, f_{(\phi_p, \phi_f)}^n(\mathcal K_{0}, R_{0})\mid n \in \mathbb{N} \,\})
\]

\noindent
The \emph{abstract controller} of $\mathcal A$ for predicates ${(\phi_p, \phi_f)}$, denoted by $\mathcal K^{(\phi_p, \phi_f)}_{\mathcal A}$, is:
\[
\mathcal K^{(\phi_p, \phi_f)}_{\mathcal{A}} =
\left\{\begin{array}{l@{\qquad}l}
    \langle~\rangle
        & \mbox{if } \vec q_0 \in R^{(\phi_p, \phi_f)}_{s}\\
    \langle Q \setminus
    R^{(\phi_p, \phi_f)}_{s}, \vec q_0, A^{\Permitted}, A^{\Box}, A^{o},
        T_{\mathcal K^{(\phi_p, \phi_f)}_{s}}, F \setminus
        R^{(\phi_p, \phi_f)}_{s} \rangle
        & \mbox{otherwise}
\end{array}\right.
\]
\end{restatable}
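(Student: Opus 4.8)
The plan is to mirror the structure of the proofs of Theorems~\ref{the:mpc} and~\ref{the:def-orchestration}: first show $f_{(\phi_p,\phi_f)}$ is monotone on $(\textit{MSCA}\times 2^Q,\leq)$, then observe it is inflationary, and finally use finiteness to conclude that the Kleene chain from $(\mathcal K_0,R_0)$ stabilises at the least fixed point. Throughout I use that the iteration never touches the state set $Q$ (states are deleted only when $\mathcal K^{(\phi_p,\phi_f)}_{\mathcal A}$ is assembled from $(\mathcal K_s,R_s)$), so $\mathcal K'\subseteq\mathcal K$ reduces to $T_{\mathcal K'}\subseteq T_{\mathcal K}$, and unfolding the order gives $(\mathcal K,R)\leq(\mathcal K',R')$ iff $T_{\mathcal K'}\subseteq T_{\mathcal K}$ and $R\subseteq R'$.

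For monotonicity, fix $(\mathcal K,R)\leq(\mathcal K',R')$ and write $f_{(\phi_p,\phi_f)}(\mathcal K,R)=(\mathcal K_1,R_1)$ and $f_{(\phi_p,\phi_f)}(\mathcal K',R')=(\mathcal K_1',R_1')$. For the transition component we need $T_{\mathcal K_1'}\subseteq T_{\mathcal K_1}$: this follows from $T_{\mathcal K'}\subseteq T_{\mathcal K}$ together with the key hypothesis that $\phi_p$ is \emph{monotone} in the sense that $(\mathcal K,R)\leq(\mathcal K',R')$ and $\phi_p(t,\mathcal K,R)$ imply $\phi_p(t,\mathcal K',R')$; then $T_{\mathcal K'}\setminus\{t\mid\phi_p(t,\mathcal K',R')\}\subseteq T_{\mathcal K'}\setminus\{t\mid\phi_p(t,\mathcal K,R)\}\subseteq T_{\mathcal K}\setminus\{t\mid\phi_p(t,\mathcal K,R)\}$ by elementary set algebra. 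For $R_1\subseteq R_1'$ we union three facts: $R\subseteq R'$; monotonicity of $\phi_f$ in the same sense (the index set $T_{\mathcal A}^\Box$ is fixed, so the $\phi_f$-contribution can only grow); and monotonicity of $\textit{Dangling}$ under transition removal, i.e.\ from $T_{\mathcal K_1'}\subseteq T_{\mathcal K_1}$ and equal state sets we get $\textit{Dangling}(\mathcal K_1)\subseteq\textit{Dangling}(\mathcal K_1')$ (deleting transitions can only turn more states dangling). Hence $f_{(\phi_p,\phi_f)}$ is monotone.

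Next, $f_{(\phi_p,\phi_f)}$ is inflationary: its first component only removes transitions and its second only adds states, so $(\mathcal K,R)\leq f_{(\phi_p,\phi_f)}(\mathcal K,R)$ for every argument, in particular at $(\mathcal K_0,R_0)$. Monotonicity then makes $\{f_{(\phi_p,\phi_f)}^n(\mathcal K_0,R_0)\}_{n\in\mathbb N}$ a $\leq$-increasing chain, which has a lub in the cpo. Since $T_{\mathcal A}$ and $Q$ are finite, $|T_{\mathcal K_i}|$ is non-increasing and $|R_i|$ non-decreasing along the chain, so the chain is eventually constant; its lub $(\mathcal K_s,R_s)$ is therefore equal to some $f_{(\phi_p,\phi_f)}^N(\mathcal K_0,R_0)$ with $f_{(\phi_p,\phi_f)}^{N+1}(\mathcal K_0,R_0)=f_{(\phi_p,\phi_f)}^{N}(\mathcal K_0,R_0)$, i.e.\ a fixed point. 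It is least (among fixed points above $(\mathcal K_0,R_0)$, exactly as in Theorems~\ref{the:mpc} and~\ref{the:def-orchestration}): for any fixed point $(\mathcal K^\ast,R^\ast)\geq(\mathcal K_0,R_0)$, monotonicity gives $f_{(\phi_p,\phi_f)}^n(\mathcal K_0,R_0)\leq f_{(\phi_p,\phi_f)}^n(\mathcal K^\ast,R^\ast)=(\mathcal K^\ast,R^\ast)$ for all $n$, hence $(\mathcal K_s,R_s)\leq(\mathcal K^\ast,R^\ast)$. The displayed case split defining $\mathcal K^{(\phi_p,\phi_f)}_{\mathcal A}$ from $(\mathcal K_s,R_s)$ is then just a definition and needs no further justification.

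The main obstacle is the monotonicity step, and the subtlety it hides: it does \emph{not} hold for arbitrary parameters, so the genuine content is to record that $\phi_p$ and $\phi_f$ must be monotone with respect to $\leq$ on $\textit{MSCA}\times 2^Q$ (pruning, resp.\ being forbidden, is preserved when the automaton shrinks and the bad set grows). This property must then be verified for each concrete instantiation — e.g.\ for the orchestration predicates, a semi-controllable transition that is uncontrollable in $\mathcal K_{i-1}$ stays uncontrollable in any sub-automaton $\mathcal K_i$, because a witnessing match transition with non-dangling endpoints in a larger automaton remains one in a smaller one fails in the wrong direction, so one instead checks that absence of such a witness is preserved downward, using that $\textit{Dangling}$ only grows — so that the present theorem uniformly re-derives the monotonicity claims of Theorems~\ref{the:mpc}, \ref{the:def-orchestration} and~\ref{the:synthesischoreography}.
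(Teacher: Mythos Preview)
Your argument is correct but takes a genuinely different route from the paper. The paper's own proof does not unfold the cpo machinery at all: it argues termination directly (at each step either a transition is removed or a state is added to $R$, and both sets are finite), and then devotes the rest of the proof to properties of the resulting automaton $\mathcal K^{(\phi_p,\phi_f)}_{\mathcal A}$ that are not even stated in the theorem---namely that it is (i)~non-blocking, (ii)~controllable, (iii)~most permissive, and (iv)~safe, each justified in one or two sentences by construction. In particular, the paper never explicitly proves that $f_{(\phi_p,\phi_f)}$ is monotone as a map on $(\textit{MSCA}\times 2^Q,\leq)$; it only uses the inflationary/finiteness observation you also make.

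What your approach buys is precision about the fixed-point claim: you actually address monotonicity and the Kleene chain, and you surface the real subtlety---that monotonicity of $f_{(\phi_p,\phi_f)}$ is \emph{not} automatic and requires $\phi_p$ and $\phi_f$ themselves to be monotone in their $(\mathcal K,R)$ arguments. The paper glosses over this entirely. What the paper's approach buys is an account of why the output deserves to be called a ``controller'' (non-blocking, controllable, maximal, safe), which your proof does not touch. Your final paragraph, where you note that the predicate-monotonicity must be checked per instantiation (and sketch the orchestration case), is a useful observation that the paper leaves implicit; it is exactly the kind of side condition one would expect such an abstract theorem to carry.
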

\begin{proof}
The algorithm terminates because at each iteration either some transition is pruned or a state becomes forbidden, and both sets of transitions and states are finite.
We now prove that the synthesised automaton is (i)~non-blocking, (ii)~controllable, (iii)~most-permissive, and (iv)~safe.
In case $\mathcal K^{(\phi_p, \phi_f)}_{\mathcal{A}} = \langle~\rangle $, the properties hold trivially, thus we assume that the synthesised controller is non-empty.

For~(i), trivially all dangling states are pruned, so it is always possible to reach a final state. Similarly, bad states (i.e.\ states in the set $R_s$) are never traversed by construction, i.e.\ transitions with target in $R_s$ are pruned.

For~(ii) and~(iv), the forbidden predicate $\phi_f$ codifies exactly when controllability or safety is violated by a state. By construction, it is never the case that such a state is reached.

For~(iii), by construction a state is deemed bad or a transition is pruned exactly when either forbidden or pruning predicates are satisfied, respectively.  
Thus, maximality follows by the fact that each controller greater than the one synthesised will admit some forbidden state or transition. 
%
%
%
%
\end{proof}

%

In the remainder of this section, we show how to instantiate the abstract synthesis function to the standard synthesis function, to the orchestration synthesis function, or to the choreography synthesis function, and prove their correspondences.

\begin{restatable}[Abstract mpc synthesis]{thm}{theabstractmpcsynthesis}\label{the:abstractmpcsynthesis}
The standard synthesis function of Definition~\ref{def:mpc} coincides with the instantiation of the abstract synthesis function of Definition~\ref{def:abstractsynthesis} where, for a generic transition $t=(\vec q, \vec a, \vec q \, ')$, predicates $\phi_p$ and $\phi_f$ are defined as follows:
\begin{center}
\def\arraystretch{1.2}\begin{tabular}{r@{\hskip 0.05in}c@{\hskip 0.05in}l}
$\phi_p^{\textit{mpc}}(t, \mathcal K, R)$ & = & $(\vec  q\,' \in R) \vee (\vec q \text{ is forbidden})$\\
$\phi_f^{\textit{mpc}}(t, \mathcal K, R)$ & = & $(\vec  q\,' \in R)$
\end{tabular}
\end{center}
\end{restatable}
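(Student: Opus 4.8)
The plan is a direct syntactic comparison of the two step functions, showing that instantiating the abstract synthesis function of Definition~\ref{def:abstractsynthesis} with $\phi_p^{\textit{mpc}}$ and $\phi_f^{\textit{mpc}}$ reproduces, clause by clause, the standard synthesis function of Definition~\ref{def:mpc}; once that is established, the coincidence of the resulting controllers is immediate from the two fixed-point theorems. First I would note that both definitions are launched from the same initial pair, $\mathcal K_0 = \mathcal A$ and $R_0 = \textit{Dangling}(\mathcal K_0)$, so it suffices to prove $f_{(\phi_p^{\textit{mpc}},\,\phi_f^{\textit{mpc}})}(\mathcal K, R) = f(\mathcal K, R)$ for every $(\mathcal K, R) \in \textit{MSCA} \times 2^Q$.

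For the transition component, I would substitute $\phi_p = \phi_p^{\textit{mpc}}$ into the abstract rule: the set of pruned transitions becomes $\{\, t = (\vec q \to \vec q\,') \in T_{\mathcal K} \mid \vec q\,' \in R \vee \vec q \text{ is forbidden} \,\}$, which is exactly the set subtracted from $T_{\mathcal K}$ in Definition~\ref{def:mpc}; hence the two definitions produce the same $T_{\mathcal K_i}$. For the bad-states component, substituting $\phi_f = \phi_f^{\textit{mpc}}$ turns the middle set of the abstract rule into $\{\, \vec q \mid (\vec q \to \vec q\,') = t \in T^\Box_{\mathcal A},\ \vec q\,' \in R \,\}$; here it matters that the abstract rule already restricts the quantification of $\phi_f$ to the necessary transitions $T^\Box_{\mathcal A}$ of the original automaton, and that both rules independently adjoin $\textit{Dangling}(\mathcal K_i)$, so the resulting $R_i$ also agree. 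Therefore the two functions are equal as functions on the cpo $(\textit{MSCA} \times 2^Q, \leq)$.

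Finally, since the two step functions coincide and are iterated from the same $(\mathcal K_0, R_0)$, their ascending chains of iterates coincide, and hence so do their suprema; by Theorem~\ref{the:mpc} and Theorem~\ref{the:abstractcontrollersynthesis} these suprema are the respective least fixed points $(\mathcal K_s, R_s)$, which are thus the same pair. It remains only to observe that the closed-form definitions of the \emph{mpc} of $\mathcal A$ in Theorem~\ref{the:mpc} and of the abstract controller $\mathcal K^{(\phi_p^{\textit{mpc}},\,\phi_f^{\textit{mpc}})}_{\mathcal A}$ in Theorem~\ref{the:abstractcontrollersynthesis} are literally the same expression in $(\mathcal K_s, R_s)$, so they denote the same MSCA. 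I do not expect a genuine obstacle: the content of the statement is essentially that the parametric machinery was designed to subsume the classical case, and the only point requiring a little care is lining up the domains over which the transition quantifications range ($T_{\mathcal K}$ for pruning, $T^\Box_{\mathcal A}$ for the forbidden predicate) together with the ever-present $\textit{Dangling}(\cdot)$ contribution — all of which match on the nose.
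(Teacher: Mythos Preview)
Your proposal is correct and follows essentially the same approach as the paper: both arguments boil down to observing that the instantiated predicates $\phi_p^{\textit{mpc}}$ and $\phi_f^{\textit{mpc}}$ reproduce verbatim the pruning and bad-state clauses of Definition~\ref{def:mpc}, with the same quantification domains ($T_{\mathcal K}$ for pruning, $T^\Box_{\mathcal A}$ for the forbidden predicate) and the common $\textit{Dangling}(\mathcal K_i)$ contribution. The only cosmetic difference is that the paper phrases this as an induction on the iteration index~$i$, whereas you more directly argue pointwise equality of the two step functions and then conclude equality of all iterates; the content is the same.
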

\begin{proof}
Let $\mathcal K_{\mathcal A}^{\textit{mpc}}$ and $\mathcal K_{\mathcal A}^{\textit{abs}}$ be the controllers computed through
Theorems~\ref{the:mpc} and~\ref{the:abstractcontrollersynthesis}, respectively.
The proof proceeds by induction on the fixed point iterations and by case analysis.

For the base case,
by definition $\mathcal K_{0}^{\textit{mpc}} = \mathcal K_{0}^{\textit{abs}} = \mathcal A$ and
$R_{\,0}^{\textit{abs}} = R_{\,0}^{\textit{mpc}} = \textit{Dangling}(\mathcal K_{0})$.

For the inductive case, let $i$ be a fixed point iteration. Assuming
$\mathcal K_{i-1}^{\textit{mpc}} = \mathcal K_{i-1}^{\textit{abs}}$ and $R_{\,i-1}^{\textit{mpc}} = R_{\,i-1}^{\textit{abs}}$,
we prove
$\mathcal K_{i}^{\textit{mpc}}  = \mathcal K_{i}^{\textit{abs}}$ and $R_{\,i}^{\textit{mpc}} = R_{\,i}^{\textit{abs}}$.

The equivalence $\mathcal K_{i}^{\textit{mpc}}  = \mathcal K_{i}^{\textit{abs}}$ follows because at the $i$th iteration, $\phi^{\textit{mpc}}_p$ detects exactly the same transitions that are pruned by the mpc synthesis algorithm.

For the equivalence $R_{\,i}^{\textit{mpc}}= R_{\,i}^{\textit{abs}}$, we have
$R_{\,i}^{\textit{mpc}} = R_{\,i-1}^{\textit{mpc}} \cup \textit{Dangling}(\mathcal K_{i}^{\textit{mpc}}) \cup
\{\, \vec q \mid (\vec q, \vec a, \vec q \, ')\in T_{\mathcal{K}_{i}^{\textit{mpc}}}^\Box,\ \vec q \, '\in R_{\,i-1}^{\textit{mpc}} \,\}$ and
$R_{\,i}^{\textit{abs}} = R_{\,i-1}^{\textit{abs}} \cup \textit{Dangling}(\mathcal K_{i}^{\textit{abs}}) \cup
\{\, \vec q \mid ( \vec q \TRANS{ a}) = t \in T_{\mathcal A}^\Box,\linebreak  \phi^{\textit{mpc}}_f(t, \mathcal K_{i-1}^{\textit{abs}}, R_{\,i-1}^{\textit{abs}}) = \textit{true}\}$.

Since $\mathcal K_{\,i}^{\textit{mpc}}  = \mathcal K_{\,i}^{\textit{abs}}$, also the dangling states are equivalent.
It remains to prove that
\[
\{\, \vec q \mid (\vec q, \vec a, \vec q \, ')\in T_{\mathcal{K}_{\,i}^{\textit{mpc}}}^\Box,\ \vec q \, '\in R_{\,i-1}^{\textit{mpc}} \,\}
=
\{\, \vec q \mid ( \vec q \TRANS{ a}) = t \in T_{\mathcal A}^\Box,\  \phi^{\textit{mpc}}_f(t, \mathcal K_{i-1}^{\textit{abs}}, R_{\,i-1}^{\textit{abs}}) = \textit{true}\}
.\]
This equivalence is straightforward by the definition of $\phi^{\textit{mpc}}_f$ and the inductive hypothesis.
\end{proof}

Note that in Theorem~\ref{the:abstractmpcsynthesis} the predicates do not use any non-local information related to the parameter $\mathcal K$.
For both orchestration and choreography, two different semi-controllability conditions are used to decide whether a state has become forbidden.
These conditions are translated into the corresponding forbidden predicates.

\begin{restatable}[Abstract orchestration synthesis]{thm}{theabstractorchestrationsynthesis}\label{the:abstractorchestrationsynthesis}
The orchestration synthesis function of Definition~\ref{def:synthesisorchestration} coincides with the instantiation of the abstract synthesis function of Definition~\ref{def:abstractsynthesis} where, for a generic transition $t=(\vec q, \vec a, \vec q \, ')$, predicates $\phi_p$ and $\phi_f$ are defined as follows:
\begin{center}
\def\arraystretch{1.2}\begin{tabular}{r@{\hskip 0.05in}c@{\hskip 0.05in}l}
$\phi_p^{\textit{orc}}(t, \mathcal K, R)$ & = & $(t \text{ is a request }) \vee (\vec  q\,' \in R)$\\
$\phi_f^{\textit{orc}}(t, \mathcal K, R)$ & = & $\nexists\,(\vec{q_2} \TRANS{\vec a_2} \vec{q_2}\,\!') \in T^\Box_{\mathcal K} \,:\,
 (\vec a_2 \text{ is a match })
 \wedge
 (\vec q_2, \vec q_2\!' \not\in \textit{Dangling}(\mathcal K))$\\
& & $\phantom{\nexists\,(\vec{q_2} \TRANS{\vec a_2} \vec{q_2}\,\!') \in T^\Box_{\mathcal K} \,:\,}
\wedge
(\ithel{\vec{q}}{i} = \ithel{\vec{q_2}}{i})
\wedge
(\ithel{\vec{a}}{i} = \ithel{\vec{a_2}}{i} = a)$
\end{tabular}
\end{center}
\end{restatable}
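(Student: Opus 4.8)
The plan is to follow the template of the proof of Theorem~\ref{the:abstractmpcsynthesis}: argue by induction on the fixed-point iterations, splitting at each step the transition component from the bad-state component, and show that the orchestration synthesis $f_o$ of Definition~\ref{def:synthesisorchestration} and the abstract synthesis $f_{(\phi_p^{\textit{orc}}, \phi_f^{\textit{orc}})}$ of Definition~\ref{def:abstractsynthesis}, started from the common pair $\mathcal{K}_{0} = \mathcal{A}$, $R_{0} = \textit{Dangling}(\mathcal{K}_{0})$, produce the same outcome, so that the controllers they induce coincide. The base case is immediate, since both schemes start from $(\mathcal{K}_{0}, R_{0})$. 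For the inductive step one assumes $\mathcal{K}_{i-1}^{\textit{orc}} = \mathcal{K}_{i-1}^{\textit{abs}}$ and $R_{i-1}^{\textit{orc}} = R_{i-1}^{\textit{abs}}$ and derives the corresponding equalities at step $i$.

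The transition component is a direct syntactic check: the set pruned by $f_o$ at step $i$ is $\{\,t = (\vec{q} \TRANS{} \vec{q}\,') \in T_{\mathcal{K}_{i-1}} \mid \vec{q}\,' \in R_{i-1} \vee t \text{ is a request}\,\}$, which is exactly $\{\,t \mid \phi_p^{\textit{orc}}(t, \mathcal{K}_{i-1}, R_{i-1}) = \textit{true}\,\}$, because $\phi_p^{\textit{orc}}(t, \mathcal{K}, R)$ is $(t \text{ is a request}) \vee (\vec{q}\,' \in R)$ and, as in the mpc case, does not inspect its automaton argument. Hence $T_{\mathcal{K}_{i}}^{\textit{orc}} = T_{\mathcal{K}_{i}}^{\textit{abs}}$, and therefore also $\textit{Dangling}(\mathcal{K}_{i}^{\textit{orc}}) = \textit{Dangling}(\mathcal{K}_{i}^{\textit{abs}})$; together with the inductive hypothesis this already accounts for the $R_{i-1}$-summand and the dangling summand of $R_{i}$.

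For the bad-state component the key point is that $\phi_f^{\textit{orc}}$ was defined precisely as the negation of the existence clause of Definition~\ref{def:controllabilityorchestration}: $\phi_f^{\textit{orc}}(t, \mathcal{K}, R)$ holds iff no match $t' = (\vec{q_2} \TRANS{\vec a_2} \vec{q_2}\,') \in T^\Box_{\mathcal{K}}$ exists with $\vec{q_2}, \vec{q_2}\,' \notin \textit{Dangling}(\mathcal{K})$ in which the principal $\textit{snd}(\vec a_1)$ performs, in the same local state, the same necessary action. Thus for $t \in T^\Box_{\mathcal{A}}$ (the restriction imposed by the quantifier in Definition~\ref{def:abstractsynthesis}) the statement $\phi_f^{\textit{orc}}(t, \mathcal{K}, R) = \textit{true}$ is equivalent to ``$t$ is uncontrollable in $\mathcal{K}$'' in the sense of Definition~\ref{def:controllabilityorchestration} with $\mathcal{A}'$ instantiated to $\mathcal{K}$ (and, like that notion, $\phi_f^{\textit{orc}}$ ignores its $R$ argument). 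So the only thing left is to match up which automaton plays the role of $\mathcal{A}'$.

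Reconciling these automata is the step I expect to be the main obstacle, as it is the one feature absent from the mpc case: Definition~\ref{def:synthesisorchestration} checks uncontrollability against the \emph{already pruned} $\mathcal{K}_{i}$, whereas the abstract scheme feeds $\phi_f$ the \emph{pre-step} automaton $\mathcal{K}_{i-1} \supseteq \mathcal{K}_{i}$. Since $\phi_f^{\textit{orc}}(t, \cdot, R)$ is antitone in its automaton argument, $f_{(\phi_p^{\textit{orc}}, \phi_f^{\textit{orc}})}$ can at a step declare bad only a subset of the states $f_o$ declares bad, the possible surplus consisting of sources of necessary transitions whose sole controllability witness is a necessary match transition removed in passing from $\mathcal{K}_{i-1}$ to $\mathcal{K}_{i}$ (a match can be removed only for having its target in $R_{i-1}$, and such a target then becomes unreachable, hence dangling, in $\mathcal{K}_{i}$). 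I would therefore not force a step-by-step equality but instead lift the comparison to the least fixed points: using the equivalence of the previous paragraph, a pair $(\mathcal{K}, R)$ is a fixed point of $f_o$ iff it is a fixed point of $f_{(\phi_p^{\textit{orc}}, \phi_f^{\textit{orc}})}$ --- at a fixed point the pruning set is empty, so ``pre-step'' and ``post-step'' automata coincide and the distinction evaporates --- and since both functions are monotone on the cpo $(\textit{MSCA} \times 2^Q, \leq)$, their least fixed points, and hence the controllers they compute, are equal. The residual care needed is in justifying that the two iteration sequences, although they may differ at intermediate steps, converge to this common fixed point; here the remark preceding Definition~\ref{def:controllabilityorchestration} --- that a semi-controllable transition turns uncontrollable only once the witnessing match has itself been pruned --- is what rules out spurious divergence.
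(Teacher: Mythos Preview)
Your approach follows the paper's template up to the bad-state component and then diverges in a way that is actually more careful than the paper. The paper's full proof carries out the step-by-step induction you begin with and, at the crucial point, simply asserts
\[
\{\,\vec q \mid (\vec q \xrightarrow{})\in T^\Box_{\mathcal A} \textit{ is uncontrollable in } \mathcal{K}_{i}\,\}
=
\{\,\vec q \mid (\vec q \xrightarrow{}) = t \in T_{\mathcal A}^\Box,\ \phi^{\textit{orc}}_f(t, \mathcal K_{i-1}, R_{i-1}) = \textit{true}\,\},
\]
declaring this ``straightforward by the definition of $\phi^{\textit{orc}}_f$, Definition~\ref{def:controllabilityorchestration}, and the inductive hypothesis.'' That is, the paper treats the $\mathcal K_i$ versus $\mathcal K_{i-1}$ discrepancy you spot as a non-issue and does not comment on it.

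You are right that the discrepancy is real: a necessary match $t'$ whose target lies in $R_{i-1}\setminus\textit{Dangling}(\mathcal K_{i-1})$ (having been added there by an earlier uncontrollability check) is a legitimate controllability witness in $\mathcal K_{i-1}$ but is pruned in passing to $\mathcal K_i$, so the two sides above need not coincide literally at step~$i$. Your fix---observing that $\phi_f^{\textit{orc}}$ is antitone in the automaton argument, that at a fixed point the pre- and post-step automata coincide, and hence that $f_o$ and $f_{(\phi_p^{\textit{orc}},\phi_f^{\textit{orc}})}$ share the same fixed points---is a sound way to close the gap. The one point where you should tighten the argument is the convergence claim: rather than invoking ``same fixed points $\Rightarrow$ same least fixed point'' abstractly, it is cleaner to note that the abstract iteration merely lags the orchestration iteration by at most one step (any state the orchestration adds to $R_i$ via uncontrollability in $\mathcal K_i$ will be added by the abstract scheme no later than step $i+1$, once $\mathcal K_i$ becomes its pre-step automaton), so both chains stabilise at the same pair. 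This makes your final paragraph's ``residual care'' explicit and completes the proof.
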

\begin{proof}[Proof (sketch)]
The proof is analogous to that of Theorem~\ref{the:abstractmpcsynthesis} but relying on  Theorem~\ref{def:synthesisorchestration} instead of Theorem~\ref{the:mpc}. The full proof can be found in the appendix.
\end{proof}
\iftoggle{APPENDIX}
{}
{
\begin{proof}
The proof is analogous to that of Theorem~\ref{the:abstractmpcsynthesis} but relying on  Theorem~\ref{def:synthesisorchestration} instead of Theorem~\ref{the:mpc}. The full proof follows.

Let $\mathcal K_{\mathcal A}^{\textit{orc}}$ and $\mathcal K_{\mathcal A}^{\textit{abs}}$ be the controllers computed through
Theorems~\ref{def:synthesisorchestration} and~\ref{the:abstractcontrollersynthesis}, respectively.
The proof proceeds by induction on the fixed point iterations and by case analysis.

For the base case,
by definition $\mathcal K_{0}^{\textit{orc}} = \mathcal K_{0}^{\textit{abs}} = \mathcal A$ and
$R_{0}^{\textit{abs}} = R_{0}^{\textit{orc}} = \textit{Dangling}(\mathcal K_0)$.

For the inductive case, let $i$ be a fixed point iteration. Assuming
$\mathcal K_{i-1}^{\textit{orc}} = \mathcal K_{i-1}^{\textit{abs}}$ and $R_{i-1}^{\textit{orc}}= R_{i-1}^{\textit{abs}}$,
we prove
$\mathcal K_{i}^{\textit{orc}}  = \mathcal K_{i}^{\textit{abs}}$ and $R_{i}^{\textit{orc}}= R_{i}^{\textit{abs}}$.

The equivalence $\mathcal K_{i}^{\textit{orc}}  = \mathcal K_{i}^{\textit{abs}}$ follows because at the $i$th iteration $\phi^{\textit{orc}}_p$ detects exactly the same transitions that are pruned by the orchestration synthesis algorithm.

For the equivalence $R_{i}^{\textit{orc}}= R_{i}^{\textit{abs}}$, we have
$R_{i}^{\textit{orc}} = R_{\,i-1}^{\textit{orc}}
 \cup
\{\,\vec q \mid (\vec q \TRANS{})\in T^\Box_{\mathcal A} \textit{ is un}\-\textit{controllable in } \mathcal{K}_{\,i}^{\textit{orc}}\,\}
\cup \textit{Dangling}(\mathcal{K}_{\,i}^{\textit{orc}})$, and
$R_{i}^{\textit{abs}} = R_{\,i-1}^{\textit{abs}}
\cup
\textit{Dangling}(\mathcal K_i^{\textit{abs}}) \cup
\{\, \vec q \mid (\vec q\TRANS{a}) = t \in T_{\mathcal A}^\Box,\  \phi^{\textit{orc}}_f(t, \mathcal K_{i-1}^{\textit{abs}}, R_{i-1}^{\textit{abs}}) = \textit{true}\}$.

Since $\mathcal K_{i}^{\textit{orc}}  = \mathcal K_{i}^{\textit{abs}}$, also the dangling states are equivalent.
It remains to prove that
$
\{\,\vec q \mid (\vec q \TRANS{})\in T^\Box_{\mathcal A} \textit{ is uncontrollable in } \mathcal{K}_{\,i}^{\textit{orc}}\,\}
=
\{\, \vec q \mid ( \vec q \TRANS{}) = t \in T_{\mathcal A}^\Box,\  \phi^{\textit{orc}}_f(t, \mathcal K_{i-1}^{\textit{abs}}, R_{i-1}^{\textit{abs}}) = \textit{true}\}
$.
This equivalence is straightforward by the definition of $\phi^{\textit{orc}}_f$, Definition~\ref{def:controllabilityorchestration}, and the inductive hypothesis.
\end{proof}

}

The pruning predicate of Theorem~\ref{the:abstractorchestrationsynthesis} does not use any information coming from the global automaton $\mathcal K$, whereas this is no longer the case for the forbidden predicate that indeed specifies the semi-controllability condition for the necessary transitions of an orchestration (cf.\  Definition~\ref{def:controllabilityorchestration}).

\begin{restatable}[Abstract choreography synthesis]{thm}{theabstractchoreographysynthesis}\label{the:abstractchoreographysynthesis}
The choreography synthesis function of Definition~\ref{def:synthesischoreography} coincides with the instantiation of the abstract synthesis function of Definition~\ref{def:abstractsynthesis}, where given a generic transition $t=(\vec q, \vec a, \vec q \, ')$, the predicates $\phi_p$ and $\phi_f$ are defined as follows, where $\hat{T}_{\mathcal K, R}$ is defined in  Definition~\ref{def:synthesischoreography}:
\begin{center}
\def\arraystretch{1.2}\begin{tabular}{r@{\hskip 0.05in}c@{\hskip 0.05in}l}
$\phi_p^{\textit{cor}}(t, \mathcal K, R)$ & = & $(t \text{ is a request or offer}) \vee (\vec  q\,' \in R) \vee t \in \hat{T}_{\mathcal K, R}$\\
$\phi_f^{\textit{cor}}(t, \mathcal K, R)$ & = & $\nexists\,(\vec{q} \TRANS{\vec a_2} \vec{q_2}') \in T^\Box_{\mathcal K} \,:\,
(\vec a_2 \text{ is a match})
\wedge
(\vec q, \vec q_2\!' \not\in \textit{Dangling}(\mathcal K) )$\\
& & $\phantom{\nexists\,(\vec{q} \TRANS{\vec a_2} \vec{q_2}') \in T^\Box_{\mathcal K} \,:\,}
\wedge
(\ithel{\vec{a}}{i} = \ithel{\vec{a_2}}{i} = \overline a)$
\end{tabular}
\end{center}
\end{restatable}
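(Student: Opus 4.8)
The plan is to mirror the proofs of Theorems~\ref{the:abstractmpcsynthesis} and~\ref{the:abstractorchestrationsynthesis}. First I would fix one choreography synthesis function $f_c$ in the sense of Definition~\ref{def:synthesischoreography}, that is, fix the subset $\hat{T}_{\mathcal K_i, R_i} \subseteq T_{bc}$ selected at each iteration~$i$, and instantiate the occurrence of $\hat{T}_{\mathcal K, R}$ inside $\phi_p^{\textit{cor}}$ with exactly that same choice; this bookkeeping is needed because a maximal choreography need not exist and $f_c$ is therefore not unique. Then I would prove, by induction on the fixed-point iteration~$i$, that the iterate $(\mathcal K_i^{\textit{cor}}, R_i^{\textit{cor}})$ produced by $f_c$ and the iterate $(\mathcal K_i^{\textit{abs}}, R_i^{\textit{abs}})$ produced by $f_{(\phi_p^{\textit{cor}},\, \phi_f^{\textit{cor}})}$ coincide. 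The base case is immediate, since both start from $\mathcal K_0 = \mathcal A$ and $R_0 = \textit{Dangling}(\mathcal A)$, and the final controller $\mathcal K_{\mathcal A}$ is extracted identically from the common least fixed point $(\mathcal K_s, R_s)$, so the claimed correspondence of the two functions follows.

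In the inductive step I would treat the automaton and the bad-state set separately. The equality $\mathcal K_i^{\textit{cor}} = \mathcal K_i^{\textit{abs}}$ is essentially syntactic: the transitions $f_c$ removes in one step are $\{(\vec q \TRANS{}\vec q\,') = t \in T_{\mathcal K_{i-1}} \mid \vec q\,' \in R_{i-1} \vee t \text{ is a request or offer}\} \cup \hat{T}_{\mathcal K_{i-1}, R_{i-1}}$, and the three disjuncts of $\phi_p^{\textit{cor}}$ were tailored to name precisely these, using that $\hat{T}_{\mathcal K_{i-1}, R_{i-1}} \subseteq T_{bc} \subseteq T_{\mathcal K_{i-1}}$. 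For $R_i^{\textit{cor}} = R_i^{\textit{abs}}$, the $\textit{Dangling}(\mathcal K_i)$ summands agree because the automata do, the $R_{i-1}$ summands agree by the induction hypothesis, and it remains to match $\{\vec q \mid (\vec q \TRANS{}) \in T_{\mathcal A} \text{ is uncontrollable in } \mathcal K_i\}$ with $\{\vec q \mid (\vec q \TRANS{}) = t \in T_{\mathcal A}^\Box,\ \phi_f^{\textit{cor}}(t, \cdot, \cdot) = \textit{true}\}$. Here one unpacks Definition~\ref{def:choreographycontrollability}: a permitted-labelled transition is always controllable and contributes nothing, while for a semi-controllable $t = (\vec q, \vec a_1, \vec q_1\!')$ being uncontrollable means exactly that no match $t' = (\vec q, \vec a_2, \vec q_2\!') \in T^\Box_{\mathcal K}$ exists with $\vec q, \vec q_2\!' \notin \textit{Dangling}(\mathcal K)$ and $\ithel{\vec a_1}{i} = \ithel{\vec a_2}{i}$ for $i = \textit{snd}(\vec a_1)$, which is literally the negated condition defining $\phi_f^{\textit{cor}}$ with $\overline a = \ithel{\vec a_1}{i}$. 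One must carry through faithfully the emphasis in Definition~\ref{def:choreographycontrollability} that $t$ and $t'$ share the \emph{same source state}, so that $\phi_f^{\textit{cor}}$ is strictly stronger than $\phi_f^{\textit{orc}}$.

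The delicate point — the one merely asserted to be straightforward in the proofs of Theorems~\ref{the:abstractmpcsynthesis} and~\ref{the:abstractorchestrationsynthesis} — is which automaton the semi-controllability test runs against: Definition~\ref{def:synthesischoreography} tests uncontrollability in the already-pruned $\mathcal K_i$, whereas Definition~\ref{def:abstractsynthesis} evaluates $\phi_f$ at $(\mathcal K_{i-1}, R_{i-1})$, and a witnessing match present in $\mathcal K_{i-1}$ may disappear in $\mathcal K_i$. I expect this to be the main obstacle, and I would close it not by forcing iterate-by-iterate equality but by the standard fixed-point argument: both $f_c$ and $f_{(\phi_p^{\textit{cor}},\, \phi_f^{\textit{cor}})}$ are monotone on $(\textit{MSCA} \times 2^Q, \leq)$ with $(\mathcal K_0, R_0)$ below their images, and, by the unpacking of Definition~\ref{def:choreographycontrollability} above, they have the \emph{same fixed points} — at any fixed point $\mathcal K_{i-1} = \mathcal K_i$, so the off-by-one evaporates and the two fixed-point equations become identical. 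Since the least fixed point of a monotone map above a given point is determined by its set of fixed points, the two functions share $(\mathcal K_s, R_s)$ and hence yield the same choreography. (The branching-condition clause of Theorem~\ref{the:synthesischoreography} is not part of the present statement; here it is enough that the request/offer pruning and the $\hat{T}$ term of $\phi_p^{\textit{cor}}$ are inherited verbatim from $f_c$.)
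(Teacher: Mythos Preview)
Your proposal follows the paper's proof closely: both argue by induction on the fixed-point iterations, match the pruning disjuncts of $\phi_p^{\textit{cor}}$ against the three removal clauses of $f_c$ (including the remark that the same non-deterministic choice of $\hat T_{\mathcal K,R}$ must be taken), and then match the forbidden-state clause by unpacking Definition~\ref{def:choreographycontrollability}. The paper's own proof stops there, declaring the remaining identity ``straightforward by the definition of $\phi_f^{\textit{cor}}$, Definition~\ref{def:choreographycontrollability}, and the inductive hypothesis.''

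Where you diverge is precisely at the point you flag as delicate: Definition~\ref{def:synthesischoreography} tests uncontrollability in the \emph{already-pruned} $\mathcal K_i$, while Definition~\ref{def:abstractsynthesis} evaluates $\phi_f$ at $\mathcal K_{i-1}$. The paper does not address this index shift; it simply asserts iterate-by-iterate equality. Your proposal instead abandons literal step-by-step equality and argues that the two monotone maps share the same fixed points (since at a fixed point $\mathcal K_{i-1}=\mathcal K_i$ and the shift disappears), hence the same least fixed point above $(\mathcal K_0,R_0)$ in this finite setting. That is a sound and genuinely more careful closure than what the paper offers; it buys you correctness at the cost of giving up the clean per-iteration invariant. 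One could also close the gap more directly by observing that a witness match lost when passing from $\mathcal K_{i-1}$ to $\mathcal K_i$ forces the source state into $R$ one step later anyway, so the discrepancy is absorbed after finitely many iterations---but your fixed-point route is cleaner and equally valid.
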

\begin{proof}[Proof (sketch)]
The proof is analogous to that of Theorem~\ref{the:abstractmpcsynthesis} but relying on  Theorem~\ref{def:synthesischoreography} instead of Theorem~\ref{the:mpc}. The full proof can be found in the appendix.
\end{proof}
\iftoggle{APPENDIX}
{}
{
\begin{proof}
The proof is analogous to that of Theorem~\ref{the:abstractmpcsynthesis} but relying on  Theorem~\ref{def:synthesischoreography} instead of Theorem~\ref{the:mpc}. The full proof follows.

Let $\mathcal K_{\mathcal A}^{\textit{cor}}$ and $\mathcal K_{\mathcal A}^{\textit{abs}}$ be the controllers computed through
Theorems~\ref{def:synthesischoreography} and~\ref{the:abstractcontrollersynthesis}, respectively.
The proof proceeds by induction on the fixed point iterations and by case analysis.

For the base case,
by definition $\mathcal K_{0}^{\textit{cor}} = \mathcal K_{0}^{\textit{abs}} = \mathcal A$ and
$R_{0}^{\textit{abs}} = R_{0}^{\textit{cor}} = \textit{Dangling}(\mathcal K_0)$.

For the inductive case, let $i$ be a fixed point iteration. Assuming
$\mathcal K_{i-1}^{\textit{cor}} = \mathcal K_{i-1}^{\textit{abs}}$ and $R_{i-1}^{\textit{cor}}= R_{i-1}^{\textit{abs}}$,
we prove
$\mathcal K_{i}^{\textit{cor}}  = \mathcal K_{i}^{\textit{abs}}$ and $R_{i}^{\textit{cor}} = R_{i}^{\textit{abs}}$.

The equivalence $\mathcal K_{i}^{\textit{cor}} = \mathcal K_{i}^{\textit{abs}}$ follows because at the $i$th iteration $\phi^{\textit{cor}}_p$ detects exactly the same transitions that are pruned by the choreography synthesis algorithm (and takes the same non-deterministic choices).

For the equivalence $R_{i}^{\textit{cor}}= R_{i}^{\textit{abs}}$, we have $R_{i}^{\textit{cor}} = R_{\,i-1}^{\textit{cor}}
 \cup
\{\,\vec q \mid (\vec q \TRANS{})\in T^\Box_{\mathcal A} \textit{ is un}\-\textit{controllable in } \mathcal{K}_{\,i}^{\textit{cor}}\,\}
\cup \textit{Dangling}(\mathcal{K}_{\,i}^{\textit{cor}})$, and
$R_{i}^{\textit{abs}} = R_{\,i-1}^{\textit{abs}}
\cup
\textit{Dangling}(\mathcal K_i^{\textit{abs}}) \cup
\{\, \vec q \mid (\vec q\TRANS{a}) = t \in T_{\mathcal A}^\Box,\  \phi^{\textit{cor}}_f(t, \mathcal K_{i-1}^{\textit{abs}}, R_{i-1}^{\textit{abs}}) = \textit{true}\}$.

Since $\mathcal K_{i}^{\textit{cor}} = \mathcal K_{i}^{\textit{abs}}$, also the dangling states are equivalent.
It remains to prove that
$
\{\,\vec q \mid (\vec q \TRANS{})\in T^\Box_{\mathcal A} \textit{ is uncontrollable in } \mathcal{K}_{\,i}^{\textit{cor}}\,\}
=
\{\, \vec q \mid ( \vec q \TRANS{}) = t \in T_{\mathcal A}^\Box,\  \phi^{\textit{cor}}_f(t, \mathcal K_{i-1}^{\textit{abs}}, R_{i-1}^{\textit{abs}}) = \textit{true}\}
$.
This equivalence is straightforward by the definition of $\phi^{\textit{cor}}_f$, Definition~\ref{def:choreographycontrollability}, and the inductive hypothesis.
\end{proof}

}

Notably, in Theorem~\ref{the:abstractchoreographysynthesis} both predicates require global information on the whole automaton. Similarly to Theorem~\ref{the:abstractorchestrationsynthesis}, the forbidden predicate codifies the semi-controllability condition of Definition~\ref{def:choreographycontrollability}.
Moreover, the pruning predicate removes all transitions violating the branching condition (cf.\ Definition~\ref{def:branchingcondition}).


\section{A Partial Order on Controllers}\label{sect:po}

In Theorem~\ref{the:abstractmpcsynthesis}, Theorem~\ref{the:abstractorchestrationsynthesis}, and Theorem~\ref{the:abstractchoreographysynthesis}, we have proved that the three previously presented synthesis algorithms are instantiations of the abstract synthesis algorithm of Definition~\ref{def:abstractsynthesis}.
This abstraction provides us the mean to formally relate the various algorithms presented so far, as detailed in this section.

To begin with, we define a partial order on predicates.
Intuitively, a pair $(\phi_{p_2}, \phi_{f_{2}})$ is greater than another pair $(\phi_{p_1}, \phi_{f_{1}})$ if and only if $(\phi_{p_2}, \phi_{f_{2}})$  is (pairwise) entailed by $(\phi_{p_1}, \phi_{f_{1}})$.

\begin{defi}[Partial order on predicates]\label{def:popredicates}
Let $\mathcal A$ be an \acronym\ and let $\prset$ be the set of pairs of pruning and forbidden predicates of Definition~\ref{def:abstractsynthesis} with $(\phi_{p_1}, \phi_{f_{1}}), (\phi_{p_2}, \phi_{f_{2}}) \in \prset$. 
The partial order on predicates $(\prset, \leq)$ is defined as:
\begin{center}
\def\arraystretch{1.2}\begin{tabular}{r@{\hskip 0.05in}c@{\hskip 0.05in}l}
$(\phi_{p_1}, \phi_{f_{1}}) \leq (\phi_{p_2}, \phi_{f_{2}})$ & \text{iff} & $\forall i \in \mathbb N\ .\ (\phi_{p_1}(t,\mathcal K^1_i, R^1_i) \Rightarrow (\phi_{p_2}(t,\mathcal K^2_i, R^2_i) \vee t \not\in \mathcal K^2_i))$\\
& & $\hspace{10.2mm}{} \wedge (\phi_{f_{1}}(t,\mathcal K^1_i, R^1_i) \Rightarrow (\phi_{f_{2}}(t,\mathcal K^2_i, R^2_i))\vee \vec q\in\textit{Dangling}(\mathcal K^2_i))$,
\end{tabular}
\end{center}
where $t=(\vec q, \vec a, \vec q\,')$.
\end{defi}

By Definition~\ref{def:abstractsynthesis}, we know that such predicates are used to refine an \acronym\ during the synthesis.
Indeed, states and transitions are removed when such predicates are satisfied by them.
The partial order on predicates induces an ordering on the various abstract controllers, as the following result shows.

\begin{restatable}[Ordering controllers]{proposition}{propcontrollerorder}\label{prop:controllerorder}
Let $\mathcal A$ be an \acronym\ and let $(\phi_{p_1}, \phi_{f_{1}}), (\phi_{p_2}, \phi_{f_{2}}) \in \prset$ be such that $(\phi_{p_1}, \phi_{f_{1}}) \leq (\phi_{p_2}, \phi_{f_{2}})$. Then:
\[
\mathcal K^{(\phi_{p_2}, \phi_{f_{2}})}_{\mathcal A} \subseteq \mathcal K^{(\phi_{p_1}, \phi_{f_{1}})}_{\mathcal A}
\]
\end{restatable}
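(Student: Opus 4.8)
The plan is to prove the statement by induction on the fixed-point iteration index, showing that the two synthesis runs remain pointwise ordered throughout, and then to transfer this to the least fixed points and finally to the controllers extracted from them. Write $(\mathcal K^1_i, R^1_i) = f^{\,i}_{(\phi_{p_1}, \phi_{f_1})}(\mathcal K_0, R_0)$ and $(\mathcal K^2_i, R^2_i) = f^{\,i}_{(\phi_{p_2}, \phi_{f_2})}(\mathcal K_0, R_0)$, with $\mathcal K_0 = \mathcal A$ and $R_0 = \textit{Dangling}(\mathcal A)$. Since the abstract synthesis function of Definition~\ref{def:abstractsynthesis} only shrinks the transition set and enlarges the set of bad states, leaving the state and action components of $\mathcal K_i$ equal to those of $\mathcal A$, the relation $(\mathcal K^1_i, R^1_i) \leq (\mathcal K^2_i, R^2_i)$ in the cpo $(\textit{MSCA} \times 2^Q, \leq)$ amounts exactly to $T_{\mathcal K^2_i} \subseteq T_{\mathcal K^1_i}$ together with $R^1_i \subseteq R^2_i$; this is the invariant I would establish for every $i$.

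For the base case the two pairs coincide, since $\mathcal K^1_0 = \mathcal K^2_0 = \mathcal A$ and $R^1_0 = R^2_0 = \textit{Dangling}(\mathcal A)$. For the inductive step, assume $T_{\mathcal K^2_{i-1}} \subseteq T_{\mathcal K^1_{i-1}}$ and $R^1_{i-1} \subseteq R^2_{i-1}$. For the transitions, take $t \in T_{\mathcal K^2_i}$; then $t \in T_{\mathcal K^2_{i-1}} \subseteq T_{\mathcal K^1_{i-1}}$ and $\phi_{p_2}(t, \mathcal K^2_{i-1}, R^2_{i-1}) = \textit{false}$. Were $\phi_{p_1}(t, \mathcal K^1_{i-1}, R^1_{i-1}) = \textit{true}$, the $\phi_p$-clause of $(\phi_{p_1}, \phi_{f_1}) \leq (\phi_{p_2}, \phi_{f_2})$ at index $i-1$ would force $\phi_{p_2}(t, \mathcal K^2_{i-1}, R^2_{i-1}) = \textit{true}$ or $t \notin \mathcal K^2_{i-1}$, both impossible; hence $t$ is not pruned from $\mathcal K^1_{i-1}$ and $t \in T_{\mathcal K^1_i}$. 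For the bad states, $R^1_i$ is the union of $R^1_{i-1}$, the set of source states $\vec q$ of necessary transitions $t \in T^\Box_{\mathcal A}$ with $\phi_{f_1}(t, \mathcal K^1_{i-1}, R^1_{i-1}) = \textit{true}$, and $\textit{Dangling}(\mathcal K^1_i)$. The first lies in $R^2_{i-1} \subseteq R^2_i$ by the inductive hypothesis and monotonicity. For a state $\vec q$ in the second, the $\phi_f$-clause of the order gives either $\phi_{f_2}(t, \mathcal K^2_{i-1}, R^2_{i-1}) = \textit{true}$, so $\vec q \in R^2_i$, or $\vec q \in \textit{Dangling}(\mathcal K^2_{i-1})$, and $\textit{Dangling}(\mathcal K^2_{i-1}) \subseteq R^2_{i-1} \subseteq R^2_i$ because dangling states are collected into $R$ at every step, the initialisation included. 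For the third, having already proved $T_{\mathcal K^2_i} \subseteq T_{\mathcal K^1_i}$, removing transitions can only create further unreachable states and further states unable to reach a final state, so $\textit{Dangling}(\mathcal K^1_i) \subseteq \textit{Dangling}(\mathcal K^2_i) \subseteq R^2_i$. Thus $R^1_i \subseteq R^2_i$, closing the induction.

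It then remains to take limits and read off the controllers. As both $\{(\mathcal K^1_i, R^1_i)\}_i$ and $\{(\mathcal K^2_i, R^2_i)\}_i$ are the ascending Kleene chains whose suprema are the least fixed points of Theorem~\ref{the:abstractcontrollersynthesis}, and $(\mathcal K^1_i, R^1_i) \leq (\mathcal K^2_i, R^2_i)$ for every $i$, the pair $(\mathcal K^2_s, R^2_s)$ is an upper bound of the first chain, hence $(\mathcal K^1_s, R^1_s) \leq (\mathcal K^2_s, R^2_s)$, i.e.\ $T_{\mathcal K^2_s} \subseteq T_{\mathcal K^1_s}$ and $R^1_s \subseteq R^2_s$. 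If $\vec q_0 \in R^2_s$, then $\mathcal K^{(\phi_{p_2}, \phi_{f_2})}_{\mathcal A} = \langle~\rangle$ is a sub-automaton of anything and we are done; otherwise $\vec q_0 \notin R^1_s$ as well, both controllers are non-empty, and they share the initial state and the action sets $A^{\Permitted}, A^{\Box}, A^{o}$, while $Q \setminus R^2_s \subseteq Q \setminus R^1_s$, $T_{\mathcal K^2_s} \subseteq T_{\mathcal K^1_s}$ and $F \setminus R^2_s \subseteq F \setminus R^1_s$ give componentwise inclusion, so $\mathcal K^{(\phi_{p_2}, \phi_{f_2})}_{\mathcal A} \subseteq \mathcal K^{(\phi_{p_1}, \phi_{f_1})}_{\mathcal A}$. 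I expect the main obstacle to be the handling of dangling states inside the inductive step: one must use both the alternative disjunct $\vec q \in \textit{Dangling}(\mathcal K^2_i)$ built into Definition~\ref{def:popredicates} and the anti-monotonicity of $\textit{Dangling}(\cdot)$ under transition removal, and one has to order the sub-steps so that $T_{\mathcal K^2_i} \subseteq T_{\mathcal K^1_i}$ is available when bounding $\textit{Dangling}(\mathcal K^1_i)$; the rest is a routine unfolding of the definitions.
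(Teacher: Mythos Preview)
Your proof is correct and follows essentially the same approach as the paper's: both argue iteration by iteration using the implications built into Definition~\ref{def:popredicates}. The paper phrases it as a contradiction at the fixed-point level (pick the first iteration where a transition is pruned or a state is marked bad on the $1$-side, then invoke the predicate order at that index), whereas you run the forward induction explicitly and then pass to suprema; these are two presentations of the same argument. Your version is in fact more careful: you separately handle the case where a state enters $R^1_i$ via $\textit{Dangling}(\mathcal K^1_i)$ rather than via $\phi_{f_1}$, and you spell out the empty-controller case $\vec q_0 \in R^2_s$, both of which the paper's proof leaves implicit.
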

\begin{proof}
By Definition~\ref{def:abstractsynthesis}, both
$\mathcal K^{(\phi_{p_1}, \phi_{f_{1}})}_{\mathcal A}$ and $\mathcal K^{(\phi_{p_2}, \phi_{f_{2}})}_{\mathcal A}$ are sub-automata of $\mathcal A$, and they only differ in the sets of states and transitions.

By contradiction, assume that there exists a transition $t$ in
$
T_{\mathcal K^{(\phi_{p_2}, \phi_{f_2})}_{\mathcal A}}
\setminus
T_{\mathcal K^{(\phi_{p_1}, \phi_{f_1})}_{\mathcal A}}
$.
By Definition~\ref{def:abstractsynthesis}, let $i$ be the iteration where $t$ is removed from $T_{\mathcal K^{1}_{i}}$.
By hypothesis, it holds that
$\phi_{p_1}(t,\mathcal K^1_i, R^1_i) \Rightarrow \phi_{p_2}(t,\mathcal K^2_i, R^2_i) \vee t \not\in \mathcal K^2_i$,
hence by Definition~\ref{def:abstractsynthesis}, $t$ must also have been removed from
$T_{\mathcal K^{2}_{i}}$ or it is not present, a contradiction.

Similarly,
assume that there exists a state $\vec q$  in
$
Q_{\mathcal K^{(\phi_{p_2}, \phi_{f_{2}})}_{\mathcal A}}
\setminus
Q_{\mathcal K^{(\phi_{p_1}, \phi_{f_{1}})}_{\mathcal A}}
$.
By Definition~\ref{def:abstractsynthesis}, let $i$ be the iteration where $\vec q$ is added to $R_{\mathcal K^{1}_{i}}$.
By hypothesis, it holds that
$\phi_{f_1}(t,\mathcal K^1_i, R^1_i) \Rightarrow \phi_{f_2}(t,\mathcal K^2_i, R^2_i) \vee \vec q\in\textit{Dangling}(\mathcal K^2_i)$,
hence by Definition~\ref{def:abstractsynthesis}, $\vec q$ must also have been added to
$R_{\mathcal K^{2}_{i}}$.
Finally,  $Q_{\mathcal K^{(\phi_{p_2}, \phi_{f_{2}})}_{\mathcal A}} = Q_{\mathcal A} \setminus R^{2}_s$ and
$R_{\mathcal K^{2}_{i}} \subseteq R^{2}_s$, thus a contradiction is reached.
\end{proof}



This result has an immediate application in performing abstraction of syntheses, in the sense that
the lesser the pair of predicates the more abstract (in refinement terms) the corresponding synthesised automaton.
This can be useful to perform partial syntheses and skip unnecessary checks or even potentially undecidable computations.
For example, if $\mathcal K^{(\phi_{p_1},\,\phi_{f_{1}})} = \langle~\rangle$, for a given pair $(\phi_{p_1}, \phi_{f_{1}})$, then by Proposition~\ref{prop:controllerorder} we know that
for all $(\phi_{p_i}, \phi_{f_{i}})$ such that $(\phi_{p_1}, \phi_{f_{1}}) \leq (\phi_{p_i}, \phi_{f_{i}})$ it will hold that $\mathcal K^{(\phi_{p_i},\,\phi_{f_{i}})} = \langle~\rangle$.

While the orchestration synthesis of Definition~\ref{def:synthesisorchestration} is enforcing agreement, the mpc synthesis of Definition~\ref{def:mpc} is enforcing a generic predicate modelled as forbidden states.
Whenever the mpc synthesis is also enforcing agreement, as an instantiation of Proposition~\ref{prop:controllerorder}, we can prove that the two syntheses are related.
Moreover, agreement identifies forbidden transitions as those labelled by requests.
On the converse, the mpc synthesis identifies forbidden states rather than forbidden transitions.
Therefore, to enable a comparison of the mpc and the orchestration synthesis,
we need to
(i)~transform the automaton such that the predicate on forbidden transitions  (i.e.\ agreement in this case) can be expressed by means of forbidden states and
(ii)~instantiate the generic predicate expressed by forbidden states.
For point~(i), the synthesis of the mpc is applied to the automaton ${\mathcal A}'$ obtained from the original automaton ${\mathcal A}$ by erasing controllable forbidden transitions.
For point~(ii), forbidden states are those states that are sources of uncontrollable forbidden transitions.
This is what the following lemma states.

\begin{restatable}[Orchestration vs.\ mpc synthesis]{lemma}{lemmpcincludedorc}%
\label{lem:mpcincludedorc}
Given an \acronym\ $\mathcal A$,
let $\mathcal A'$ be obtained from $\mathcal A$ by removing all controllable request transitions and considering as forbidden the  states of $\mathcal A'$  with outgoing uncontrollable request transitions.
Let $\mathcal K_\mathcal{A}^{\textit{orc}}$ and $\mathcal K_{\mathcal{A}'}^{\textit{mpc}}$ be the orchestration and mpc of Definitions~\ref{def:synthesisorchestration} and~\ref{def:mpc}, respectively.
Then:
\[
\mathcal K_{\mathcal{A}'}^{\textit{mpc}}\subseteq\mathcal K_\mathcal{A}^{\textit{orc}}
\]
\end{restatable}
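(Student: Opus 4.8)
The plan is to obtain the inclusion from Proposition~\ref{prop:controllerorder}. The first step is to make the two syntheses operate on a common automaton. Since the orchestration synthesis function $f_o$ of Definition~\ref{def:synthesisorchestration} prunes \emph{every} request transition already at its first iteration, and since deleting request transitions in advance affects neither the transition sets $T_{\mathcal K_i}$ for $i\ge 1$ nor the request-free reachability that decides which states are dangling, one has $\mathcal K^{\textit{orc}}_{\mathcal A}=\mathcal K^{\textit{orc}}_{\mathcal A'}$ (giving $\mathcal A'$ the same action sets as $\mathcal A$; a short auxiliary lemma handles the slight mismatch of the intermediate iterations). Hence it suffices to compare the orchestration and the mpc \emph{both taken on $\mathcal A'$}, where $\mathcal A'$ carries as forbidden exactly the states sourcing an uncontrollable request, as in the statement.

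By Theorem~\ref{the:abstractorchestrationsynthesis} and Theorem~\ref{the:abstractmpcsynthesis}, these two syntheses on $\mathcal A'$ are the abstract syntheses of Definition~\ref{def:abstractsynthesis} for the parameter pairs $(\phi^{\textit{orc}}_p,\phi^{\textit{orc}}_f)$ and $(\phi^{\textit{mpc}}_p,\phi^{\textit{mpc}}_f)$, respectively. So, by Proposition~\ref{prop:controllerorder}, the inclusion follows once I establish $(\phi^{\textit{orc}}_p,\phi^{\textit{orc}}_f)\le(\phi^{\textit{mpc}}_p,\phi^{\textit{mpc}}_f)$ in the order of Definition~\ref{def:popredicates}. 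Imitating the proof of Proposition~\ref{prop:controllerorder}, I would argue by induction on the fixed-point iteration $i$, simultaneously maintaining the invariants $T_{\mathcal K^m_i}\subseteq T_{\mathcal K^o_i}$ and $R^o_i\subseteq R^m_i$; the base case is immediate since $\mathcal A'\subseteq\mathcal A$ gives $\textit{Dangling}(\mathcal A)\subseteq\textit{Dangling}(\mathcal A')$.

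For the pruning part, if $\phi^{\textit{orc}}_p$ fires on $t=(\vec q,\vec a,\vec q\,')$ then either $\vec q\,'\in R^o_i\subseteq R^m_i$, so $\phi^{\textit{mpc}}_p$ fires, or $t$ is a request; but every request of $\mathcal A'$ is uncontrollable (all controllable requests of $\mathcal A$ have been deleted, and deleting transitions cannot create a match witness), so $\vec q$ is a forbidden state of $\mathcal A'$ and, this being a static property, $\phi^{\textit{mpc}}_p(t,\mathcal K^m_i,R^m_i)$ holds. For the forbidden part, if $\phi^{\textit{orc}}_f$ deems $\vec q$ bad it is via some necessary $t=(\vec q,\vec a,\vec q\,')\in T^\Box_{\mathcal A'}$ uncontrollable in $\mathcal K^o_i$. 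When $\vec a$ is a match, $t$ witnesses its own controllability unless it has already been pruned — so $\vec q\,'$ has entered $R^o_i\subseteq R^m_i$ and $\phi^{\textit{mpc}}_f$ fires — or an endpoint of $t$ is dangling in $\mathcal K^o_i$ and hence, by $T_{\mathcal K^m_i}\subseteq T_{\mathcal K^o_i}$, in $\mathcal K^m_i$ (giving $\vec q\in\textit{Dangling}(\mathcal K^m_i)$ or $\vec q\,'\in\textit{Dangling}(\mathcal K^m_i)\subseteq R^m_i$). When $\vec a$ is a request, $t$ is uncontrollable already in $\mathcal A'$, so $\vec q$ is forbidden in $\mathcal A'$, all of its outgoing transitions are pruned in $\mathcal K^m_i$, and $\vec q$ becomes dangling there.

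I expect the main obstacle to be exactly this last sub-case: equating the orchestration's \emph{dynamic} rejection of a state because of an unmatched necessary \emph{request} with the mpc's \emph{static} notion of forbidden state. One has to make sure that a forbidden state of $\mathcal A'$, once stripped of its outgoing transitions, is indeed caught by $\textit{Dangling}$ and enters $R^m$ — which relies on the standard SCT convention that a forbidden state is never marked/final (a state sourcing an unmatched necessary request is not a completed task), since otherwise a forbidden final state would survive in the mpc but not in the orchestration. Granting this, Proposition~\ref{prop:controllerorder} yields $\mathcal K^{\textit{mpc}}_{\mathcal A'}\subseteq\mathcal K^{\textit{orc}}_{\mathcal A'}=\mathcal K^{\textit{orc}}_{\mathcal A}$; the residual bookkeeping is routine — the empty cases hold because $\vec q_0\in R^m_s$ makes the mpc empty and, otherwise, $\vec q_0\notin R^o_s$ too, while the clean-up set $T'$ subtracted in the orchestration output only removes transitions emanating from states in $R^o_s\subseteq R^m_s$, i.e.\ transitions that are anyway incident to states outside the mpc's state set.
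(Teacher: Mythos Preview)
Your approach is essentially the paper's: both establish the predicate order of Definition~\ref{def:popredicates} by induction on the fixed-point iterations and conclude via Proposition~\ref{prop:controllerorder}. The one presentational difference is that the paper carries out the induction directly with the orchestration starting from $\mathcal A$ and the mpc starting from $\mathcal A'$ (so its base case is $\mathcal K_{{\mathcal A}'_0}\subseteq\mathcal K_0$ and $\textit{Dangling}(\mathcal K_0)\subseteq\textit{Dangling}(\mathcal K_{{\mathcal A}'_0})$), whereas you first argue $\mathcal K^{\textit{orc}}_{\mathcal A}=\mathcal K^{\textit{orc}}_{\mathcal A'}$ in order to invoke Proposition~\ref{prop:controllerorder} on a common automaton; this buys you a cleaner appeal to the proposition at the cost of an auxiliary lemma, while the paper avoids that lemma but effectively re-proves the proposition inline across two starting automata. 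Your case analysis for $\phi_f$ (distinguishing match from request) is more explicit than the paper's, and the edge case you flag---a forbidden state that is also final would not become dangling---is a genuine subtlety that the paper's proof passes over; your appeal to the SCT convention that forbidden and marked states are disjoint is the right patch.
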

\begin{proof}
By Theorems~\ref{the:abstractorchestrationsynthesis} and~\ref{the:abstractmpcsynthesis}, $\mathcal K_\mathcal{A}^{\textit{orc}}$ and $\mathcal K_{\mathcal{A}'}^{\textit{mpc}}$ are equivalent to $\mathcal K_\mathcal{A}^{(\phi_{p}^{\textit{orc}},\,\phi_{f}^{\textit{orc}})}$ and $\mathcal K_{\mathcal{A}'}^{(\phi_{p}^{\textit{mpc}},\,\phi_{f}^{\textit{mpc}})}$, respectively.
Moreover, both controllers are sub-automata of $\mathcal A$, and they only differ in the sets of states and transitions.
%
%

Recall that, given $t=(\vec q, \vec a, \vec q\,')$,
$
\phi_p^{\textit{mpc}}(t, \mathcal K^{\textit{mpc}}_i, R^{\textit{mpc}}_i) =  (\vec q\,' \in R^{\textit{mpc}}_i) \vee (\vec q \text{ is forbidden})
$,
$
\phi_f^{\textit{mpc}}(t, \mathcal K^{\textit{mpc}}_i, R^{\textit{mpc}}_i) = (\vec q\,' \in R^{\textit{mpc}}_i)
$
and
$
 \phi_p^{\textit{orc}}(t, \mathcal K^{\textit{orc}}_i, R^{\textit{orc}}_i) = (t \text{ is a request }) \vee (\vec q\,' \in R^{\textit{orc}}_i)
$,
$
 \phi_f^{\textit{orc}}(t, \mathcal K^{\textit{orc}}_i, R^{\textit{orc}}_i) = \nexists\,(\vec{q_2} \TRANS{\vec a_2} \vec{q_2}') \in T^\Box_{\mathcal K^{\textit{orc}}_i} \,:\,
 (\vec a_2 \text{ is a match })
 \wedge
 (\vec q_2, \vec{q_2}' \not\in \textit{Dangling}(\mathcal K^{\textit{orc}}_i))
\wedge
(\ithel{\vec{q}}{i} = \ithel{\vec{q_2}}{i})
\wedge
(\ithel{\vec{a}}{i} = \ithel{\vec{a_2}}{i} = a)
$.

We proceed by induction on $i$.
For the base case, it holds that $\mathcal K_{{\mathcal A}'_0} \subseteq \mathcal K_0$ and
$\textit{Dangling}(\mathcal K_0) \subseteq \textit{Dangling}(\mathcal K_{{\mathcal A}'_0})$.
By hypothesis, $\phi_p^{\textit{orc}}(t, \mathcal K^{\textit{orc}}_0, R^{\textit{orc}}_0)$ is true. Then either $t$ is a request or $\vec q\,' \in \textit{Dangling}(\mathcal K_0)$.
If $t$ is a request, then $t$ has been already pruned.
Otherwise, $\vec q\,' \in \textit{Dangling}(\mathcal K_0)$ (or both), and so it is in $\textit{Dangling}(\mathcal K_{{\mathcal A}'_0})$ and the pruning predicate of the mpc is satisfied.
Similarly, by hypothesis $\phi_f^{\textit{orc}}(t, \mathcal K^{\textit{orc}}_0, R^{\textit{orc}}_0)$ is true. Since no transitions have been pruned in
$ \mathcal K_0$, it must be the case that the source state of $t$ is in $\textit{Dangling}(\mathcal K_0)$, and so it is in $\textit{Dangling}(\mathcal K_{{\mathcal A}'_0})$.

For the inductive step, the implication on the pruning predicate is satisfied by noticing that $R_{i-1}^{\textit{orc}} \subseteq R_{i-1}^{\textit{mpc}}$.
The implication on the forbidden predicate is satisfied because trivially $t \not\in T^\Box_{\mathcal K^{\textit{orc}}_i}$, and hence
$t \not\in T^\Box_{\mathcal K^{\textit{mpc}}_i}$, and this is because either the target is dangling or the source is forbidden.
In both cases the forbidden predicate of the mpc is satisfied.
\end{proof}



Thus, for example, given an MSCA $\mathcal A$, from $\mathcal K_\mathcal{A}^{\textit{orc}} = \langle~\rangle$ we can conclude that $\mathcal K_\mathcal{A}^{\textit{mpc}} = \langle~\rangle$ by Lemma~\ref{lem:mpcincludedorc}, without actually computing it.



\begin{exa}
\modiff{Concluding the running example, one can observe that the mpc of $\mathcal A_1$ is a sub-automaton (formed of only the initial and final state) of the orchestration of $\mathcal A_1$.}
\end{exa}



\section{Related Work}\label{sect:related}

Our contributions to bridging the gap between SCT and coordination of services concern adaptations of the classical synthesis algorithm from SCT in order to synthesise orchestrations and choreographies of service contracts formalised as MSCA\@.
In the literature, there exist many formalisms for modelling and analysing (service) contracts, ranging from behavioural type systems, including behavioural contracts~\cite{CGP09,ABZ13,LP15} and session types~\cite{BLMT08,HYC08,DD09,CDP12,MNF13}, to automata-based formalisms, including interface automata~\cite{AH01} and (timed) (I/O) automata~\cite{LT89,AD94,DLLNW10}. Foundational models for service contracts and session types are surveyed in~\cite{BBG07,BCZ15,Hut16}.

The MSCA formalism used in this paper differs fundamentally from these models, which typically study notions of contract compliance involving only two parties, since MSCA primitively support \emph{multi-party} compliance of contracts that \emph{compete} on offering or requesting the same service.
Furthermore, the above models do not consider modalities of services whereas MSCA provide primitive support for \emph{permitted} and \emph{necessary} service actions, which resulted in the introduction of a novel notion of \emph{semi-controllability} in the context of SCT\@.
Modal Transition Systems (MTS) and their extensions~\cite{Kre17}, as adopted for instance in Software Product Line Engineering (SPLE~\cite{PBL05,ABKS13}), like modal I/O automata~\cite{LNW07} and MTS with variability constraints~\cite{BFGM16}, do natively distinguish may and must modalities, but the other differences remain. In particular, they cannot explicitly handle dynamic composition by allowing new services that join composite services to intercept already matched actions.

We are only aware of two other applications of SCT to MTS\@. In~\cite{DDM10}, there is no direct relation between may/must and controllable/uncontrollable, and the modal automaton (i.e.\ MTS with final states) is seen as a predicate  that is satisfied if the plant automaton (i.e.\ the system to be refined against the predicate) is a sort of alternate refinement of the predicate.
Similarly, in~\cite{FP07}, the control objectives (i.e.\ the predicate) is a modal automaton, non-blockingness is not considered, and another modal automaton describes which actions are controllable and which are uncontrollable in the plant automaton.
In this paper, the predicate is an invariant (i.e.\ forbidden states and forbidden transitions are given), the modal automaton (i.e.\ MSCA) is the plant, and a necessary transition induces different notions of controllability according to the adopted coordination paradigm.

SCT was first applied to SPLE in~\cite{BRV16} by showing how the \texttt{CIF\,3} toolset~\cite{vBFHHMvdMR14} can automatically synthesise a single (global, family) model representing an automaton for each of the valid products of a product line from (i)~a feature constraint with attributes (e.g.\ cost), (ii)~behavioural component models associated with the features, and (iii)~additional behavioural requirements like state invariants, action orderings, and guards on actions (reminiscent of the Featured Transition Systems of~\cite{CCSHLR13}). The resulting \texttt{CIF\,3} model satisfies all feature-related constraints as well as all given behavioural requirements. Since \texttt{CIF\,3} allows the export of such models in a format accepted by the \texttt{mCRL2} model checker~\cite{CGKSVWW13}, the latter can be used to verify arbitrary behavioural properties expressed in the modal $\mu$-calculus with data or its feature-oriented variant of~\cite{BVW17}. An important advantage is that both \texttt{CIF\,3} and \texttt{mCRL2} can be used off-the-shelf, meaning that no additional tools are required. Differently from our approach, all actions are controllable and orchestration is not considered. In~\cite{BDFT16}, the prototypical tool CAT supporting orchestration synthesis for CA is presented.

The only approach by others to bridge the gap between SCT and coordination of services that we are aware of is that of~\cite{ADR16}, where services are formalised as so-called Service Labelled Transition Systems (SLTS), which are a kind of guarded automata with data.
To this aim, SCT is adapted to deal with conditions and variables as well as with a means to enforce services based on runtime information.
However, service composition through SLTS is based on the standard synchronous product, whilst the contract composition expresses competing contracts.
More importantly, in~\cite{ADR16}, input actions are considered uncontrollable whereas output actions are controllable, in the standard view of a service interacting with the environment.
Our contribution induces novel notions of controllability to express  necessary requirements that are semi-controllable.
The standard controller synthesis algorithm is used in~\cite{GMW12} to synthesise adapters between services. These adapters act like proxies and are used to enforce properties such as deadlock-freedom. Compared to our work, the interactions
between services are driven by their contracts rather than by adapters. The standard controller synthesis algorithm cannot be applied to synthesise a correct composition of contracts.

We conclude this section by describing two recent extensions of MSCA, developed for different purposes, and for which we also defined adapted synthesis algorithms.
In~\cite{JSCP}, we presented Featured Modal Contract Automata (FMCA). Technically, we extended MSCA with a variability mechanism concerning structural constraints that operate on the service contract, used to define different configurations. This reflects the fact that services are typically reused in configurations that vary over time and need to dynamically adapt to changing environments~\cite{survey}. Configurations were characterised by which service actions are mandatory and which forbidden. The valid configurations were defined as those respecting all structural constraints. We followed the well-established paradigm of SPLE, which aims at efficiently managing a product line (family) of highly (re)configurable systems to allow for mass customisation~\cite{PBL05,ABKS13}. To compactly represent a product line, i.e.\ the set of valid product configurations, we used a so-called feature constraint, a propositional formula $\varphi$ whose atoms are features~\cite{Bat05}, and we identified features as service actions (offers as well as requests). A valid product then distinguishes a set of mandatory and a set of forbidden actions. 
Consequently, we defined an algorithm to compute the FMCA $\mathcal K_{\mathcal{A}_{p}}$ as the mpc for a valid product~$p$ of an FMCA $\mathcal A$.
The main adaptation of the synthesis algorithm for MSCA was to consider as bad states also those that cannot prevent a forbidden action to be eventually executed and to discard the transitions labelled with actions forbidden by~$p$. Moreover, if some action that is mandatory in~$p$ is unavailable in the automaton that results from the fixed point iteration, then the mpc results empty.
\modiff{In~\cite{JSCP}, we also presented an evaluation of FMCA with the prototypical tool FMCAT\@. Building on CAT~\cite{BDFT16}, FMCAT can synthesise the orchestration of an FMCA in terms of its mpc. The results clearly show the gain in expressiveness due to the notion of semi-controllability, as well as the reduction of the number of configurations needed to compute the orchestration due to the introduction of a partial order of products of FMCA\@. This inspired us to consider semi-controllability also in MSCA and to develop a partial order of controllers for MSCA in this paper.}

In~\cite{BBL19}, we presented Timed Service Contract Automata (TSCA) as an extension of the FMCA from~\cite{JSCP} with real-time constraints. 
Formally, a configuration of a TSCA is a triple consisting of a recognised trace, a state, and a valuation of clocks.
The (finite) behaviour recognised by a TSCA are traces of alternating time and discrete transitions, i.e.\ in a given configuration either time progresses (a silent action in the languages recognised by TSCA) or a discrete step to a new configuration is performed.
Consequently, we defined an algorithm to compute the orchestration synthesis of TSCA\@. To respect the timing constraints, we used the notion of zones from timed games~\cite{AMPS98,CDFLL05}.
The resulting synthesis algorithm resembles a timed game, but it differs from classical timed game algorithms~\cite{AMPS98,CDFLL05,DLLNW10} by combining two separate games, viz.\ \emph{reachability} games (to ensure that marked states must be reachable) and \emph{safety} games (to ensure that forbidden states are never traversed).
A TSCA might be such that all bad configurations are unreachable (i.e.\ it is safe), while at the same time no final configuration is reachable (i.e.\ the resulting orchestration is empty).

\section{Conclusion}\label{sect:conclusion}

This paper presents our recent efforts, originally published in~\cite{BBP19}, concerning bridging the gap between the most permissive controller synthesis from Supervisory Control Theory with synthesis algorithms of orchestrations and choreographies for a formal model of service contracts called Modal Service Contract Automata. 
This includes a novel algorithm capable of synthesising a safe non-blocking composition of service contracts that is directly translatable into a choreographed formalism. 
A further contribution is an abstract synthesis algorithm that generalises the synthesis of the choreography, as well as that of the orchestration and that of the most permissive controller. 
This paper includes 
the proofs of all statements from~\cite{BBP19}. Furthermore, it contains a formal demonstration that the different synthesis algorithms are related through a notion of refinement, which allows us to formally prove that, under mild assumptions, the orchestration synthesis is an abstraction of the mpc synthesis. \modiff{Finally, the paper includes an extensive running example from the service domain that illustrates our contributions.}

The properties to be enforced in the algorithms presented in this paper are all invariants specified through either forbidden states or forbidden transitions. Future work is needed to investigate the abstract syntheses under other non-invariant properties.
Another avenue for future research is to investigate the different features of micro-services with respect to services, and to study what is needed to adapt the formalism of (timed/modal service) contract automata and our results to deal with micro-services.

\section*{Acknowledgments}
  \noindent We acknowledge useful comments from the reviewers and funding from the MIUR PRIN 2017FTXR7S project IT MaTTerS (Methods and Tools for Trustworthy Smart Systems).

\bibliographystyle{plainurl}
\bibliography{bib}

\begin{thebibliography}{10}

\bibitem{ABZ13}
L.~Acciai, M.~Boreale, and G.~Zavattaro.
\newblock Behavioural contracts with request-response operations.
\newblock {\em Sci. Comp. Program.}, 78(2):248--267, 2013.
\newblock \href {https://doi.org/10.1016/j.scico.2011.10.007}
  {\path{doi:10.1016/j.scico.2011.10.007}}.

\bibitem{AD94}
R.~Alur and D.~Dill.
\newblock {A Theory of Timed Automata}.
\newblock {\em Theoret. Comp. Sci.}, 126(2):183--235, 1994.
\newblock \href {https://doi.org/10.1016/0304-3975(94)90010-8}
  {\path{doi:10.1016/0304-3975(94)90010-8}}.

\bibitem{ABKS13}
S.~Apel, D.~S. Batory, C.~K{\"{a}}stner, and G.~Saake.
\newblock {\em Feature-Oriented Software Product Lines: Concepts and
  Implementation}.
\newblock Springer, 2013.
\newblock \href {https://doi.org/10.1007/978-3-642-37521-7}
  {\path{doi:10.1007/978-3-642-37521-7}}.

\bibitem{AMPS98}
E.~Asarin, O.~Maler, A.~Pnueli, and J.~Sifakis.
\newblock {Controller Synthesis for Timed Automata}.
\newblock {\em IFAC Proc. Vol.}, 31(18):447--452, 1998.
\newblock \href {https://doi.org/10.1016/S1474-6670(17)42032-5}
  {\path{doi:10.1016/S1474-6670(17)42032-5}}.

\bibitem{ADR16}
F.~Atampore, J.~Dingel, and K.~Rudie.
\newblock {Automated Service Composition Via Supervisory Control Theory}.
\newblock In {\em WODES}, pages 28--35. IEEE, 2016.
\newblock \href {https://doi.org/10.1109/WODES.2016.7497822}
  {\path{doi:10.1109/WODES.2016.7497822}}.

\bibitem{APSS16}
S.~Azzopardi, G.~J. Pace, F.~Schapachnik, and G.~Schneider.
\newblock Contract automata: An operational view of contracts between
  interactive parties.
\newblock {\em Artif. Intell. Law}, 24(3):203--243, 2016.
\newblock \href {https://doi.org/10.1007/s10506-016-9185-2}
  {\path{doi:10.1007/s10506-016-9185-2}}.

\bibitem{BCZ15}
M.~Bartoletti, T.~Cimoli, and R.~Zunino.
\newblock {Compliance in Behavioural Contracts: A Brief Survey}.
\newblock In {\em Programming Languages with Applications to Biology and
  Security}, volume 9465 of {\em LNCS}, pages 103--121. Springer, 2015.
\newblock \href {https://doi.org/10.1007/978-3-319-25527-9_9}
  {\path{doi:10.1007/978-3-319-25527-9_9}}.

\bibitem{BDF16}
D.~Basile, P.~Degano, and G.~L. Ferrari.
\newblock {Automata for Specifying and Orchestrating Service Contracts}.
\newblock {\em Log. Meth. Comp. Sci.}, 12(4:6):1--51, 2016.
\newblock \href {https://doi.org/10.2168/LMCS-12(4:6)2016}
  {\path{doi:10.2168/LMCS-12(4:6)2016}}.

\bibitem{BDFT16}
D.~Basile, P.~Degano, G.~L. Ferrari, and E.~Tuosto.
\newblock {Playing with Our \textsf{CAT} and Communication-Centric
  Applications}.
\newblock In {\em FORTE}, volume 9688 of {\em LNCS}, pages 62--73. Springer,
  2016.
\newblock \href {https://doi.org/10.1007/978-3-319-39570-8_5}
  {\path{doi:10.1007/978-3-319-39570-8_5}}.

\bibitem{BDFT15}
D.~Basile, P.~Degano, G.~L. Ferrari, and E.~Tuosto.
\newblock Relating two automata-based models of orchestration and choreography.
\newblock {\em J. Log. Algebr. Meth. Program.}, 85(3):425--446, 2016.
\newblock \href {https://doi.org/10.1016/j.jlamp.2015.09.011}
  {\path{doi:10.1016/j.jlamp.2015.09.011}}.

\bibitem{BDGDF17}
D.~Basile, F.~{Di Giandomenico}, S.~Gnesi, P.~Degano, and G.~L. Ferrari.
\newblock {Specifying Variability in Service Contracts}.
\newblock In {\em VaMoS}, pages 20--27. ACM, 2017.
\newblock \href {https://doi.org/10.1145/3023956.3023965}
  {\path{doi:10.1145/3023956.3023965}}.

\bibitem{JSCP}
D.~Basile, M.~H. ter Beek, P.~Degano, A.~Legay, G.~L. Ferrari, S.~Gnesi, and
  F.~{Di Giandomenico}.
\newblock {Controller synthesis of service contracts with variability}.
\newblock {\em Science of Computer Programming}, 187, 2020.
\newblock \href {https://doi.org/10.1016/j.scico.2019.102344}
  {\path{doi:10.1016/j.scico.2019.102344}}.

\bibitem{BBL19}
D.~Basile, M.~H. ter Beek, and A.~Legay.
\newblock {Timed service contract automata}.
\newblock {\em Innovations Syst. Softw. Eng.}, 2020.
\newblock \href {https://doi.org/10.1007/s11334-019-00353-3}
  {\path{doi:10.1007/s11334-019-00353-3}}.

\bibitem{BBP19}
D.~Basile, M.~H. ter Beek, and R.~Pugliese.
\newblock {Bridging the Gap Between Supervisory Control and Coordination of
  Services: Synthesis of Orchestrations and Choreographies}.
\newblock In {\em COORDINATION}, volume 11533 of {\em LNCS}, pages 129--147.
  Springer, 2019.
\newblock \href {https://doi.org/10.1007/978-3-030-22397-7_8}
  {\path{doi:10.1007/978-3-030-22397-7_8}}.

\bibitem{Bat05}
D.~S. Batory.
\newblock {Feature Models, Grammars, and Propositional Formulas}.
\newblock In {\em SPLC}, volume 3714 of {\em LNCS}, pages 7--20. Springer,
  2005.
\newblock \href {https://doi.org/10.1007/11554844_3}
  {\path{doi:10.1007/11554844_3}}.

\bibitem{manifesto}
A.~Bouguettaya, M.~Singh, M.~Huhns, Q.~Z. Sheng, H.~Dong, Q.~Yu, A.~G. Neiat,
  S.~Mistry, B.~Benatallah, B.~Medjahed, M.~Ouzzani, F.~Casati, X.~Liu,
  H.~Wang, D.~Georgakopoulos, L.~Chen, S.~Nepal, Z.~Malik, A.~Erradi, Y.~Wang,
  B.~Blake, S.~Dustdar, F.~Leymann, and M.~Papazoglou.
\newblock {A Service Computing Manifesto: The Next 10 Years}.
\newblock {\em Commun. ACM}, 60(4):64--72, 2017.
\newblock \href {https://doi.org/10.1145/2983528} {\path{doi:10.1145/2983528}}.

\bibitem{BLMT08}
R.~Bruni, I.~Lanese, H.~C. Melgratti, and E.~Tuosto.
\newblock {Multiparty Sessions in {SOC}}.
\newblock In {\em COORDINATION}, volume 5052 of {\em LNCS}, pages 67--82.
  Springer, 2008.
\newblock \href {https://doi.org/10.1007/978-3-540-68265-3_5}
  {\path{doi:10.1007/978-3-540-68265-3_5}}.

\bibitem{CL06}
C.~G. Cassandras and S.~Lafortune.
\newblock {\em Introduction to Discrete Event Systems}.
\newblock Springer, 2006.
\newblock \href {https://doi.org/10.1007/978-0-387-68612-7}
  {\path{doi:10.1007/978-0-387-68612-7}}.

\bibitem{CDFLL05}
F.~Cassez, A.~David, E.~Fleury, K.~G. Larsen, and D.~Lime.
\newblock {Efficient On-the-Fly Algorithms for the Analysis of Timed Games}.
\newblock In {\em CONCUR}, volume 3653 of {\em LNCS}, pages 66--80. Springer,
  2005.
\newblock \href {https://doi.org/10.1007/11539452_9}
  {\path{doi:10.1007/11539452_9}}.

\bibitem{CDP12}
G.~Castagna, M.~Dezani-Ciancaglini, and L.~Padovani.
\newblock {On Global Types and Multi-Party Sessions}.
\newblock {\em Log. Meth. Comp. Sci.}, 8(1:24):1--45, 2012.
\newblock \href {https://doi.org/10.2168/LMCS-8(1:24)2012}
  {\path{doi:10.2168/LMCS-8(1:24)2012}}.

\bibitem{CGP09}
G.~Castagna, N.~Gesbert, and L.~Padovani.
\newblock {A Theory of Contracts for Web Services}.
\newblock {\em ACM Trans. Program. Lang. Syst.}, 31(5):19:1--19:61, 2009.
\newblock \href {https://doi.org/10.1145/1538917.1538920}
  {\path{doi:10.1145/1538917.1538920}}.

\bibitem{CCSHLR13}
A.~Classen, M.~Cordy, P.~{-}~Y. Schobbens, P.~Heymans, A.~Legay, and J.~{-}~F.
  Raskin.
\newblock {Featured Transition Systems: Foundations for Verifying
  Variability-Intensive Systems and Their Application to LTL Model Checking}.
\newblock {\em IEEE Trans. Softw. Eng.}, 39(8):1069--1089, 2013.
\newblock \href {https://doi.org/10.1109/TSE.2012.86}
  {\path{doi:10.1109/TSE.2012.86}}.

\bibitem{CGKSVWW13}
S.~Cranen, J.~F. Groote, J.~J.~A. Keiren, F.~P.~M. Stappers, E.~P. de~Vink,
  W.~Wesselink, and T.~A.~C. Willemse.
\newblock {An Overview of the mCRL2 Toolset and Its Recent Advances}.
\newblock In {\em TACAS}, volume 7795 of {\em LNCS}, pages 199--213. Springer,
  2013.
\newblock \href {https://doi.org/10.1007/978-3-642-36742-7_15}
  {\path{doi:10.1007/978-3-642-36742-7_15}}.

\bibitem{DDM10}
P.~Darondeau, J.~Dubreil, and H.~Marchand.
\newblock {Supervisory Control for Modal Specifications of Services}.
\newblock {\em IFAC Proc. Vol.}, 43(12):418--425, 2010.
\newblock \href {https://doi.org/10.3182/20100830-3-DE-4013.00069}
  {\path{doi:10.3182/20100830-3-DE-4013.00069}}.

\bibitem{DLLNW10}
A.~David, K.~G. Larsen, A.~Legay, U.~Nyman, and A.~W{\k{a}}sowski.
\newblock {Timed I/O Automata: A Complete Specification Theory for Real-time
  Systems}.
\newblock In {\em HSCC}, pages 91--100. ACM, 2010.
\newblock \href {https://doi.org/10.1145/1755952.1755967}
  {\path{doi:10.1145/1755952.1755967}}.

\bibitem{AH01}
L.~de~Alfaro and T.~Henzinger.
\newblock {Interface Automata}.
\newblock In {\em ESEC/FSE}, pages 109--120. ACM, 2001.
\newblock \href {https://doi.org/10.1145/503209.503226}
  {\path{doi:10.1145/503209.503226}}.

\bibitem{DD09}
M.~Dezani-Ciancaglini and U.~de'Liguoro.
\newblock {Sessions and Session Types: An Overview}.
\newblock In {\em WS-FM}, volume 6194 of {\em LNCS}, pages 1--28. Springer,
  2010.
\newblock \href {https://doi.org/10.1007/978-3-642-14458-5_1}
  {\path{doi:10.1007/978-3-642-14458-5_1}}.

\bibitem{FP07}
G.~Feuillade and S.~Pinchinat.
\newblock {Modal Specifications for the Control Theory of Discrete Event
  Systems}.
\newblock {\em Discrete Event Dyn. Syst.}, 17(2):211--232, 2007.
\newblock \href {https://doi.org/10.1007/s10626-006-0008-6}
  {\path{doi:10.1007/s10626-006-0008-6}}.

\bibitem{FMSR12}
S.~T.~J. Forschelen, J.~M. van~de Mortel{-}Fronczak, R.~Su, and J.~E. Rooda.
\newblock Application of supervisory control theory to theme park vehicles.
\newblock {\em Discrete Event Dyn. Syst.}, 22(4):511--540, 2012.
\newblock \href {https://doi.org/10.1007/s10626-012-0130-6}
  {\path{doi:10.1007/s10626-012-0130-6}}.

\bibitem{GMW12}
C.~Gierds, A.~J. Mooij, and K.~Wolf.
\newblock {Reducing Adapter Synthesis to Controller Synthesis}.
\newblock {\em IEEE Trans. Services Computing}, 5(1):72--85, 2012.
\newblock \href {https://doi.org/10.1109/TSC.2010.57}
  {\path{doi:10.1109/TSC.2010.57}}.

\bibitem{GW00}
P.~Gohari and W.~M. Wonham.
\newblock On the complexity of supervisory control design in the {RW}
  framework.
\newblock {\em IEEE Trans. Syst., Man, Cybern. B, Cybern.}, 30(5):643--652,
  2000.
\newblock \href {https://doi.org/10.1109/3477.875441}
  {\path{doi:10.1109/3477.875441}}.

\bibitem{HYC08}
K.~Honda, N.~Yoshida, and M.~Carbone.
\newblock {Multiparty Asynchronous Session Types}.
\newblock In {\em POPL}, pages 273--284. ACM, 2008.
\newblock \href {https://doi.org/10.1145/1328438.1328472}
  {\path{doi:10.1145/1328438.1328472}}.

\bibitem{Hut16}
H.~H\"{u}ttel, I.~Lanese, V.~T. Vasconcelos, L.~Caires, M.~Carbone, P.~{-}~M.
  Deni{\'e}lou, D.~Mostrous, L.~Padovani, A.~Ravara, E.~Tuosto, H.~{Torres
  Vieira}, and G.~Zavattaro.
\newblock {Foundations of Session Types and Behavioural Contracts}.
\newblock {\em ACM Comput. Surv.}, 49(1):3:1--3:36, 2016.
\newblock \href {https://doi.org/10.1145/2873052} {\path{doi:10.1145/2873052}}.

\bibitem{WS-CDL}
N.~Kavantzas, D.~Burdett, G.~Ritzinger, T.~Fletcher, Y.~Lafon, and C.~Barreto.
\newblock {Web Services Choreography Description Language v1.0}.
\newblock \url{https://www.w3.org/TR/ws-cdl-10/}, 2005.

\bibitem{Kre17}
J.~K{\v{r}}et{\'i}nsk{\'y}.
\newblock {30 Years of Modal Transition Systems: Survey of Extensions and
  Analysis}.
\newblock In {\em Models, Algorithms, Logics and Tools}, volume 10460 of {\em
  LNCS}, pages 36--74. Springer, 2017.
\newblock \href {https://doi.org/10.1007/978-3-319-63121-9_3}
  {\path{doi:10.1007/978-3-319-63121-9_3}}.

\bibitem{LP15}
C.~Laneve and L.~Padovani.
\newblock {An algebraic theory for web service contracts}.
\newblock {\em Form. Asp. Comp.}, 27(4):613--640, 2015.
\newblock \href {https://doi.org/10.1007/s00165-015-0334-2}
  {\path{doi:10.1007/s00165-015-0334-2}}.

\bibitem{Lange:2015:CMG:2676726.2676964}
J.~Lange, E.~Tuosto, and N.~Yoshida.
\newblock {From Communicating Machines to Graphical Choreographies}.
\newblock In {\em POPL}, pages 221--232. ACM, 2015.
\newblock \href {https://doi.org/10.1145/2676726.2676964}
  {\path{doi:10.1145/2676726.2676964}}.

\bibitem{LNW07}
K.~G. Larsen, U.~Nyman, and A.~W\k{a}sowski.
\newblock {Modal I/O Automata for Interface and Product Line Theories}.
\newblock In {\em ESOP}, volume 4421 of {\em LNCS}, pages 64--79. Springer,
  2007.
\newblock \href {https://doi.org/10.1007/978-3-540-71316-6_6}
  {\path{doi:10.1007/978-3-540-71316-6_6}}.

\bibitem{LT89}
N.~Lynch and M.~Tuttle.
\newblock {An Introduction to Input/Output Automata}.
\newblock {\em CWI Q.}, 2:219--246, 1989.
\newblock URL: \url{https://ir.cwi.nl/pub/18164/18164A.pdf}.

\bibitem{MNF13}
J.~Michaux, E.~Najm, and A.~Fantechi.
\newblock {Session types for safe Web service orchestration}.
\newblock {\em J. Log. Algebr. Program.}, 82(8):282--310, 2013.
\newblock \href {https://doi.org/10.1016/j.jlap.2013.05.004}
  {\path{doi:10.1016/j.jlap.2013.05.004}}.

\bibitem{Pel03}
C.~Peltz.
\newblock {Web Services Orchestration and Choreography}.
\newblock {\em IEEE Comp.}, 36(10):46--52, 2003.
\newblock \href {https://doi.org/10.1109/MC.2003.1236471}
  {\path{doi:10.1109/MC.2003.1236471}}.

\bibitem{PBL05}
K.~Pohl, G.~B{\"o}ckle, and F.~J. van~der Linden.
\newblock {\em Software Product Line Engineering: Foundations, Principles, and
  Techniques}.
\newblock Springer, 2005.
\newblock \href {https://doi.org/10.1007/3-540-28901-1}
  {\path{doi:10.1007/3-540-28901-1}}.

\bibitem{RW87}
P.~J. Ramadge and W.~M. Wonham.
\newblock Supervisory control of a class of discrete event processes.
\newblock {\em SIAM J. Control Optim.}, 25(1):206--230, 1987.
\newblock \href {https://doi.org/10.1137/0325013} {\path{doi:10.1137/0325013}}.

\bibitem{BBG07}
M.~H. ter Beek, A.~Bucchiarone, and S.~Gnesi.
\newblock {Web Service Composition Approaches: From Industrial Standards to
  Formal Methods}.
\newblock In {\em ICIW}. IEEE, 2007.
\newblock \href {https://doi.org/10.1109/ICIW.2007.71}
  {\path{doi:10.1109/ICIW.2007.71}}.

\bibitem{BCHK17}
M.~H. ter Beek, J.~Carmona, R.~Hennicker, and J.~Kleijn.
\newblock {Communication Requirements for Team Automata}.
\newblock In {\em COORDINATION}, volume 10319 of {\em LNCS}, pages 256--277.
  Springer, 2017.
\newblock \href {https://doi.org/10.1007/978-3-319-59746-1_14}
  {\path{doi:10.1007/978-3-319-59746-1_14}}.

\bibitem{BVW17}
M.~H. ter Beek, E.~P. de~Vink, and T.~A.~C. Willemse.
\newblock {Family-Based Model Checking with mCRL2}.
\newblock In {\em FASE}, volume 10202 of {\em LNCS}, pages 387--405. Springer,
  2017.
\newblock \href {https://doi.org/10.1007/978-3-662-54494-5_23}
  {\path{doi:10.1007/978-3-662-54494-5_23}}.

\bibitem{BFGM16}
M.~H. ter Beek, A.~Fantechi, S.~Gnesi, and F.~Mazzanti.
\newblock Modelling and analysing variability in product families: {M}odel
  checking of modal transition systems with variability constraints.
\newblock {\em J. Log. Algebr. Meth. Program.}, 85(2):287--315, 2016.
\newblock \href {https://doi.org/10.1016/j.jlamp.2015.11.006}
  {\path{doi:10.1016/j.jlamp.2015.11.006}}.

\bibitem{BRV16}
M.~H. ter Beek, M.~A. Reniers, and E.~P. de~Vink.
\newblock {Supervisory Controller Synthesis for Product Lines Using {CIF}~3}.
\newblock In {\em ISoLA}, volume 9952 of {\em LNCS}, pages 856--873. Springer,
  2016.
\newblock \href {https://doi.org/10.1007/978-3-319-47166-2_59}
  {\path{doi:10.1007/978-3-319-47166-2_59}}.

\bibitem{TBR12}
R.~J.~M. Theunissen, D.~A. van Beek, and J.~E. Rooda.
\newblock Improving evolvability of a patient communication control system
  using state-based supervisory control synthesis.
\newblock {\em Adv. Eng. Inform.}, 26(3):502--515, 2012.
\newblock \href {https://doi.org/10.1016/j.aei.2012.02.009}
  {\path{doi:10.1016/j.aei.2012.02.009}}.

\bibitem{vBFHHMvdMR14}
D.~A. van Beek, W.~J. Fokkink, D.~Hendriks, A.~Hofkamp, J.~Markovski, J.~M.
  van~de Mortel{-}Fronczak, and M.~A. Reniers.
\newblock {CIF 3: Model-Based Engineering of Supervisory Controllers}.
\newblock In {\em TACAS}, volume 8413 of {\em LNCS}, pages 575--580. Springer,
  2014.
\newblock \href {https://doi.org/10.1007/978-3-642-54862-8_48}
  {\path{doi:10.1007/978-3-642-54862-8_48}}.

\bibitem{survey}
Q.~Yi, X.~Liu, A.~Bouguettaya, and B.~Medjahed.
\newblock {Deploying and managing Web services: issues, solutions, and
  directions}.
\newblock {\em VLDB J.}, 17(3):735--572, 2008.
\newblock \href {https://doi.org/10.1007/s00778-006-0020-3}
  {\path{doi:10.1007/s00778-006-0020-3}}.

\end{thebibliography}

\iftoggle{APPENDIX}
{
	\appendix

\section{Proofs}\label{app:proofs}
We provide the proofs of Theorem~\ref{the:abstractorchestrationsynthesis} and Theorem~\ref{the:abstractchoreographysynthesis} only sketched in Section~\ref{sect:discussion}.  




\theabstractorchestrationsynthesis*

\theabstractchoreographysynthesis*



}
{}
\end{document}